\newsavebox{\ieeealgbox}
\newcommand\oprocendsymbol{\hbox{$\triangle$}}
\newcommand\oprocend{\relax\ifmmode\else\unskip\hfill\fi\oprocendsymbol}
\DeclareSymbolFont{bbold}{U}{bbold}{m}{n}
\DeclareSymbolFontAlphabet{\mathbbold}{bbold}
\newcommand{\real}{\mathbb{R}}
\DeclareMathOperator{\diag}{diag}
\renewcommand{\top}{\mathsf{T}} 
\newtheorem{theorem}{Theorem}
\newtheorem{lemma}[theorem]{Lemma}
\newtheorem*{lemma*}{Lemma}
\theoremstyle{definition}
\newtheorem{definition}[theorem]{Definition}
\newtheorem{assumption}{Assumption}
\newtheorem*{example*}{Example}
\newcommand{\suchthat}{\;\ifnum\currentgrouptype=16 \middle\fi|\;}
\newcommand{\scirc}{\raise1pt\hbox{$\,\scriptstyle\circ\,$}}
\definecolor{orcidlogocol}{HTML}{A6CE39}
\tikzset{
	orcidlogo/.pic={
		\fill[orcidlogocol] 
		svg{M256,128c0,70.7-57.3,128-128,128C57.3,256,0,198.7,0,128C0,57.3,57.3,0,128,0C198.7,0,256,57.3,256,128z};
		\fill[white] svg{M86.3,186.2H70.9V79.1h15.4v48.4V186.2z}
		svg{M108.9,79.1h41.6c39.6,0,57,28.3,57,53.6c0,27.5-21.5,53.6-56.8,53.6h-41.8V79.1z
			
			M124.3,172.4h24.5c34.9,0,42.9-26.5,42.9-39.7c0-21.5-13.7-39.7-43.7-39.7h-23.7V172.4z}
		svg{M88.7,56.8c0,5.5-4.5,10.1-10.1,10.1c-5.6,0-10.1-4.6-10.1-10.1c0-5.6,4.5-10.1,10.1-10.1C84.2,46.7,88.7,51.3,88.7,56.8z};
	}
}
\newcommand\orcidicon[1]{\href{https://orcid.org/#1}{\mbox{\scalerel*{
				\begin{tikzpicture}[yscale=-1,transform shape]
				\pic{orcidlogo};
				\end{tikzpicture}
			}{|}}}}
\DeclareMathOperator{\E}{E}
\DeclareMathOperator{\cov}{cov}
\DeclareMathOperator{\supp}{supp}
\DeclareMathOperator{\argmin}{argmin}
\def\multiset#1#2{\ensuremath{\left(\kern-.3em\left(\genfrac{}{}{0pt}{}{#1}{#2}\right)\kern-.3em\right)}}
\newcommand{\rev}[1]{#1}
\newcommand{\monpaths}{P_m}
\begin{document}

	\title{Topology Inference with Multivariate Cumulants: \\ The M\"obius Inference 
	Algorithm}
	
	\author{Kevin D. Smith, 
		Saber Jafarpour, 
		Ananthram Swami, and
		Francesco Bullo 
		\IEEEcompsocitemizethanks{
				\IEEEcompsocthanksitem This work was supported in part by the U.S.\ 
	Defense 
	Threat Reduction Agency under grant
				HDTRA1-19-1-0017.
				\IEEEcompsocthanksitem Kevin D. Smith, Saber Jafarpour, and Francesco
				Bullo are with the Center
				of Control, Dynamical Systems and Computation, UC Santa Barbara, CA
				93106-5070, USA. {\tt \{kevinsmith, saber, bullo\}@ucsb.edu}
				\IEEEcompsocthanksitem Ananthram Swami is with the Army Research 
				Laboratory.}}
	

	%
	%

	\markboth{}%
	{Smith\MakeLowercase{\textit{et al.}}}
	%



	\maketitle
	
	\begin{abstract} 
		Many tasks regarding the monitoring, management, and design of communication 
		networks rely on knowledge of the routing topology. However, the standard 
		approach to topology mapping---namely, active probing with traceroutes---relies 
		on cooperation from increasingly non-cooperative routers, leading to missing 
		information. Network tomography, which uses end-to-end measurements of 
		additive link metrics (like delays or log packet loss rates) across monitor 
		paths, is a possible remedy. Network tomography does not require that routers 
		cooperate with traceroute probes, and it has already been used to infer the 
		structure of multicast trees. This paper goes a step further. We provide a 
		tomographic method to infer the underlying routing topology of an arbitrary set 
		of monitor paths \rev{using} the joint distribution of end-to-end measurements, 
		\rev{without making any assumptions on routing behavior.} Our approach, called 
		the \textit{M\"obius Inference Algorithm} (MIA), uses cumulants of this 
		distribution to quantify high-order interactions among monitor paths, and it 
		applies M\"obius inversion to ``disentangle'' these interactions. \rev{In 
		addition to MIA, we provide a more practical variant called \textit{Sparse 
		M\"obius Inference}, which uses various sparsity heuristics to reduce the number 
		and order of cumulants required to be estimated. We show the viability of our approach using 
		synthetic case studies based on real-world ISP topologies.}
	\end{abstract}
	
	\begin{IEEEkeywords}
		Topology inference, network tomography, cumulants, high-order statistics.
	\end{IEEEkeywords}

	%
	\IEEEpeerreviewmaketitle
	
	\section{Introduction}
         
 	Many tasks regarding the monitoring, management, and design of communication 
 	networks benefit from the network operator's ability to determine the routing 
 	topology, i.e., the incidence between paths and links in the network. During 
 	small-scale network failures, for example, routes may automatically switch, and it is 
 	important that the network operator \rev{has} knowledge of the new routing matrix. In 
 	the 
 	case of large-scale topology failures, inference of the routing topology is a crucial 
 	prelude to determining both the surviving network topology and the available services 
 	that remain. Peer-to-peer file-sharing networks are another example: nodes may want 
 	to know the routing topology so that they can select routes that have minimal overlap 
 	with existing routes, so as to avoid congestion and improve performance. Additional 
 	applications to the inference of dark networks and adversarial networks is obvious. 
 	Furthermore, the problem of optimal monitor placement relies on some knowledge of the 
 	network topology, and inference of the routing matrix provides topological 
 	information that could be used to bootstrap new end-to-end measurements.

	\paragraph*{Literature Review}
        
	Two main approaches are available for topology inference in communication 
	networks: using \textit{traceroutes}, and using \textit{network tomography} 	
	\cite{XZ-CP:11}. Traceroutes are the simplest and most direct approach, but they rely 
	on intermediate routers to cooperate by responding to traceroute packets. This 
	cooperation is becoming increasingly uncommon \cite{MHG-KS:09}, leading to 
	inaccuracies in traceroute-based topology mapping \cite{ML-YH-BH:08}. Some authors 
	\rev{have} modified traceroute approaches to account for uncooperative routers 
	\cite{BY-RV-FC-DW:03, XJ-WPKY-SHGC-YW:06, BH-ST-SS-TLP-AS:15}, using partial 
	traceroute results to over-estimate the topology, then applying heuristics and side 
	information to merge nodes. These approaches perform well on test 
	cases, but a rigorous method of selection among viable topologies would still be 
	desirable.
	
	Another approach to topology inference has started to emerge from the literature on 
	network tomography. Network tomography is the problem of inferring additive link 
	metrics (like delays or log packet loss rates) from end-to-end measurements; a nice 
	review is provided in \cite{MC-AOHIII-RN-BY:02}. Unlike traceroute approaches, 
	network tomography does not rely on intermediate routers to cooperate with traces. 
	Instead, it measures \rev{some \textit{metric} like} delay or log packet loss rate 
	between hosts, and it solves a linear inverse problem to infer 
	\rev{the values of these} metrics on each link. While most tomography literature 
	assumes that the \rev{routing matrix} is known, some authors have used tomographic 
	approaches to infer \rev{the} routing topology in special cases. \rev{In general, 
	these approaches to are based on a collection of statistics called \textit{path 
	sharing metrics} (PSMs), which are defined for each pair of host-to-host paths. The 
	PSM for a pair of paths is the sum of metrics across all links that are shared by 
	the two paths. A topology is then selected that explains all of the PSMs. 
 
	The tomographic approach was first applied to the single-source and 
	multiple-receiver setting to infer multicast trees. One of the first papers to adopt 
	this idea is \cite{SR-SM:99}, which uses joint statistics of packet loss between 
	pairs of receivers as a PSM. By repeatedly identifying the pair with greatest path 
	sharing, joining that pair into a ``macro-node,'' and re-computing the statistics, 
	the authors iteratively build the multicast tree from the bottom up. A few years 
	later, \cite{NGD-JH-FLP-DT:02} generalized this idea from 
	packet losses to other PSMs, including correlations between packet 
	delays between receiver pairs; and \cite{MC-RC-RN-MG-RK-YT:02} 
	accounted for measurement noise by moving the problem to a maximum likelihood 
	framework. Somewhat more recently, \cite{JN-HX-ST-YRY:09} re-considered the problem 
	of constructing a multicast tree from PSMs and provided new rigorous 
	and more-efficient algorithms. All of these papers use PSMs 
	for pairs of source-receiver paths to reconstruct the tree.

	Later work has extended tomographic topology inference from beyond multicast trees to 
	more general multiple-source, multiple-receiver problems. In \cite{MGR-MJC-RDN:06}, 
	the authors merge multicast trees to infer the topology with multiple 
	sources, under some ``shortest-path'' assumptions on the routing behavior---again 
	using PSMs. \cite{GB-ND-ME-KM:18} provides more general necessary and sufficient 
	conditions for when network inference is possible based on PSMs. Both of these papers 
	essentially assume shortest-path routing, an assumption which 
	is not always valid, for example, due to load balancing in the TCP layer 
	\cite{MGR-MJC-RDN:06}. This assumption also cannot accommodate more complex probing 
	paths, such as the two-way paths that emerge when a monitoring 
	endpoint pings another node.\footnote{We would like to thank an anonymous reviewer 
	for pointing out this possibility.} 
	
	Recent papers have also applied tomography to problems with uncertain (yet not 
	completely unknown) 
	topologies. In \cite{LM-TH-AS-DT-KKL:16}, the typical linear inverse problem from 
	tomography is replaced with a Boolean linear inverse problem, allowing the authors to 
	identify failed links from end-to-end data. Similarly, \cite{TH-AG-LM-KKL-AS-DT:17} 
	studies the problem of making network tomography robust to dynamics in the network 
	topology. The last two papers also deal with the problem of measurement design, i.e. 
	constructing the routing matrix to ensure identifiability. Neither of these two last 
	papers is concerned with inferring the routing matrix; however, they do represent 
	approaches outside of the PSM paradigm to gleaning topological information from 
	end-to-end data in a tomography setting.
	
	Another recent paper \cite{AAS-RKS-DT:19} introduced a new method for topology 
	inference, called ``OCCAM''. Like most of the other methods we have referenced, OCCAM 
	is based on PSMs; however, instead of algorithmically constructing the unique 
	topology that is consistent with the PSMs and routing assumptions, OCCAM solves an 
	optimization problem with an Occam's razor heuristic. The heuristic is not guaranteed 
	to find the correct network structure (unless the underlying network is a tree), but 
	the authors demonstrate good empirical performance. To our knowledge, OCCAM is the 
	only approach to truly general topology inference via network tomography, i.e., an 
	approach that does not require any assumptions on routing behavior (beyond 
	the fundamental assumption of stable paths between source-receiver pairs). }
	
	\paragraph*{Contributions}
	
	This paper provides \rev{another such} approach {to topology inference. We extend the 
	use of second-order \rev{PMSs} into higher-order statistics (i.e., statistics 
	involving \rev{more than two} paths), allowing us to relax any underlying assumptions 
	about the underlying topology.} Our method uses cumulants to quantify high-order 
	interactions between \rev{multiple} paths, then applies M\"obius inversion to 
	``disentangle'' these interactions, resulting in an encoding of the routing topology. 
	Our general approach, which we call the \textit{M\"obius Inference Algorithm} (MIA), 
	is a non-parametric method of reconstructing the routing matrix from multivariate 
	cumulants of end-to-end measurements, under mild assumptions. It does not require any 
	prior knowledge of the topology or distributions of link metrics, and works under 
	general routing topologies.
	
	The paper has three main contributions. First, we provide a novel application of 
	statistics and combinatorics to network tomography. We show that multivariate 
	cumulants of end-to-end measurements reveal interactions between the monitor paths 
	(in the form of overlapping links), and we demonstrate how M\"obius inversion can be 
	used to infer link-path incidence from these cumulants. Based 
	on these observations, we construct the \textit{M\"obius Inference 
	Algorithm} (MIA), which recovers a provably correct routing matrix from these 
	cumulants.
	
	Second, we \rev{adapt MIA to the more practical scenario in which} a dataset of 
	end-to-end measurements is available, instead of exact cumulants. \rev{This 
	``empirical'' variant of the routing inference algorithm applies} a hypothesis test 
	to every candidate column of the routing matrix, deciding based on the data whether 
	or not the column is present. This hypothesis testing is based on a	novel statistic, 
	and it works within any framework for location testing the mean of a 
	distribution. 
	
	\rev{
	Third, we create a more practical procedure, called \textit{Sparse M\"obius 
	Inference}, which modifies MIA using several sparsity heuristics. This procedure 
	minimizes the number of cumulants that need to be evaluated, restricts 
	cumulant orders to some user-specified limit, and reduces the time complexity of 
	the algorithm. The procedure also makes the inference more robust against measurement 
	noise, by replacing the exact M\"obius inversion formula with a lasso regression 
	problem.
	
	Finally, we use many numerical case studies, based on real-world Rocketfuel networks, 
	to evaluate the performance of Sparse M\"obius Inference. We study how the 
	performance depends on the underlying network, the number of monitor paths, the 
	sample size, and other parameters.}
	
%
	
	\paragraph*{Organization}
	
	\rev{This paper takes a didactic approach to introducing MIA and its sparse variant. 
	Section \ref{sect:prelims}} formally describes the communication network model and 
	key variables, \rev{provides a brief introduction to cumulants and $k$-statistics, 
	and} discusses 
	our three mild assumptions. \rev{Section \ref{sect:theory} considers the easiest 
	setting for 
	topology inference, wherein precise values for all of the necessary cumulants are 
	available without noise, so that we can focus on the core statistical and 
	combinatorial insights behind MIA. Section \ref{sect:data} then replaces the 
	precise cumulant values with noisy measurements. Then Section \ref{sect:sparse} 
	replaces MIA 
	altogether with the more practical Sparse M\"obius Inference procedure, which allows 
	the user to cap the order of cumulants they are willing to estimate. Finally, Section 
	\ref{sect:results} provides an overview of our numerical results and evaluation. 
	\textit{The	full set of numerical results, as well as all proofs of theoretical 
	results, are contained in appendices in the supplementary file.}}

	\section{Modeling and Preliminaries}
	\label{sect:prelims}

	\subsection{Model}
	\label{sect:model}
	
	We consider a network on a (possibly directed) graph $G$ \rev{with a set of links} 
	$\rev{L} = \{\ell_1, \ell_2, \dots, 
	\ell_m\}$. Every link is associated with an additive 
	link metric, like a time delay or log packet loss rate. We will 
	refer to these metrics simply as ``delays,'' although other metrics are possible. 
	
	For each link, there is a \textit{link delay variable} $U_\ell$, which is a random 
	variable representing the amount of time that a unit of traffic requires to traverse 
	the link. Link delays are not measured directly. Instead, we will infer properties of 
	these variables from cumulative delays across certain \rev{simple paths in $G$, 
	called \textit{monitor paths}. Let $\rev{\monpaths}$ be a set of $n$ monitor paths.} 
	Each $p \in \rev{\monpaths}$ is associated with a \textit{path delay variable}
	\begin{equation}
		V_p = \sum_{\substack{\ell \in \rev{L} ~\text{s.t.} \\p ~\text{traverses}~ 
		\ell}} U_\ell, 
		\qquad \forall p \in \rev{\monpaths}
		\label{eq:sum}
	\end{equation}
	which is the total delay experienced by a unit of traffic along the path $p$. If we 
	define a \rev{random} vector of link variables $\mathbf U = \begin{pmatrix} 
	U_{\ell_1} & U_{\ell_2} & \cdots & U_{\ell_m} \end{pmatrix}^\top$ and a \rev{random} 
	vector of path variables $\mathbf V = \begin{pmatrix} V_{p_1} & V_{p_2} & \cdots & 
	V_{p_n} \end{pmatrix}$, then we can write \eqref{eq:sum} in the form
	\begin{equation}
		\mathbf V = \mathbf R \mathbf U
		\label{eq:v}
	\end{equation}
	using a \textit{routing matrix} $\mathbf R \in \{0, 1\}^{n \times m}$, where $r_{p 
	\ell} = 1$ if and only if $p$ traverses the link $\ell$. \rev{We stress that we do 
	not make any assumptions about the nature of these monitor paths or the underlying 
	routing behavior. They may be one-way paths between monitoring endpoints, two-way 
	paths from a ping to a node and back, or both. The paths do not have to reflect 
	shortest-path routing.}
	
	We suppose that an experimenter is capable of measuring path delays $V_p(t)$ for each 
	monitor path $p$, at many sample times $t$. The experimenter has no 
	prior knowledge about the link variables $U_\ell$ and does not know the routing 
	matrix $\mathbf R$. Importantly, we make the simplifying assumption in this paper 
	that link delays are spatially and temporally independent, i.e., $U_\ell(t)$ 
	and $U_{\ell'}(t')$ are statistically independent unless $\ell = \ell'$ and $t = t'$. 
	This assumption is \rev{fundamental} in the network tomography literature 
	\cite{XZ-CP:11, MC-AOHIII-RN-BY:02, NGD-JH-FLP-DT:02, MC-RC-RN-MG-RK-YT:02, 
	JN-HX-ST-YRY:09, MGR-MJC-RDN:06}.

	\subsection{Preliminaries and Notation}
	\label{sect:notation}
	
	\paragraph*{General Notation} \; { Let $\mathbb Z_{\ge 0}$ and 
	$\mathbb Z_{> 0}$ denote the sets of non-negative and positive
          integers, respectively.}  Given a set $S$ and an integer $\rev{i} \le
        |S|$, let the binomial $ \binom{S}{\rev{i}} = \left\{S' \subseteq S :
        |S'| = \rev{i} \right\}$ denote the collection of all $\rev{i}$-element subsets
        of $S$. { Given $\rev{i}, n \in \mathbb Z_{\ge 0}$, let
          $\multiset{n}{\rev{i}}$ denote the number of $\rev{i}$-element multisets chosen
          from $n$ distinct elements.} Given two { ordered and}
        countable sets $X \subseteq Y$, define the \textit{characteristic
          vector} $\rev{\chi(X, Y)} \in \{0, 1\}^{|Y|}$ \emph{of $X$ in $Y$} by
        $\rev{\chi_i(X, Y)} = 1$ if and only if $y_i \in X$. \rev{Given any function $f: 
        X \to \real$, the \textit{support} of the function $\supp(f)$ is the subset of 
        elements $x \in X$ such that $f(x) \ne 0$.}

	\paragraph*{Multi-Indices} \rev{A \textit{multiset} is a set that allows for repeated 
	elements. A multiset can be represented by a \textit{multi-index}, which is a 
	function $\alpha: S \to \mathbb Z_{\ge 0}$ that maps each element of
	$S$ to its multiplicity in the multiset}. The \textit{support} of a 
	multi-index is the set of elements 
	with positive multiplicity, i.e., $\supp(\alpha) = \left\{s \in S : \alpha(s) \ge 1
	\right\}$. The \textit{size} of a multi-index is its total multiplicity: $|\alpha| = 
	\sum_{s \in S} \alpha(s)$. If $S$ is an ordered set with $n$ elements (e.g., if $S$ 
	consists of elements of a vector), then multi-indices on $S$ are naturally 
	represented as vectors $\alpha \in \mathbb Z_{\ge 0}^n$; in this case, 
	we will use multi-indices on $S$ and vectors in $\mathbb Z_{\ge 0}^n$ 
	interchangeably. \rev{For example, for  $S = \{a, b, c, d\}$, the multi-index 
	corresponding to the multiset $\{a, b, b, d, d, d\}$ can be represented by the vector 
	$\begin{pmatrix}1 & 2 & 0 & 3 \end{pmatrix}^\top$, using an alphabetic ordering of 
	$S$.}

	
	\paragraph*{Link Sets}
	Throughout this paper, we  make use of two maps from sets of monitor paths to 
	sets of links. \rev{Recall that} $\mathbf R \in \{0, 1\}^{n \times m}$ is the routing 
	matrix. For each $\rev{P \subseteq \monpaths}$, we define the \textit{common link 
	set} \rev{$C: 2^{\monpaths} \to 2^{L}$} by
	\begin{equation}
	\rev{C(P) = \{ \ell \in L : r_{p \ell} = 1, \; \forall p \in P\}}
	\label{eq:L}
	\end{equation}
	and the \textit{exact link set} \rev{$E: 2^{\monpaths} \to 2^{L}$} by
	\begin{equation} 
	\rev{E(P) = \{ \ell \in L: r_{p\ell} = 1, \; \forall p \in P ~\text{and}~ r_{p\ell} = 
	0, \; \forall p \notin P\}}
	\label{eq:M}
	\end{equation}
	The common link set \rev{$C(P)$} contains all links that are utilized
	by every path in $P$. The exact link set is more strict: \rev{$E(P)$}
	consists of links that are utilized by every path in $P$
	\textit{and} that are not utilized by any path outside of $P$.
	Neither of these maps are known \textit{a priori}. It is worth noting that the 
	exact link set contains all of the information of the routing matrix, since 
	\rev{$E(P)$} is nonempty if and only if the characteristic vector \rev{$\chi(P, 
	\monpaths)$} is a column of $\mathbf R$.

	\rev{As an example, consider the following routing matrix encoding 8 monitor paths 
	that utilize 8 links:
	
	\begingroup
	\setlength\arraycolsep{3pt}
	\small
	\[
		\mathbf R = \begin{pmatrix}
			1     & 0     & 0     & 0     & 0     & 0     &0     &0 \\
			0     & 0     & 0     & 0     & 1     & 1     &0     &0 \\
			0     & 0     & 0     & 1     & 0     & 1     &0     &1 \\
			0     & 0     & 1     & 1     & 0     & 1     &1     &0 \\
			0     & 1     & 1     & 1     & 0     & 0     &0     &0 \\ 
			0     & 1     & 0     & 0     & 1     & 0     &1     &0 \\
			0     & 1     & 1     & 0     & 1     & 0     &0     &1 \\
			0     & 1     & 0     & 0     & 0     & 1     &1     &0
		\end{pmatrix}
	\]
	\normalsize
	\endgroup
	In this example, $C(\{p_1\}) = E(\{p_1\}) = \{\ell_1\}$, since column~1 is the only 
	column with a nonzero first entry, and all other entries in the column are zero. 
	Furthermore, $C(\{p_3, p_7\}) = E(\{p_3, p_7\}) = \{\ell_8\}$, since column 8 is the 
	only column with a nonzero third and seventh entry, and all other entries are zero. 
	But $C$ and $E$ are not always equal: $C(\{p_5, p_6\}) = \{p_2\}$, but column 2 
	contains other nonzero entries as well, so $E(\{p_5, p_6\}) = \emptyset$. Multiple 
	common links are also possible, e.g., $C(\{p_6, p_7\}) = \{\ell_2, \ell_5\}$.
}

	\subsection{Cumulants and $k$-Statistics}
	
	Cumulants are a class of statistical moments, which extend the
        familiar notions of mean and covariance to higher orders. A good
        introduction is provided in \cite{PM-JK:09}; \rev{we provide a
          quick background here. Given a random variable $X$, define the
          \textit{cumulant generating function}
	\[
		K(t) = \log \E[e^{t X}] = \kappa_1 t + \frac{\kappa_2}{2!} t^2 + 
		\frac{\kappa_3}{3!} t^3 + \cdots
	\] 
	which admits a Taylor expansion for some sequence of coefficients $\kappa_1, 
	\kappa_2, \kappa_3, \dots$. These 
	coefficients are defined as the \textit{cumulants} of the random variable $X$. The 
	first three cumulants are identical to central moments: $\kappa_1$ is the mean of 
	$X$, $\kappa_2$ is the variance, and $\kappa_3 = \E[(X - \E[X])^3]$. For orders 
	four and higher, the relationship between cumulants and central moments is
	increasingly complicated. Table~\ref{table:cumulants} provides some examples of 
	common distributions whose cumulants have closed-form expressions. Given a random 
	variable $X$ and an integer $i \in \mathbb Z_{> 0}$, we let 
	$\kappa_i(X)$ denote the $i$th cumulant of $X$.
	
	\begin{table}
	\rev{
		\centering
		\begin{tabular}{lll}
			Distribution & Parameters & Cumulants \\
			\hline
			Normal & $\mu, \; \sigma^2$ & $\kappa_1 = \mu, \; \kappa_2 = \sigma^2, 
			\; \kappa_i = 0 ~\text{for}~ i \ge 3$ \\
			Exponential & $\lambda$ & $\kappa_i = \lambda^i (i - 1)! ~\text{for}~ i \ge 
			1$ \\
			Gamma & $\alpha, \; \beta$ & $\alpha \beta^{-i} (i - 1)! ~\text{for}~ i \ge 1$
		\end{tabular}
		\caption{Cumulants of some common univariate distributions.}
		\label{table:cumulants}
	}
	\end{table}

	Multivariate cumulants are an extension of cumulants to joint distributions. 
	Given some jointly-distributed random variables $X_1, X_2, \dots, X_n$, the cumulant 
	generating function is
	\[
		K(\mathbf t) = \log \E[e^{t_1 X_1 + \cdots + t_n X_n}] = \sum_{\alpha} 
		\frac{\kappa_\alpha}{|\alpha|!} \mathbf t^\alpha
	\]
	where the sum in the Taylor expansion occurs over all multi-indices $\alpha$ on the 
	set of integers $\{1, 2, \dots, n\}$, and $\mathbf t^\alpha$ denotes the product 
	$t_1^{\alpha(1)} t_2^{\alpha(2)} \cdots t_n^{\alpha(n)}$. Collecting $X_1, X_2, 
	\dots, X_n$ into the random vector $\mathbf X = \begin{pmatrix} X_1 & X_2 & \cdots & 
	X_n \end{pmatrix}^\top$, we use either the compact notation $\kappa_\alpha(\mathbf 
	X)$ or expanded notation $\kappa_\alpha(X_1, X_2, \dots, X_n)$ to represent the 
	multivariate cumulant of the joint distribution that corresponds to the 
	multi-index $\alpha$. If $\alpha$ is the 
	multi-index of all ones, we  drop the subscript and use the shorthand notation 
	$\kappa(X_1, X_2, \dots, X_n)$. We  also refer to the \textit{order} of a 
	cumulant as the size $|\alpha|$ of its multi-index.
	
	First-order multivariate 
	cumulants are means: if $\alpha$ has all zero multiplicites 
	except $\alpha(i) = 1$, then $\kappa_\alpha(\mathbf X) = \E[X_i]$. Second-order 
	multivariate cumulants are covariances: if $\alpha$ has all zero multiplicities 
	except $\alpha(i) = \alpha(j) = 1$, then $\kappa_\alpha(\mathbf X) = \cov(X_i, X_j)$. 
	If instead $\alpha(i) = 2$ with all other multiplicities zero, then 
	$\kappa_\alpha(\mathbf X) = \text{Var}(X_i)$. We will also make use of two general 
	properties of multivariate cumulants:
	\begin{enumerate}
		\item \textit{Multilinearity.} If $Y$ is a random variable independent from $X_1, 
		X_2, \dots, X_n$, then 
		\begin{align*}
			& \kappa_\alpha(X_1, \dots, X_i + Y, \dots, X_n) =\\
			& \qquad \kappa_\alpha(X_1, \dots, X_i, \dots, X_n) + \kappa_\alpha(X_1, 
			\dots, 
			Y, \dots, X_n)
		\end{align*}
		for any index $i$ and multi-index $\alpha$.
		\item \textit{Independence.} If any pair $X_i, X_j$ of the random variables $X_1, 
		X_2, \dots, X_n$ are independent, and $\alpha(i)$ and $\alpha(j)$ are both 
		non-zero, then $\kappa_\alpha(\mathbf X) = 0$.
	\end{enumerate}
}
	\rev{Cumulants can be computed analytically from joint distributions using the 
	generating function, but for unknown distributions, they must be estimated from 
	samples. Given an i.i.d. sample $\mathbf x_1, \mathbf x_2, \dots, \mathbf x_N \in 
	\real^n$ from $\mathbf X$, the \textit{$k$-statistic} $k_\alpha(\mathbf x_1, \mathbf 
	x_2, \dots, \mathbf x_n)$ is defined as the minimum-variance unbiased estimator of 
	$\kappa_\alpha(\mathbf X)$. The first and second-order $k$-statistics are 
	sample means and sample covariances, but higher-order $k$-statistics quickly 
	become more complex. We refer the reader to \cite{EDN-GG-DS:09} and \cite{KDS:2020b} 
	for a discussion of how general $k$-statistics are derived. For the purpose of this 
	paper, it suffices to note that} software packages are available to compute 
	\rev{$k$-statistics from samples}, both in R \cite{EDN-GG:19} and \rev{our own Python 
	library} \cite{KDS:20}.    
        
	
	\subsection{Assumptions}
	\label{sect:ass}
	
	At various points throughout the paper, we will invoke three closely-related 
	assumptions regarding the routing matrix and link delay cumulants. The first 
	assumption requires that $\mathbf R$ has no repeated columns:
	\begin{assumption}[Distinct Links] \label{ass:1}
		No two links are traversed by precisely the same set of paths in 
		\rev{$\monpaths$}; i.e., no two columns of $\mathbf R$ are identical; i.e., 
		$|\rev{E(P)}| \in \{0, 1\}$ for all \rev{$P \subseteq \monpaths$}. 
	\end{assumption}
	\noindent
	This assumption is common in the network tomography literature. If $\ell, 
	\ell' \in \rev{L}$ are used by precisely the same set of monitor paths, then the 
	link delays $U_\ell, U_{\ell'}$ will only show up in path delays through their sum 
	$U_{\ell} + U_{\ell'}$. Due to this linear dependence, complete network tomography is 
	impossible when Assumption \ref{ass:1} is violated, since $\mathbf R$ will be 
	rank deficient.
	
	The second assumption requires that link delays have nonzero cumulants:  
	\begin{assumption}[Nonzero Cumulants] \label{ass:2}
		For all $\ell \in \rev{L}$, and for all $\rev{i} = 2, 
		3, \dots, n$, \rev{the delay cumulant is nonzero:} $\kappa_{\rev i}(U_\ell) 
		\ne 0$. 
	\end{assumption}
	\noindent 
	\rev{For most practical purposes, one can think of Assumption \ref{ass:2} as meaning 
	that no link delay distribution is normally distributed. Non-normality is
	a necessary condition for the assumption to hold, since the normal distribution has 
	zero-valued cumulants for orders 3 and higher. Non-normality is not technically a 
	sufficient condition, since it is theoretically possible for a distribution to have 
	zero cumulants at some orders, but these cases are not common. In fact, the normal 
	distribution is the only distribution with a finite number of nonzero cumulants 
	\cite{PM-JK:09}. If link delays are known to be non-normally distributed, we consider 
	this to be a weak assumption.}  
        
	Finally, the third assumption requires that certain \textit{sums} of link delays have 
	nonzero cumulants:
	\begin{assumption}[Nonzero Common Cumulants] \label{ass:3}
		For all \rev{$P \subseteq \monpaths$}, and for all $\rev{i} = 2, 3, \dots, n$, if 
		$\rev{C(P)}$ is nonempty, then $\sum_{\ell \in \rev{C}(P)} \kappa_{\rev 
		i}(U_\ell) \ne 0$.  
	\end{assumption} 
	\noindent
	In other words, if all paths in $\rev{P \subseteq \monpaths}$ share a collection of 
	common links $\rev{C(P)}$, the delay cumulants on these common links should not 
	cancel out \rev{by summing to zero}. This is also a weak assumption, since 
	such a cancellation is very unlikely. \rev{In fact, many families of distributions 
	supported on $\real_{> 0}$ (including exponential and gamma distributions) have 
	strictly positive cumulants at all orders, in which case Assumption \ref{ass:3} is 
	satisfied automatically.} 
		
	\section{\rev{Theoretical Foundations}}
	\label{sect:theory} 
	
	We now proceed with our main theoretical contribution: a simple algorithm to 
	infer the routing matrix from multivariate cumulants of path latencies. \rev{The 
	purpose of this section is to state the underlying theoretical principles of MIA, so 
	we will temporarily assume that exact values for multivariate cumulants of the path 
	delay vector $\mathbf V$ are available. In reality, the experimenter seldom knows 
	these exact values and must estimate them via $k$-statistics instead, but this 
	requires some extra statistical treatment that we defer to Sections \ref{sect:data} 
	and \ref{sect:sparse}. For now, we will assume exact cumulant values to focus on the 
	discrete mathematics that underpin MIA. }
		
	
\rev{
	MIA works by identifying which exact link sets $E(P)$ are nonempty, since these 
	correspond precisely to columns of $\mathbf R$ (via the characteristic vector of 
	$P$). The sizes of the exact link sets are not directly observable, but they can be 
	inferred from the sizes of the common link sets. From \eqref{eq:L} and \eqref{eq:M}, 
	we can see that exact and common link sets are related by
	\[
		E(P) = C(P) \setminus \bigcup_{p' \notin P} C(P \cup \{p'\}).
	\]
	We can count the size of the union using the inclusion-exclusion principle:
	\begin{equation}\label{this:inclusion0-exclusion}
		\left| \bigcup_{p' \notin P} C(P \cup \{p'\}) \right| = \sum_{Q \supset P} 
		(-1)^{|Q| - |P| + 1} |C(Q)|.
	\end{equation}        
	Since $C(P \cup \{p'\}) \subseteq C(P)$ for all $p'$, we can use
        the inclusion-exclusion formula~\eqref{this:inclusion0-exclusion}
        to find the size of the exact link set as a function of the sizes
        of the common link sets:
	\begin{equation}
		|E(P)| = \sum_{Q \supseteq P} (-1)^{|Q| - |P|} |C(Q)|
		\label{eq:main}
	\end{equation}
	If we could somehow evaluate the number of common links shared by any set 
	of monitor paths, we could use the inclusion-exclusion principle to compute any 
	$|E(P)|$, from which we could reconstruct the routing matrix. 
	
	Unfortunately, counting the number of common links is typically infeasible in a 
	tomography setting. But the relationship in \eqref{eq:main} actually holds 
	for \textit{any} additive measure of link sets, not just cardinality, and some 
	additive measures can be inferred directly from end-to-end path data. For example, if 
	``$|C(Q)|$'' represents the sum of delay variances $\text{Var}(U_\ell)$ for each link 
	in $C(Q)$, then \eqref{eq:main} yields the sum of delay variances across links in 
	$E(P)$, which is nonzero if and only if $E(P)$ is nonempty. This sum of 
	delay variances across common links can be inferred from path delay data---at least 
	for pairs of monitor paths $p, p'$, the covariance $\text{cov}(V_p, V_{p'})$ is equal 
	to the sum of delay variances for each shared link in $C(\{p, p'\})$. For larger path 
	sets, we require higher-order statistics---like multivariate cumulants---to measure 
	``$|C(P)|$''.
	
	Having conveyed some of the core ideas behind MIA, we are ready to present the 
	algorithm itself and examine it with more theoretical rigor.} The algorithm occurs in 
	three stages:
	\begin{enumerate}
		\item
		\textit{Estimation}. Estimate a vector of multivariate cumulants of path 
		latencies. This vector contains information about the links that are common to 
		any given collection of paths. (The label ``estimation'' is a misnomer in the 
		context of \rev{this section}, wherein cumulants are known precisely, but it will 
		make more sense when we consider the ``data-driven'' version of the algorithm.)
		\item
		\textit{Inversion}. Apply a M\"obius inversion transformation to this vector of 
		estimates. The vector resulting from this transformation contains the routing 
		matrix, under a simple encoding. The transformation is linear, so this step can 
		be viewed as a matrix-vector multiplication.
		\item
		\textit{Reconstruction}. Decode the transformed vector, thereby reconstructing 
		the routing matrix.
	\end{enumerate} 
	
	\begin{algorithm}
		\caption{M\"obius Inference Algorithm \rev{(MIA)}}
		\label{alg:mia-dist}
		\begin{algorithmic}[1]
			\REQUIRE Joint distribution of path delays $\mathbf V$ 
			\ENSURE Routing matrix $\hat{\mathbf R}$
			\STATE \COMMENT{Estimation stage:}
			\label{line:estimation}
			\STATE Initialize undefined function $f_n: 2^{\rev{\monpaths}} \to \mathbb R$
			\FOR{$\rev{P \subseteq \monpaths}$} 
			\STATE Define \rev{$\alpha$} as any multi-index on $\rev{\monpaths}$ such 
			that $\supp(\alpha) = P$ and $|\alpha| = n$ \label{ln:midx}
			\STATE $f_n(P) \leftarrow \kappa_{\rev{\alpha}}(\mathbf V)$ \label{ln:deff}
			\ENDFOR
			\STATE \COMMENT{Inversion stage:}
			\label{line:inversion}
			\STATE Initialize undefined function $g_n: 2^{\rev{\monpaths}} \to \mathbb R$
			\FOR{$\rev{P \subseteq \monpaths}$}
			\STATE $g_n(P) \leftarrow \sum_{Q \supseteq P} (-1)^{|Q| - |P|} f_n(Q)$
			\ENDFOR
			\STATE \COMMENT{Reconstruction stage:}
			\label{line:recon}
			\STATE Initialize empty matrix $\hat {\mathbf R} \in \mathbb R^{n \times 0}$
			\FOR{\rev{$P \subseteq \monpaths$}}
			\rev{\IF{$g_n(P) \ne 0$} \label{ln:gnnz}
			\STATE $\hat {\mathbf R} \leftarrow \begin{pmatrix} \hat R & \rev{\chi(P, 
			\monpaths)} \end{pmatrix}$
			\ENDIF}
			\ENDFOR
			\RETURN $\hat {\mathbf R}$
		\end{algorithmic}
	\end{algorithm}

	\begin{theorem}[Analysis of \rev{MIA}] \label{thm:mia-dist}
		Consider the application of Algorithm \ref{alg:mia-dist} to a joint distribution 
		of path delays $\mathbf V = \begin{pmatrix} V_{p_1} & V_{p_2} & \cdots & V_{p_n} 
		\end{pmatrix}^\top$. Let $\mathbf R \in \{0, 1\}^{n \times m}$ be the true 
		underlying routing matrix, and let $\mathbf U = \begin{pmatrix} U_{\ell_1} & 
		U_{\ell_2} & \cdots & U_{\ell_m} \end{pmatrix}^\top$ be the underlying link 
		delays, so that $\mathbf V = \mathbf R \mathbf U$. The following are true:
		\begin{enumerate} 
			\item \label{item:mia-dist-terminates}
			The algorithm terminates and returns a matrix $\hat{\mathbf R} \in \{0, 
			1\}^{n \times \hat m}$ for some $\hat m \in \mathbb Z_{\ge 0}$, in $O(2^n)$ 
			time.
			\item \label{item:mia-dist-estimation}
			By line \ref{line:inversion}, the map $f_n: 2^{\rev{\monpaths}} \to \mathbb 
			R$ satisfies the following property:
			\begin{equation} \label{eq:f}
				f_n(P) = \sum_{\ell \in \rev{C}(P)} \kappa_n(U_\ell), \qquad \forall 
				\rev{P \subseteq \monpaths}
			\end{equation} 
			\item \label{item:mia-dist-inversion}
			By line \ref{line:recon}, the map $g_n: 2^{\rev{\monpaths}} \to \mathbb R$ 
			satisfies the following property:
			\begin{equation} \label{eq:g}
				g_n(P) = \sum_{\ell \in \rev{E}(P)} \kappa_n(U_\ell), \qquad \forall 
				\rev{P \subseteq \monpaths}
			\end{equation}
			\item \label{item:mia-dist-recon}
			Every column of $\hat{\mathbf R}$ is also a column of $\mathbf R$. 
			Furthermore, \rev{under} Assumptions \ref{ass:1} and \ref{ass:2}, $\mathbf R$ 
			and $\hat{\mathbf R}$ are equivalent (up to a permutation of columns).  
		\end{enumerate} 
	\end{theorem}

	Statement \ref{item:mia-dist-terminates} is obvious from inspection of the algorithm, 
	so we will focus on proving the remaining three statements, which fall neatly into 
	the three stages (estimation, inversion, and reconstruction) of the algorithm. In the 
	following subsections, we will analyze each of these three stages. 

	\subsection{Estimation Stage}
	\label{sect:estim}
	
	The purpose of the estimation stage is to collect a vector of high-order statistics 
	of path delays. These statistics are carefully chosen so that they contain 
	information about the routing topology. The title of ``estimation'' for this stage 
	will be more appropriate in the next subsection, when we must estimate these 
	statistics from data (rather than compute them analytically from a known 
	distribution). 
	
	In the estimation stage, we gather a vector of multivariate path delay 
	cumulants for every path set $\rev{P \subseteq \monpaths}$. The multivariate 
	cumulants that we select for each path set are based on representative 
	multi-indices:
	
	\begin{definition}[Representative Multi-Indices]
		Let $\rev{P \subseteq \monpaths}$, and let $\rev{i} \ge |P|$ be an integer. An 
		$\rev{i}$th-order \textit{representative multi-index} of $P$ is any
		multi-index $\alpha$ on $\rev{\monpaths}$ such that $\supp(\alpha) 
		= P$ and $|\alpha| = i$. We use the notation $\rev{A}_{i, P}$ to denote the 
		set of all $i$th-order representative multi-indices of $P$.
	\end{definition}
	
	\noindent
	We will now collect a vector of path delay cumulants, with one entry corresponding to 
	each set of monitor paths in $2^{\rev{\monpaths}}$:
	
	\begin{definition}[Common Cumulant] \label{def:f}
		Let \rev{$i$} be a positive integer. For each $\rev{P \subseteq \monpaths}$, let 
		$\alpha$ be any \rev{$i$}th-order representative multi-index of $P$. The 
		\rev{$i$}th-order \textit{common cumulant} is the \rev{map} $f_{\rev{i}}: 
		2^{\rev{\monpaths}} \to \mathbb R$ with entries
		\begin{equation}
		f_{\rev i}(P) = \kappa_{\alpha}(\mathbf V), \qquad \forall \rev{P \subseteq 
		\monpaths}
		\label{eq:f-def}
		\end{equation}
		
	\end{definition}

	\noindent
	Careful readers will also note that we refer to ``the'' common cumulant, rather than 
	``a'' common cumulant, which would seem more appropriate, given the many choices of 
	representative multi-indices. But the value of the common cumulant 
	is independent of the particular choice of representative multi-index---regardless of 
	which representative 
	multi-index we choose, it is always the sum of  univariate cumulants across links 
	that are traversed by every path in $P$. Broadly speaking, the \rev{value} of 
	$\rev{f_i(P)}$ contains information about which links are common to every path in 
	$P$.  
	
	\begin{lemma}[Properties of the Estimation Stage] \label{lem:mia-dist-estimation}
		The following are true:
		\begin{enumerate}
			\item \label{item:rep-midx-count}
			Let $\rev{P \subseteq \monpaths}$. If $\rev{i} \ge |P|$, there are 
			$\binom{\rev{i} - 1}{|P| - 1}$ \rev{$i$th-order} representative multi-indices 
			of $P$.
			\item \label{item:f}
			For all $\rev{i \in \mathbb Z_{> 0}}$, the common cumulant $f_{\rev{i}}: 
			2^{\rev{\monpaths}} \to \mathbb R$ satisfies \eqref{eq:f}.
			\item \label{item:mia-dist-estimation-proof}
			Statement \ref{item:mia-dist-estimation} of Theorem \ref{thm:mia-dist} is 
			true, i.e., Algorithm \ref{alg:mia-dist} correctly computes the common 
			cumulant vector for order $\rev{i} = n$.
		\end{enumerate} 
	\end{lemma}

	\subsection{Inversion Stage}
	\label{sect:inversion}
	
	In the inversion stage, we extract topological information from the vector of common 
	cumulants by applying an invertible linear transformation. Lemma 
	\ref{lem:mia-dist-estimation} \ref{item:f} shows that common cumulants are sums over 
	common link sets. But it is clear from \eqref{eq:L} and \eqref{eq:M} that common link 
	sets can be written as unions of exact link sets, which more directly provide 
	information about the routing matrix. Accordingly, common cumulants can be written as 
	sums over exact link sets, using \textit{exact cumulants}:
	
	\begin{definition}[Exact Cumulant] \label{def:g}
		\rev{For each positive integer $i$, we} define the \rev{$i$}th-order 
		\textit{exact cumulant} $g_{\rev{i}}: 2^{\rev{\monpaths}} \to \mathbb R$ by 
		\eqref{eq:g}, 
		\rev{replacing $n$ with $\rev{i}$}.
	\end{definition}
	
	\noindent
	In the following lemma, we formalize the relationship of common cumulants as sums of 
	exact cumulants. We then apply M\"obius inversion to this sum:
	
	\begin{lemma}[Properties of the Inversion Stage] \label{lem:mia-dist-inversion}
		Let $f_{\rev{i}}$ be the common cumulant vector, and let $g_{\rev{i}}: 
		2^{\rev{\monpaths}} \to \mathbb R$. The following three statements are equivalent:
		\begin{enumerate}
			\item \label{item:mia-dist-inversion-exact}
			$g_{\rev{i}}$ is the exact cumulant vector.
			\item \label{item:mia-dist-inversion-pre}
			$f_{\rev{i}}$ and $g_{\rev{i}}$ satisfy
			\begin{equation} \label{eq:pre-inversion}
			f_{\rev{i}}(P) = \sum_{Q \supseteq P} g_{\rev{i}}(Q), \qquad \forall \rev{P 
			\subseteq \monpaths}
			\end{equation}
			\item \label{item:mia-dist-inversion-post}
			$f_{\rev{i}}$ and $g_{\rev{i}}$ satisfy
			\begin{equation}
			g_{\rev{i}}(P) = \sum_{Q \supseteq P} (-1)^{|Q| - |P|} 
			f_{\rev{i}}(Q), \qquad \forall \rev{P \subseteq \monpaths} 
			\label{eq:inversion}
			\end{equation}
		\end{enumerate}
		Furthermore, statement \ref{item:mia-dist-inversion} of Theorem 
		\ref{thm:mia-dist} is true, i.e., the Algorithm \ref{alg:mia-dist} correctly 
		computes the exact cumulant vector.
	\end{lemma}

	\noindent
	Lemma \ref{lem:mia-dist-inversion} is the heart of MIA. By applying the inversion 
	\eqref{eq:inversion} to the vector of common 
	cumulants, we calculate the vector of \textit{exact} cumulants. Whereas common 
	cumulants contain information about which links are traversed by every path in a set, 
	exact cumulants contain information about which links are traversed 
	\textit{precisely} by the paths in a set, i.e., they contain information 
	about columns of the routing matrix.  
	
	\subsection{Reconstruction Stage}
	\label{sect:recon}
	
	The final stage of the algorithm is to reconstruct the routing matrix from the exact 
	cumulant vector. This reconstruction is straightforward, using only the 
	zero-nonzero pattern of $g_{\rev{i}}$:
	
	\begin{lemma}[Properties of the Reconstruction Stage] \label{lem:decode}
		Let $g_n: 2^{\rev{\monpaths}} \to \mathbb R$ be the exact cumulant vector. For 
		each $\rev{P \subseteq \monpaths}$, let $\rev{\chi(P, \monpaths)} \in \{0, 
		1\}^n$ be the characteristic vector of $P$ in $\rev{\monpaths}$. The following 
		are true:
		\begin{enumerate}
			\item
			If $P \in \supp(g_n)$, then 
			$\rev{\chi(P, \monpaths)}$ must be a column of the routing matrix. Under 
			Assumptions \ref{ass:1} and \ref{ass:2}, the converse is also true. 
			\item
			Statement \ref{item:mia-dist-recon} of Theorem \ref{thm:mia-dist} is true. 
		\end{enumerate}
	\end{lemma} 

%

	\subsection{Detailed Example}
	\label{sect:mia-dist-ex}
	
	In order to illustrate \rev{MIA}, we will apply the algorithm to a small example, 
	consisting of 3 monitor paths that utilize three links. We will walk through each of 
	the three stages of the algorithm in detail.
	 
	\paragraph*{Setup}
	Consider a network with three monitor paths $\rev{\monpaths} = \{p_1, p_2, p_3\}$ and 
	three links $\rev{L} = \{\ell_1, \ell_2, \ell_3\}$, with a routing matrix
	\begin{equation}
		\mathbf R = \begin{blockarray}{cccc}
		& \ell_1 & \ell_2 & \ell_3 \\
		\begin{block}{c(ccc)}
		p_1 & 1 & 1 & 0 \\
		p_2 & 1 & 0 & 1 \\
		p_3 & 0 & 0 & 1 \\
		\end{block}
		\end{blockarray}
		\label{eq:detailed-ex-R}
	\end{equation}
	Clearly this routing matrix satisfies Assumption \ref{ass:1}. Each of the three link 
	delay distributions is exponential, with probability density functions $f_{u_\ell}(x) 
	= \lambda_\ell e^{-\lambda_\ell x}$ \rev{for each $\ell \in L$}, and intensities 
	$\lambda_{\ell_1} = 1$, $\lambda_{\ell_2} = 1.5$, and 
	$\lambda_{\ell_3} = 2$ (in units of per millisecond). All cumulants of exponential 
	distributions are positive, so the latency variables satisfy Assumption \ref{ass:2}. 
	We then invoke \eqref{eq:v} to obtain the joint distribution of path delays. We 
	assume that the theoretical distribution of path delays is known---in particular, 
	the cumulants $\kappa_\alpha(\mathbf V)$ are known exactly---and our objective is to 
	use these cumulants to infer the routing matrix, via Algorithm \ref{alg:mia-dist}.
	
	\subsubsection{Estimation Stage}
	
	There are seven non-empty \rev{subsets of $\monpaths$}. Sets with one path only have 
	one 3rd-order representative 
	multi-index; for example, the path set $P = \{p_1\}$ has a unique representative 
	multi-index $\alpha = (3, 0, 0)$. Sets with two paths have 2 representative 
	multi-indices; for example, $P = \{p_1, p_2\}$ has $\alpha = (2, 1, 0)$ and 
	$\alpha' = (1, 2, 0)$. The three-element path set $P = \rev{\monpaths}$ has only the 
	one representative multi-index $\alpha = (1, 1, 1)$. For each of these seven 
	\rev{path sets}, we will select one of the representative 
	multi-indices arbitrarily and collect them into the common cumulant vector. For 
	example:
	\[
		\mathbf f_3 = \begin{pmatrix}
		f_3(\{p_1\}) \\ f_3(\{p_2\}) \\ f_3(\{p_3\}) \\
		f_3(\{p_1, p_2\}) \\ f_3(\{p_1, p_3\}) \\ f_3(\{p_2, p_3\}) \\
		f_3(\rev{\monpaths})
		\end{pmatrix} = 
		\begin{pmatrix}
			\kappa_{(3, 0, 0)}(\mathbf V) \\
			\kappa_{(0, 3, 0)}(\mathbf V) \\
			\kappa_{(0, 0, 3)}(\mathbf V) \\
			\kappa_{(1, 2, 0)}(\mathbf V) \\
			\kappa_{(1, 0, 2)}(\mathbf V) \\
			\kappa_{(0, 1, 2)}(\mathbf V) \\
			\kappa_{(1, 1, 1)}(\mathbf V)
		\end{pmatrix} = 
		\begin{pmatrix}
		70/27 \\
		9/4 \\
		1/4 \\
		2 \\
		0 \\
		1/4 \\
		0
		\end{pmatrix}
	\] 
	
	{ It is worth noting that $\mathbf f_3$ agrees with
	\eqref{eq:f}, i.e., we 
	can decompose the vector into univariate cumulants of link delays:
	\[
		\mathbf f_3 = \begin{pmatrix}
		\kappa_{(3, 0, 0)}(\mathbf V) \\
		\kappa_{(0, 3, 0)}(\mathbf V) \\
		\kappa_{(0, 0, 3)}(\mathbf V) \\
		\kappa_{(1, 2, 0)}(\mathbf V) \\
		\kappa_{(1, 0, 2)}(\mathbf V) \\
		\kappa_{(0, 1, 2)}(\mathbf V) \\
		\kappa_{(1, 1, 1)}(\mathbf V)
		\end{pmatrix}
		= \begin{pmatrix}
			\kappa_3(U_1) + \kappa_3(U_2) \\
			\kappa_3(U_1) + \kappa_3(U_3) \\
			\kappa_3(U_3) \\
			\kappa_3(U_1) \\
			0 \\
			\kappa_3(U_3) \\
			0
		\end{pmatrix} 
		= 
		\begin{pmatrix}
		70/27 \\
		9/4 \\
		1/4 \\
		2 \\
		0 \\
		1/4 \\
		0
		\end{pmatrix}
	\]
	Of course, performing this decomposition relies on our prior knowledge of $\mathbf R$ 
	and the 
	link delay distributions, which are unavailable to the experimenter.}
	
	\subsubsection{Inversion Stage} 
	
		In order to obtain the 
		exact cumulant vector $\mathbf g_3$ from the common cumulant vector $\mathbf 
		f_3$, we apply the 
		M\"obius inversion transformation \eqref{eq:inversion}. Note that this 
		transformation is linear, and it can be represented in the matrix form $\mathbf 
		g_3 = \mathbf X	\mathbf f_3$, where the matrix $\mathbf X$ contains the 
		coefficients $(-1)^{|Q| - |P|}$:
		
		\begingroup
		\setlength\arraycolsep{3pt}
		\small
		\[
		\begin{pmatrix} g_3(\{p_1\}) \\ g_3(\{p_2\}) \\ g_3(\{p_3\}) \\ 
		g_3(\{p_1, p_2\}) \\ g_3(\{p_1, p_3\}) \\ g_3(\{p_2, p_3\}) \\ 
		g_3(\rev{\monpaths}) \end{pmatrix}
		= \underbrace{\begin{pmatrix}
			1 & 0 & 0 & -1 & -1 & 0 & 1 \\
			0 & 1 & 0 & -1 & 0 & -1 & 1 \\
			0 & 0 & 1 & 0 & -1 & -1 & 1 \\
			0 & 0 & 0 & 1 & 0 & 0 & -1 \\
			0 & 0 & 0 & 0 & 1 & 0 & -1 \\
			0 & 0 & 0 & 0 & 0 & 1 & -1 \\
			0 & 0 & 0 & 0 & 0 & 0 & 1
			\end{pmatrix}}_{\mathbf X}
		\begin{pmatrix} f_3(\{p_1\}) \\ f_3(\{p_2\}) \\ f_3(\{p_3\}) \\ f_3(\{p_1, p_2\}) 
		\\ f_3(\{p_1, p_3\}) \\ f_3(\{p_2, p_3\}) \\ f_3(\rev{\monpaths}) \end{pmatrix}
		\]
		\normalsize
		\endgroup
		
		\noindent
		Evaluating this transformation, we obtain the following expression for the exact 
		cumulant vector:
		\[
		\mathbf g_3 =  \begin{pmatrix}
		g_3(\{p_1\}) \\ g_3(\{p_2\}) \\ g_3(\{p_3\}) \\
		g_3(\{p_1, p_2\}) \\ g_3(\{p_1, p_3\}) \\ g_3(\{p_2, p_3\}) \\
		g_3(\rev{\monpaths})
		\end{pmatrix} = 
		\begin{pmatrix}
		16/27 \\ 0 \\ 0 \\ 2 \\ 0 \\ 1/4 \\ 0
		\end{pmatrix} 
		\]
		We can verify that these values for $\mathbf g_3$ agree with both \eqref{eq:g} 
		and 
		\eqref{eq:pre-inversion}. For example, the routing matrix 
		\eqref{eq:detailed-ex-R} implies that $\rev{E}(\{p_1\}) = \{\ell_2\}$, so 
		\eqref{eq:g} gives
		\[
		g_3(\{p_1\}) 
		= \frac{2}{\lambda_{\ell_2}^3} 
		= \frac{16}{27}
		\]
		in agreement with our computed result for $\mathbf g_3$. Furthermore, 
		\eqref{eq:pre-inversion} claims that we can decompose $f_3(\{p_1\})$ according to
		\begin{align*}
		f_3(\{p_1\}) &= g_3(\{p_1\}) + g_3(\{p_1, p_2\}) + g_3(\{p_1, p_3\}) + 
		g_3(\rev{\monpaths}) \\
		&= \frac{70}{27}
		\end{align*}
		in agreement with $f_3(\{p_1\})$ obtained from the previous stage.
		
	\subsubsection{Reconstruction Stage}
	
		All that remains is to examine the zero-nonzero pattern of $\mathbf g_3$. Note 
		that $\mathbf g_3$ has three non-zero entries: $P_1 = \{p_1\}$, $P_2 = \{p_1, 
		p_2\}$, and $P_3 = \{p_2, p_3\}$. We can then reconstruct the routing matrix from 
		the characteristic vectors of these three path sets:
		\[
			\hat {\mathbf R} = \begin{pmatrix}
			\rev{\chi(P_1, \monpaths)} &
			\rev{\chi(P_2, \monpaths)} &
			\rev{\chi(P_3, \monpaths)} 
			\end{pmatrix} =
			\begin{pmatrix}
			1 & 1 & 0 \\
			0 & 1 & 1 \\
			0 & 0 & 1
			\end{pmatrix}
		\]
		Observe that $\hat{\mathbf R}$ is equivalent to the ``ground truth'' routing 
		matrix in \eqref{eq:detailed-ex-R}, modulo { an irrelevant} permutation of 
		columns, as 
		guaranteed by Theorem \ref{thm:mia-dist} \ref{item:mia-dist-recon}.

	\section{\rev{From Distributions to Data}}
	\label{sect:data}
	
	Having presented the core theory underlying MIA, we now turn to a more practical 
	problem: routing matrix inference from \textit{data}, rather than from a theoretical 
	distribution of path delays. \rev{Instead of knowing the joint distribution of the 
	path delay vector $\mathbf V$, in this section, we only assume that an i.i.d. sample 
	$\mathbf v_1, \mathbf v_2, \dots, \mathbf v_N \in \real^n$ of this distribution is 
	available. Thus, instead of using ground-truth cumulant values $\kappa_\alpha(\mathbf 
	V)$ in the estimation stage of the algorithm, we have to use estimates of these 
	cumulants via the $k$-statistics $k_\alpha(\mathbf v_1, \mathbf v_2, \dots, \mathbf 
	v_N)$. Moreover, because $k$-statistics} introduce noise into the inference 
	procedure, \rev{we will also need to modify the reconstruction stage to be robust 
	against this 
	noise.
	
	\paragraph*{Estimation Stage}
	
	In lines \ref{ln:midx} and \ref{ln:deff} of Algorithm \ref{alg:mia-dist}, MIA selects 
	an arbitrary representative multi-index $\alpha \in A_{n, P}$ and records the common 
	cumulant value $f_n(P) \leftarrow \kappa_\alpha(\mathbf V)$. The choice of 
	representative multi-index here is truly arbitrary, since all yield an identical 
	value for 
	$\kappa_\alpha(\mathbf V)$. This is not true for $k$-statistics. While the expected 
	values of $k_\alpha(\mathbf 
	v_1, \mathbf v_2, \dots, \mathbf v_N)$ are identical for all $\alpha \in A_{n, P}$, 
	the actual values of these statistics will generally be different. It is not clear 
	that any of these values is a better estimate than the others, so we propose 
	replacing $\kappa_\alpha(\mathbf V)$ with the simple average
	\begin{equation}
		\hat f_n(P) = \binom{n - 1}{|P| - 1}^{-1} \sum_{\alpha \in A_{n, P}} 
		k_\alpha(\mathbf v_1, \mathbf v_2, \dots, \mathbf v_N) \label{eq:f-hat}
	\end{equation}
	of all $k$-statistics for the representative multi-indices of $P$. Thus, we replace 
	both lines \ref{ln:midx} and \ref{ln:deff} in Algorithm \ref{alg:mia-dist} with 
	\eqref{eq:f-hat}, as well as using the notation $\hat f_n(P)$ instead of $f_n(P)$ (to 
	highlight that the algorithm is now using an estimate of the common cumulant instead 
	of its true value). 
        
	\paragraph*{Inversion Stage}
	
	There is no need to modify the inversion stage of the algorithm in the data-driven 
	setting. The inversion stage simply applies the linear transformation $\mathbf g_n = 
	\mathbf X \mathbf f_n$, where $\mathbf X$ encodes the M\"obius inversion. When we 
	switch from $\mathbf g_n$ and $\mathbf f_n$ to vectors of estimates $\hat{\mathbf 
	g}_n$ and $\hat{\mathbf f}_n$, this transformation is still valid in expectation:
	\[
		\E[\hat{\mathbf g}_n] = \mathbf X \E[\hat{\mathbf f}_n] = \mathbf X \mathbf f_n
		= \mathbf g_n
	\]
	
	\paragraph*{Reconstruction Stage}
	
	In line \ref{ln:gnnz} of Algorithm \ref{alg:mia-dist}, MIA checks if an entry of the 
	exact cumulant vector is nonzero. But in the data-driven scenario, we switch from 
	exact cumulants to estimates $\hat{\mathbf g}_n$, which only match the zero-nonzero 
	pattern of $\mathbf g_n$ in expectation. To account for inevitable noise in these 
	estimates, instead of checking if $\hat g_n(P) = 0$, we must adopt some kind of 
	hypothesis test $\texttt{Nonzero}(g_n(P) \mid \mathbf v_1, \mathbf v_2, \dots, 
	\mathbf v_N)$, i.e., some decision rule to guess whether $g_n(P) \ne 0$ based on the 
	data. We will examine the construction of such a test in the next subsection.
	}
 
	The performance of \rev{MIA in the data-driven setting} depends 
	entirely on the accuracy of the hypothesis test. This accuracy depends on 
	the test itself, the choice of test parameters (like significance levels), and the 
	size of the sample size $N$, so it is difficult to state general theoretical 
	guarantees regarding the algorithm. Nonetheless, some guarantees are evident in 
	extreme cases, if Assumptions \ref{ass:1} and \ref{ass:2} are satisfied:
	\begin{enumerate}
		\item
		If the test has no Type I error, i.e., if $g_n(P) = 0$ always leads to a decision 
		that $\rev{\texttt{Nonzero}}(g_n(P) \mid \mathbf v_1, \mathbf v_2, \dots, \mathbf 
		v_N)$ is false, then every column of $\hat{\mathbf R}$ will be a true column of 
		$\mathbf R$.
		\item
		If the test has no Type II error, then $\hat{\mathbf R}$ will contain every 
		column of $\mathbf R$.
		\item
		If the test is \textit{consistent}, in the sense that the test is free of both 
		Type I and Type II error in $N \to \infty$ limit, then similarly $\hat{\mathbf R} 
		= \mathbf R$ in 
		the $N \to \infty$ limit.
	\end{enumerate}

	\noindent
	For all practical purposes, none of these extreme cases will apply, and we will have 
	to rely on the algorithm's performance in test scenarios to assess its usefulness.
	
	\subsection{Hypothesis Tests}
	\label{sect:hypothesis-testing}
	
	We now examine the hypothesis test $\rev{\texttt{Nonzero}}(g_n(P) 
	\mid \mathbf v_1, \mathbf v_2, \dots, \mathbf v_N)$, \rev{which we will 
	subsequently abbreviate as $\texttt{Nonzero}(g_n(P))$. Because $\E[\hat g_n(P)] = 
	g_n(P)$,} we can assess the null hypothesis $g_n(P) = 0$ via an equivalent null 
	hypothesis, that $\E[\rev{\hat g_n(P)}] = 0$. There is no single correct 
	way to perform this \rev{mean location} test---many approaches exist, with advantages 
	and disadvantages. 
        
%
	
	\subsubsection{Normal Approximation}
	
	Because \rev{the statistics $\hat g_n(P)$} are asymptotically normally distributed, 
	we could simply estimate the mean and variance of the distribution and apply a 
	standard $z$-test. 
	This approach is used in \cite{BS-SR-SG:10}, for example, to perform hypothesis 
	testing on univariate cumulants, using univariate $k$-statistics. Unfortunately, 
	while the mean of the distribution is easily estimated by $\rev{\hat g_n(P)}$, the 
	variance relies on computing variances of multivariate $k$-statistics, 
	which are both mathematically and computationally complex. 

	\subsubsection{Sample Splitting}
	
	Another simple approach is to partition the original $N$-length sample into $M$ 
	subsamples of size $N / M$, compute \rev{$\hat g_n(P)$} for each subsample, and use 
	standard hypothesis testing to assess whether the statistics have zero mean. Since 
	the subsamples are non-overlapping, each of the $M$ values of \rev{$\hat g_n(P)$} 
	will be iid, so standard approaches (like the 1-sample Student's $t$-test 
	\cite[\S 9.5]{MHD-MJS:12}) can be used to test the null hypothesis that
	$\E[\rev{\hat g_n(P)}] = 0$.  
	
	\subsubsection{Bootstrapping}
	
	Bootstrapping (see, e.g., \cite[Chapter 2]{ACD-DVH:99}) is a resampling 
	technique that uses the empirical distribution (i.e., the discrete distribution with 
	uniform weight on each sample value) to approximate the original distribution. For $b 
	= 1, 2, \dots, M$ (where typically $M \approx 50$), we define a \textit{resample} 
	$\tilde {\mathbf v}_{b1}, \tilde {\mathbf v}_{b2}, \dots, \tilde {\mathbf v}_{bN}$ 
	that is chosen randomly with replacement from 
	the original sample $\mathbf v_1, \mathbf v_2, \dots, \mathbf v_N$. We then compute 
	\rev{$\hat g_n(P)$} for each resample, resulting in a sample of \rev{size $M$ for 
	$\hat g_n(P)$}, which we can use to perform a mean hypothesis test. This approach has 
	been applied to estimating confidence intervals for cumulants \cite{YZ-DH-ANV:93}. 
	
	\subsection{Detailed Example}
	\label{sect:ex-data}
	

	In order to illustrate the empirical version of MIA, we 
	will continue to use the low-dimensional example from Section \ref{sect:mia-dist-ex}, 
	with the same routing matrix \eqref{eq:detailed-ex-R} and the same 
	exponentially-distributed link delays. \rev{We created a synthetic dataset with} 900 
	independent samples from each link distribution, \rev{which we transformed into 900} 
	samples of $V_{p_1}$, \rev{$V_{p_2}$ , and $V_{p_3}$} based on the sums encoded in 
	the routing matrix.  
	
	We use the sample splitting approach to the \rev{$\texttt{Nonzero}(g(P))$
	hypothesis} test in this example. The 900 original sample points are split into 30 
	samples of size 30. 
	\rev{To carry out the estimation stage, we estimate the common cumulant vector for 
	each of these 30 samples with the simple average of $k$-statistics in 
	\eqref{eq:f-hat}:
	}
	\[
		\hat {\mathbf f}_3 = \begin{pmatrix}
			\hat f_{3}(\{p_1\}) \\
			\hat f_{3}( \{p_2\}) \\
			\hat f_{3}(\{p_3\}) \\
			\hat f_{3}(\{p_1, p_2\}) \\
			\hat f_{3}(\{p_1, p_3\}) \\
			\hat f_{3}(\{p_2, p_3\}) \\
			\hat f_{3}(\rev{\monpaths})
		\end{pmatrix} = 
		\begin{pmatrix}
			k_{(3, 0, 0)}(\cdot) \\
			k_{(0, 3, 0)}(\cdot) \\
			k_{(0, 0, 3)}(\cdot) \\
			\frac 1 2 k_{(1, 2, 0)}(\cdot) + \frac 1 2 k_{(2, 1, 0)}(\cdot) \\
			\frac 1 2 k_{(1, 0, 2)}(\cdot) + \frac 1 2 k_{(2, 0, 1)}(\cdot) \\
			\frac 1 2 k_{(0, 1, 2)}(\cdot) + \frac 1 2 k_{(0, 2, 1)}(\cdot) \\
			k_{(1, 1, 1)}(\cdot)
		\end{pmatrix}
	\]
	\rev{Here $k_\alpha(\cdot)$ is shorthand for $k_\alpha(\mathbf v_1, \mathbf v_2, 
	\dots, \mathbf v_N)$. Columns 2 and 3 of} Table \ref{table:small-results} report the 
	means and standard errors for \rev{these 30 estimates of $\hat{\mathbf f_3}$. To 
	perform the inversion stage, the the vector $\hat{\mathbf g}_3$ is then computed} by 
	$\hat{\mathbf g}_3 = \mathbf X \hat{\mathbf{f}}_3$, where $\mathbf X$ is the matrix 
	defined in Section \ref{sect:mia-dist-ex}. \rev{Columns 4 and 5 of Table 
	\ref{table:small-results} similarly summarize the distribution of these 30 estimates 
	for $\hat{\mathbf g_3}$.} Indeed, all of the $\rev{\hat f_3(P)}$ and \rev{$\hat 
	g_3(P)$ averages} are within one standard error of $f_3(P)$ and $g_3(P)$, 
	respectively.

	\begin{table}
		\centering
		\begin{tabular}{r|rr|rr} $P$ & $f_3(P)$ & $\hat f_{3, P}$ & $g_3(P)$ & $\hat g_{3, P}$ \\ 
\hline$\{p_1\}$ & $2.59$ & $2.67 \pm 0.5$ & $0.593$ & $0.66 \pm 0.2$ \\$\{p_2\}$ & $2.25$ 
& $2.31 \pm 0.7$ & $0$ & $0.06 \pm 0.2$ \\$\{p_3\}$ & $0.25$ & $0.24 \pm 0.05$ & $0$ & 
$0.02 \pm 0.02$ \\$\{p_1,p_2\}$ & $2$ & $2.01 \pm 0.6$ & $2$ & $2.01 \pm 0.5$ 
\\$\{p_1,p_3\}$ & $0$ & $-0.01 \pm 0.05$ & $0$ & $-0.01 \pm 0.04$ \\$\{p_2,p_3\}$ & 
$0.25$ & $0.23 \pm 0.07$ & $0.25$ & $0.23 \pm 0.06$ \\$\{p_1,p_2,p_3\}$ & $0$ & $0.00 \pm 
0.09$ & $0$ & $0.00 \pm 0.09$ \end{tabular}
		\caption{Common and exact cumulants in the low-dimensional 
			example. Columns $f_3(P)$ and $g_3(P)$ report the true underlying values, 
			while $\hat f_3(P)$ and $\hat g_3(P)$ show the mean and standard 
			error of the respective estimates.}
		\label{table:small-results}
	\end{table} 

	\rev{Based on these 30 estimates of $\hat{\mathbf g}_3$, we perform the 
	reconstruction stage using} a 1-sample Student's $t$-test to assess the null 
	hypothesis that $\E[\rev{\hat g_3(P)}] = 0$ for each path set. The $p$-value for each 
	null hypothesis is reported in Table \ref{table:small-testing}, as well as the result 
	of the test with a significance threshold of 0.01.
	
	\begin{table}
 		\centering
 		\begin{tabular}{r|rl} $P$ & p-value for $g_3(P) = 0$ & $\chi(P)$ is in $R$? \\ \hline$\{p_1\}$ & $0.001$ & Yes \\$\{p_2\}$ & $0.8$ & No \\$\{p_3\}$ & $0.5$ & No \\$\{p_1,p_2\}$ & $0.0005$ & Yes \\$\{p_1,p_3\}$ & $0.9$ & No \\$\{p_2,p_3\}$ & $0.0008$ & Yes \\$\{p_1,p_2,p_3\}$ & $1$ & No \end{tabular}
 		\caption{Hypothesis testing for whether or not $\rev{\chi(P, \monpaths)}$ is a 
 		column of the routing matrix, at 0.01 significance.}
 		\label{table:small-testing}
 	\end{table}
  	
  	For precisely three of the path sets, we reject the null hypothesis that $g_3(P) = 
  	0$: $P_1 = \{p_1\}$, $P_2 = \{p_1, p_2\}$, and $P_3 = \{p_2, p_3\}$. Assembling the 
 	characteristic vectors of these path sets into $\hat{\mathbf R}$, we obtain \rev{an 
 	identical estimate} to our result from Section \ref{sect:mia-dist-ex}, \rev{which} is 
 	identical to the ground truth routing matrix (up to a permutation of columns). 

\rev{
	\section{Sparse M\"obius Inference} 
	\label{sect:sparse}
	
	The key step in the M\"obius Inference Algorithm is the linear transformation 
	$\mathbf g_i = \mathbf X \mathbf f_i$, where $\mathbf g_i$ is a vector of $2^n - 1$ 
	exact cumulants, $\mathbf f_i$ is a vector of $2^n - 1$ common cumulants, $n$ is the 
	number of monitor paths, and $\mathbf X$ is the matrix encoding M\"obius inversion. 
	Three problems arise naturally: 
	the computational expense of the transformation $\mathbf X$, the
	impracticality of populating every entry of $\mathbf f_i$ with empirical 
	measurements, 
	and the noise present in $\mathbf f_i$ (and $\mathbf g_i$) due to the use of 
	cumulants with excessively high order. In this section, we simultaneously tackle  these three 
        problems using several different  sparsity heuristics. 
	
	Our proposed ``Sparse M\"obius Inference'' procedure proceeds in three stages. In the 
	first stage, we use measurements of low-order common cumulants to identify which 
	entries of the $\mathbf f_i$ and $\mathbf g_i$ vectors can contain nonzero entries. We 
	can 
	then ignore all other entries of these vectors and drop their corresponding columns and 
	rows from $\mathbf X$, reducing the M\"obius inversion down to a (typically much) 
	smaller set of equations. In the second stage, we impose the following sparsity 
	heuristic on $\mathbf g_i$: if $P$ is a sufficiently large path set that is strictly 
	contained within some other path set in $\supp(f_i)$, then $g_i(P) = 0$. This 
	heuristic allows us to remove further entries from both $\mathbf g_i$ and $\mathbf 
	f_i$, 
	provided  we make a suitable modification to $\mathbf X$. Finally, in the third 
	stage, we apply a sparsity-promoting lasso optimization problem to filter noisy 
	estimates of common cumulants and impute the values of common cumulants that are 
	impractical to measure. The end result is a sparse estimate for $\mathbf g_i$, which 
	only relies on estimates of common cumulants up to a small, user-specified order.
	
	\subsection{Stage 1: Bound the Support of $f_i$}
	
	In the first stage, we estimate the collection of path sets $P \subseteq \monpaths$ 
	for which $f_i(P) \ne 0$. The key to this process is the observation that $f_i(Q) \ne 
	0$ only if $f_i(P) \ne 0$ for all subsets $P \subseteq Q$: if just a single subset 
	$P$ has a zero-valued common cumulant, then $C(P) = \emptyset$, which implies that 
	$C(Q) = \emptyset$. If we focus on small path sets, then we can use low-order 
	cumulants to identify which of these path sets have no common links, and remove all 
	of their supersets from the support of $\mathbf f_i$. 
	
	We can maintain a compact representation of our estimate of $\supp(\mathbf f_i)$ 
	using a \textit{bounding topology}. A bounding topology is any collection of path 
	sets 
	$\mathcal B \subseteq 2^{\monpaths}$ with the following property: if $f_i(P) \ne 0$, 
	then $\mathcal B$ contains some path set $B \in \mathcal B$ such that $P \subseteq 
	B$. We will refer to the collection of all sets contained by some $B \in \mathcal B$ 
	(i.e., the union $\bigcup_{B \in \mathcal B} 2^B$) as the ``support estimate'' of 
	$\mathcal B$. Below are two extreme examples:
	\begin{itemize}
		\item $\mathcal B = \{\monpaths\}$ is trivially a bounding 
		topology, albeit not a very informative one, since the support estimate is 
		$2^\monpaths$.
		\item $\mathcal B = \supp(\mathbf g_i)$ is a bounding topology: if $f_i(P) \ne 
		0$, then 
		some superset $B \supseteq P$ satisfies $g_i(B) \ne 0$, and thus $B \in \mathcal 
		B$. This is a ``tight'' bounding topology, in the sense that every set in its 
		support estimate is indeed in the support of $\mathbf f_i$.
	\end{itemize} 
	Stage 1 begins with an uninformative bounding topology (like $\mathcal B = 
	\{\monpaths\}$), 
	and it iteratively ``tightens'' $\mathcal B$ using successive orders of common 
	cumulant estimates. The fundamental idea is that if we determine 
	$\texttt{Nonzero}(f_i(P))$ is false for some small path set $P$, then we ought to 
	split up all $B \in \mathcal B$ containing $P$ into smaller sets that do not contain 
	$P$, thereby eliminating all supersets of $P$ from the support estimate. This 
	iterative tightening procedure then terminates at a (typically small) 
	user-specified cumulant order. 
	
	Unfortunately, $\texttt{Nonzero}(f_i(P))$ is usually a hypothesis test with limited 
	statistical power---there is a chance that our data would incorrectly indicate that 
	$f_i(P) = 0$, leading us to remove any superset of $P$ from the support estimate and 
	thus ignore nonzero values of the common cumulant in future calculations. Such an 
	error could greatly harm the accuracy of later stages of the topology inference. In 
	order to hedge against this possibility, we propose a robust procedure that splits a 
	set $B \in \mathcal B$ only if a sufficient number of subsets of $B$ are found to have
	zero common cumulant. The user provides a \textit{threshold function} $t: \mathbb 
	Z_{> 0} \times \mathbb Z_{> 0} \to \mathbb Z_{> 0}$, where $B \in \mathcal B$ is 
	never split so long as
	$t(|B|, i)$ size-$i$ subsets of $|B|$ are found to have a nonzero common cumulant.  
	
	The core of the procedure is Algorithm \ref{alg:refine}, which tightens an estimate 
	of the bounding topology using common cumulants of some fixed order $i$. The 
	algorithm initially computes the collection of all size-$i$ sets $P$ in the support 
	estimate of $\mathcal B$ for which $\texttt{Nonzero}(f_i(P))$ is true. What follows 
	is effectively a voting procedure: each of these sets $P$ counts as a ``vote'' in 
	favor of keeping each superset $Q \supseteq P$ in the support estimate. If one of the 
	sets $B \in \mathcal B$ fails to reach its threshold of $t(|B|, i)$ votes, then 
	$B$ is split up into the $|B|$ subsets obtained by removing one element from $B$, and 
	the votes for these subsets are tallied as well. This process repeats until all 
	the sets in $\mathcal B$ with size at least $i$ reach their respective thresholds. 
	Theorem \ref{thm:refine} formally states the guarantees of this algorithm:
	
	\begin{algorithm}
		\caption{\rev{$\texttt{Tighten}(\mathcal B, i, t)$}}
		\begin{algorithmic}[1] \rev{
			\REQUIRE Bounding topology $\mathcal B \subseteq 2^{\monpaths}$, cumulant 
			order $i \in \mathbb{Z}_{> 0}$, and threshold function $t: \mathbb{Z}_{> 0} 
			\times \mathbb{Z}_{> 0} \to \mathbb{Z}_{> 0}$
			\ENSURE Tightened bounding topology $\mathcal B' \subseteq 2^{\monpaths}$
			\STATE Initialize $\mathcal B' = \emptyset$, $\mathcal X = \emptyset$, and
			\[
			\mathcal P = \left\{
			P \in \bigcup_{B \in \mathcal B} \binom{B}{i} : 
			\texttt{Nonzero}(f_i(P))
			\right\}
			\]
			\WHILE{$|\mathcal B| > 0$}
			\STATE Remove an arbitrary set $B$ from $\mathcal B$ and add it to 
			$\mathcal X$
			\IF{$|B| < i$ or $|\{P \in \mathcal P : P \subseteq B\}| \ge t(|B|, i)$} 
			\label{ln:count}
			\STATE $\mathcal B' \leftarrow \mathcal B' \cup \{B\}$ \label{ln:b-add}
			\ELSE
			\FOR{$p \in B$}
			\STATE $B_{\rm sub} \leftarrow B \setminus \{p\}$
			\IF{$B_{\rm sub} \notin \mathcal X$ and no set in $\mathcal B \cup \mathcal 
				B'$ contains $B_{\rm sub}$} \label{ln:if}
			\STATE $\mathcal B \leftarrow \mathcal B \cup \{B_{\rm sub}\}$
			\ENDIF
			\ENDFOR
			\ENDIF
			\ENDWHILE
			\RETURN $\mathcal B'$}
		\end{algorithmic}
		\label{alg:refine}
	\end{algorithm}
	
	\begin{theorem}[Properties of Algorithm \ref{alg:refine}] \label{thm:refine}
		Let $\mathcal B \subseteq 2^{\monpaths}$ be a collection of path sets, let $i \in 
		\mathbb{Z}_{> 0}$ be a cumulant order, and let $t: \mathbb{Z}_{> 0} \times 
		\mathbb{Z}_{> 0} \to \mathbb{Z}_{> 0}$ be a 
		threshold function. The following are true:
		\begin{enumerate}
			\item Algorithm \ref{alg:refine} evaluates $\texttt{IsNonzero}(f_i(P))$ 
			$O(n^i)$ times and terminates after $O(2^q)$ iterations of the while loop, 
			where $q$ is the size of the largest set in $\mathcal B$. The algorithm 
			returns a collection of path sets $\mathcal B' \subseteq 2^{\monpaths}$. 
			\item The support estimate of $\mathcal B'$ is a subset of the support 
			estimate of $\mathcal B$. 
			\item For any set $P$ in the support estimate of $\mathcal B$, $P$ is also in 
			the support estimate of $\mathcal B$ if either $|P| < i$, or if there is a 
			superset $Q \supseteq P$ in the support estimate of $\mathcal B$ for which at 
			least $t(|Q|, i)$ size-$i$ subsets $R \subseteq Q$ satisfy 
			$\texttt{Nonzero}(f_i(R))$. 
		\end{enumerate}
	\end{theorem}
	
	\begin{proof}
		There are at most $\binom{n}{i} = O(n^i)$ size-$i$ sets, so 
		$\texttt{Nonzero}(f_i(P))$ is evaluated $O(n^i)$ times to compute $\mathcal P$. 
		The worst-case runtime occurs when $|\{P \in \mathcal P : P \subseteq B\}| < 
		t(|B|, i)$ for each iteration of the while loop, in which case the variable $B$ 
		takes on the value of every subset (with size at least $i$) of every original set 
		in $\mathcal B$ precisely once (because the collection $\mathcal X$ tracks which 
		sets have already been processed, preventing redundant iterations of the while 
		loop). Thus, there are $O(2^q)$ iterations of the while loop. 
		
		To prove (ii), observe that every set added to $\mathcal B'$ was originally in 
		the queue $\mathcal B$, and that sets in the queue are either from the original 
		collection $\mathcal B$, or they are subsets of a previous element in the queue. 
		Hence every set in $\mathcal B'$ is a subset of a set in the original $\mathcal 
		B$, so the support estimate of $\mathcal B'$ is a subset of the original support 
		estimate. To prove (iii), suppose that $P$ is in the support estimate of 
		$\mathcal B'$, so that some $B' \in \mathcal B'$ contains $P$. Sets are only 
		added to $\mathcal B'$ on line \ref{ln:b-add}, and the set must satisfy either 
		$|B'| < i$ or $|\{P' \in \mathcal P : P' \subseteq B'\}| \ge t(|B'|, i)$, i.e., 
		(b) is satisfied with $Q = B'$.  
	\end{proof}
	
	Through the repeated application of Algorithm \ref{alg:refine} to a collection 
	$\mathcal B$ and successively larger orders $i$, as detailed in Algorithm 
	\ref{alg:bound}, we obtain tighter
	support estimates. Every path set in $\supp(\mathbf f_i)$ should remain in the 
	support 
	estimate of $\mathcal B$ after 
	each iteration, so long as the values of the threshold function $t$ are sufficiently 
	small (and the test $\texttt{Nonzero}(f_i(P))$ is sufficiently accurate). 
	Furthermore, as we incorporate information from higher-order cumulants, we 
	remove path sets for which $f_i(P) = 0$ from the support estimate. In summary, the 
	support estimate of $\mathcal B$ becomes a more and more accurate approximation of 
	$\supp(\mathbf f_i)$.
	
	\begin{algorithm}
		\caption{\rev{$\texttt{BoundingTopology}(\mathcal B, i_0, i_f, t)$}} 
		\label{alg:bound}
		\begin{algorithmic}[1] \rev{
			\REQUIRE Initial guess $\mathcal B \subseteq 2^{\monpaths}$, 
			initial cumulant order $i_0$, final cumulant order $i_f$, and threshold 
			function $t: \mathbb{Z}_{> 0} \times \mathbb{Z}_{> 0} \to \mathbb{Z}_{> 0}$
			\ENSURE Tightened bounding topology $\mathcal B \subseteq 2^{\monpaths}$
			\FOR{$i = i_0, i_0 + 1, \dots, i_f$}
			\STATE $\mathcal B \leftarrow \texttt{Tighten}(\mathcal B, i, t)$
			\ENDFOR
			\RETURN $\mathcal B$}
		\end{algorithmic}
	\end{algorithm}
	
	We will conclude the discussion of Stage 1 by addressing two questions---how should 
	we select the initial guess for $\mathcal B$ that is supplied to Algorithm 
	\ref{alg:bound}, and how should we design the threshold function $t$?
	
	\paragraph*{Choosing an Initial Bounding Topology}
	
	A safe (albeit inefficient) choice for the initial guess of bounding topology is 
	$\mathcal B = \{2^{\monpaths}\}$. Clearly the support estimate of $\mathcal B$ will 
	contain every path set in $\supp(\mathbf f_i)$. Unfortunately, this choice also 
	maximizes the 
	runtime of Algorithm \ref{alg:bound}, since the sub-routine Algorithm 
	\ref{alg:refine} is exponential in the size of the largest set in $\mathcal B$. 
	
	A more practical approach is to use second-order cumulants (i.e., covariances) to 
	construct an initial guess for $\mathcal B$. Second-order $k$-statistics tend to have 
	a small variance (compared to the higher-order $k$-statistics), leading to only a 
	small probability that $\texttt{Nonzero}(f_2(P))$ yields a false negative, which 
	makes the thresholding in Algorithm \ref{alg:refine} unnecessary. If we require that 
	$\texttt{Nonzero}(f_2(P))$ is true for \textit{all} two-element subsets of each set 
	in $\mathcal B$, then we can use second-order cumulants to construct a more efficient 
	initial guess for $\mathcal B$, and then we can run Algorithm \ref{alg:bound} on this 
	initial guess starting at order $i_0 = 3$. 
	
	One way to efficiently construct this covariance-based initial guess is to use 
	standard algorithms for maximal clique enumeration. Recall from graph theory that a 
	\textit{clique} is any set of nodes for which all nodes in the set are adjacent, and 
	a \textit{maximal clique} is a clique that is not contained within a larger clique. 
	Construct a graph $G_b = (\monpaths, E_b)$ where each monitor path is a node, 
	and an edge $\{p_i, p_j\}$ is included in $E_b$ if and only if 
	$\texttt{Nonzero}(f_2(\{p_i, p_j\}))$ is true. Cliques in $G_b$ are precisely the 
	path sets for which $\texttt{Nonzero}(f_2(P))$ is true of every two-element subset. 
	Therefore, we take as our initial guess for $\mathcal B$ the set of maximal cliques 
	in $G_b$. The size of the largest clique is typically significantly smaller than $n$, 
	leading to a faster runtime for Algorithm \ref{alg:bound}. 
	
	\paragraph*{Constructing the Threshold Function}
	
	Algorithm \ref{alg:bound} requires the user to specify a threshold function $t(|P|, 
	i)$, indicating the minimum number of size-$i$ subsets of $P$ that must pass the 
	nonzero common cumulant test for $P$ to remain in the support estimate. Choosing the 
	threshold value is a balance---large values may lead to sets in $\supp(f_i)$ being 
	rejected from the support estimate, but small values will cause information from many 
	zero-valued cumulants to be ignored. We will try to devise an intuitive and tunable 
	form for $t(|P|, i)$ to strike this balance. 
	
	Recall that the \textit{statistical power} of a hypothesis test is the probability of 
	rejecting the null hypothesis given that the alternative hypothesis is true---in our 
	case, the probability that $\texttt{Nonzero}(f_i(P))$ is true if indeed $P \in 
	\supp(\mathbf f_i)$. Suppose that, for each $P \in \supp(\mathbf f_i)$, the 
	corresponding test 
	$\texttt{Nonzero}(f_i(P))$ is true independently and with uniform probability $1 - 
	\beta$. Under these (inaccurate but nonetheless useful) assumptions, the number of 
	size-$i$ subsets of any $Q \in \supp(\mathbf f_i)$ for which 
	$\texttt{Nonzero}(f_i(P))$ is 
	true follows a binomial distribution, with $\binom{|Q|}{i}$ trials and a success 
	probability of $1 - \beta$. Hence, the probability that at least $t(|Q|, i)$ size-$i$ 
	subsets of $Q$ pass the nonzero test is $1 - F_{|Q|, i}(t(|Q|, i))$, where $F_{|Q|, 
	i}$ is the cdf of the binomial distribution. 
	
	Because $Q$ truly belongs to the support of $\mathbf f_i$, it is highly undesirable 
	that we 
	erroneously remove $Q$ from the support estimate by setting the threshold $t(|Q|, i)$ 
	inappropriately high. To render such an error unlikely, we must ensure that $1 - 
	F_{|Q|, i}(t(|Q|, i))$ exceeds some high probability $1 - \gamma \in (0, 1)$, e.g., 
	$1 - \gamma = 0.1$. Once we specify $\gamma$, we can solve for the appropriate 
	threshold as the quantity
	\begin{align*}
	t(|Q|, i) &= \max\{t \in \mathbb{Z}_{> 0} : F_{|Q|, i}(t) < \gamma\} \\
	&= \min \{t \in \mathbb{Z}_{> 0} : F_{|Q|, i}(t) \ge \gamma \} - 1
	\end{align*}
	In other words, we set $t(|Q|, i)$ as one less the $\gamma$ quantile of the 
	binomial distribution with $\binom{|Q|}{i}$ trials and success probability $1 - 
	\beta$. There is no good closed-form expression for the value of this quantile; 
	however, it is readily computable in many statistics packages.
	
	This binomial quantile specification for $t(|Q|, i)$ is somewhat informal, since the 
	outcomes of $\texttt{Nonzero}(f_i(P))$ are neither independently nor identically 
	distributed, as the derivation assumed. However, the method does at least provide an 
	intuitive way to reduce the specification of $t$ down to two tunable parameters, 
	$\gamma \in (0, 1)$ (the highest tolerable probability that $Q \in \supp(\mathbf 
	f_i)$ is accidentally rejected) and $\beta \in (0, 1)$ (an estimate for the 
	probability that $\texttt{Nonzero}(f_i(P))$ yields a false negative). We could also 
	specify different values of these parameters for different $k$-statistic orders $i$, 
	to account for the fact that $k$-statistics tend to become less accurate with higher 
	orders. 
	
	\subsection{Stage 2: Bound the Support of $g_i$} 
	
	In the previous stage, we used information from low-order cumulants to narrow the 
	entries of $\mathbf f_i$ containing nonzero entries down to the support estimate of 
	$\mathcal B$. Because $f_i(P) = 0$ implies that $g_i(P) = 0$ as well, this stage also 
	simultaneously restricts the nonzero entries of $\mathbf g$ to to the support estimate 
	of $\mathcal B$. The second stage drops even more zero-valued entries from 
	these two vectors. Instead of using empirical information from low-order cumulants, 
	this stage enforces a ``hard'' sparsity heuristic: that $g_i(P) = 0$ for all path 
	sets $P$ larger than some threshold size $s$, unless that path set is an element of 
	$\mathcal B$. In other words, we assume that the only ``large'' path sets are those 
	contained directly in the bounding topology inferred from low-order cumulants. 
	
	This heuristic immediately zeros out large swaths of the $\mathbf g_i$ vector, 
	allowing 
	us to ignore them during the final stage. But the heuristic also allows us to drop 
	even more entries from the $\mathbf f_i$ vector, as stated in the following lemma: 
	
	\begin{lemma}[Elimination of Large, Non-Maximal Path Sets]
		Let $\mathcal B \subseteq 2^{\monpaths}$ be a collection of path sets, and let $s 
		\in \mathbb{Z}_{> 0}$. Assume that the following are true:
		\begin{enumerate}
			\item Every set in $\mathcal B$ is maximal (i.e., no $B, B' \in \mathcal B$ 
			exist such that $B \subset B'$), \label{cond:maximal}
			\item $f_i(P) \ne 0$ and $g_i(P) \ne 0$ only if $P$ is in the support 
			estimate of $\mathcal B$, and
			\item $g_i(P) = 0$ for all $P \subseteq \monpaths$ with $|P| > s$ and $P 
			\notin \mathcal B$. 
			\label{cond:heuristic}
		\end{enumerate}
		Then for every $P$ in the support estimate of $\mathcal B$ such that $|P| \le s$, 
		\begin{align}
		\begin{split}
		g_i(P) &= \sum_{Q \supseteq P : |Q| \le s} (-1)^{|Q| - |P|} f_i(Q) \\
		&\qquad - \sum_{B \in \mathcal B : B \supseteq P} (-1)^{s - |P|} \binom{|B| - |P| 
		- 1}{s - |P|} f_i(B)  
		\end{split} \label{eq:mobius-mod}
		\end{align}
	\end{lemma}
	
	
	Due to \eqref{eq:mobius-mod}, there is no need to measure or keep track of $f_i(P)$ 
	for sufficiently large $P$, unless $P$ is a set in $\mathcal B$. Note that these 
	common cumulants are not just zeroed out---they take on a nonzero value; however, 
	this value is constrained to a linear combination of the common cumulants for $B \in 
	\mathcal B$, which are already elements of the common cumulant vector. 
	
	\subsection{Stage 3: Lasso Optimization}
	
	The previous two stages eliminated large parts of the $\mathbf f_i$ and $\mathbf g_i$ 
	vectors, using a combination of information from low-order cumulants, \textit{a 
		priori} assumptions, and suitable modifications of the M\"obius transformation 
		matrix 
	$\mathbf X$. These two stages significantly reduce the computational expense of 
	performing M\"obius inversion and populating $\mathbf f_i$ with empirical estimates 
	of 
	common cumulants. Furthermore, because the first stage tends to eliminate the largest 
	subsets of $\monpaths$ from the support for $\mathbf f_i$, we can populate $\mathbf 
	f_i$ 
	with cumulants of order lower than $n$. But this cumulant order (which must be at 
	least the size of the largest path set with a nonzero common cumulant) can still be 
	unrealistically large, and the resulting common cumulant estimates can be quite 
	noisy. In the final stage of Sparse M\"obius Inference, we address these two problems 
	by filtering $\mathbf f_i$ using lasso optimization.
	
	To set up the problem, the user first supplies a maximum cumulant order $i_{\rm max} 
	\in \mathbb{Z}_{> 0}$, indicating the largest order of cumulant they are willing to 
	estimate. Based on $i_{\rm max}$, we partition the common cumulant vector by $\mathbf 
	f_{i_{\rm max}} = \begin{pmatrix} \mathbf f_o & \mathbf f_u \end{pmatrix}^\top$, and 
	we make the 
	corresponding partition to the inversion matrix $\mathbf X = \begin{pmatrix} \mathbf 
	X_o & \mathbf X_u \end{pmatrix}$. $\mathbf f_o$ 
	corresponds to the common cumulants $f_{i_{\rm max}}(P)$ of path sets with size at 
	most $i_{\rm max}$, i.e., the common cumulants that we can ``observe'' using 
	empirical estimates. All other ``unobserved'' common cumulants are consigned to the 
	$\mathbf f_u$ vector. Note that $\mathbf f_o$ is not directly populated with common 
	cumulant estimates: in fact, both $\mathbf f_o, \mathbf f_u$ are left as decision 
	variables in the lasso optimization problem, and the value of $\mathbf f_o$ is 
	allowed to deviate from the empirical estimate if it promotes a sparser solution 
	$\mathbf g$. Instead, all of the empirical common cumulant estimates are collected 
	into a vector $\hat{\mathbf f_o}$, and the corresponding standard deviations of each 
	estimate are collected into the vector $\mathbf \sigma$. We then solve for the 
	optimal common cumulant vector $\mathbf f^* = \begin{pmatrix} 
	\mathbf f_o^* & \mathbf f_u^* \end{pmatrix}^\top$ using the convex, unconstrained 
	optimization problem:
	\begin{align} \label{eq:opt}
		\begin{split}
		\mathbf f_o^*, \mathbf f_u^* &= \argmin_{\mathbf f_o, \mathbf f_u} 
			J(\mathbf f_o, \mathbf f_u) \\
		J(\mathbf f_o, \mathbf f_u) &= || \mathbf \Sigma^{-1} (\mathbf f_o - \hat{\mathbf 
		f_o}) ||_2^2 + ||\mathbf D(\mathbf X_o \mathbf f_o + \mathbf X_u \mathbf 
		f_u)||_1
		\end{split}
	\end{align}
	Here $\mathbf \Sigma = \diag\{\mathbf \sigma\}$, and $\mathbf D$ is some tunable 
	diagonal 
	matrix of positive 
	weights (which we will soon discuss in more detail). Having computed the solution, we 
	then evaluate $\mathbf g^* = \mathbf X_o \mathbf f_o^* + \mathbf X_u \mathbf f_u^*$.
	
	Eqn. \eqref{eq:opt} simultaneously de-noises measurements of the observed common 
	cumulant values and imputes the unobserved common cumulants. The quadratic term 
	is proportional to the log likelihood of the data $\hat{\mathbf f_o}$ (under the 
	assumption of independent and normally-distributed common cumulant estimates with 
	variances $\mathbf \sigma^2$), and the regularizer $||\mathbf X_o \mathbf f_o + 
	\mathbf X_u \mathbf f_u||_1$ encourages sparsity in the vector $\mathbf g^*$. The end 
	result is an estimate of $\mathbf g_{i_{\rm max}}$ that only measures common cumulants
	up to a user-specified order and is more robust to noise in these measurements. 
	
	As with the full M\"obius Inference Algorithm, the columns of the routing matrix 
	correspond to the nonzero entries of $\mathbf g_{i_{\rm max}}$. Thus, once we obtain 
	an optimal (and sparse) exact cumulant vector $\mathbf g^*$, we add the 
	characteristic vector of 
	each $P \in \supp(\mathbf g^*)$ to our estimate of the routing matrix. 
	
	\paragraph*{Weighting the 1-Norm}
	A straightforward choice for weighting the 1-norm of $\mathbf g^*$ is to choose a 
	uniform weighting strategy, in which case $\mathbf D = \lambda \mathbf I$ for some 
	parameter $\lambda > 0$ that weights the 1-norm relative to the log 
	likelihood of the data. But uniform weighting tends to suppress entries of $\mathbf 
	g^*$ corresponding to singleton path sets. If $P = \{p\}$ for some $p \in \monpaths$, 
	then \eqref{eq:mobius-mod} shows that $g_i(P)$ is the only entry of $\mathbf g_{i}$ 
	that depends on $f_i(P)$. Thus, if the uncertainty $\sigma$ in the measurement of 
	$\hat f_o(P)$ is sufficiently large, the optimizer is free to zero out $g^*(P)$ by 
	tuning the decision variable corresponding to $f_i(P)$. Indeed, we have observed 
	numerically that uniform weighting leads to routing matrix estimates missing many 
	columns with single nonzero entries. 
	
	To counteract this problem, we suggest applying less weight to
	``under-determined'' entries of $\mathbf g^*$. Formally, for each $P$ in the 
	support estimate, let 
	\[
		a(P) = \begin{cases}
			\left| \left\{ Q ~\text{in supp. est.} : \mathbf X_o(Q, P) > 0 
			\right\} \right|, & |P| \le i_{\rm max} \\
			\left| \left\{ Q ~\text{in supp. est.} : \mathbf X_u(Q, P) > 0 
			\right\} \right|, & |P| > i_{\rm max} \\
		\end{cases}
	\]
	be the number of entries of $\mathbf g^*$ that depend on the decision variable 
	corresponding to $f_{i_{\rm max}}(P)$. We then choose the weight corresponding to 
	$g^*(P)$ according to $d(P) = \lambda a(P)^b$, where $\lambda > 0$ is a uniform 
	overall weight for the 1-norm term, and $b \in [0, 1)$ is some exponent. The exponent 
	should be non-negative to ensure that the weight is increasing in $a(P)$, but it 
	should also be fairly small, so that the weight's rate of change rapidly tapers off 
	for positive $a(P)$. We have found empirically that setting $b$ between 0.2 and 0.4 
	is generally a good choice. 
	
	\subsection{Putting Everything Together}
	
	\tikzstyle{block} = [rectangle, draw,
	text width=5em, text centered, rounded corners, minimum height=3em]
	\tikzstyle{line} = [draw, -latex']
	\tikzstyle{cloud} = [draw, ellipse, node distance=2cm, minimum height=2em]
	
	\begin{figure}
		\centering
		\begin{tikzpicture}[node distance = 2cm, auto]
		\node [block] (s1) {Stage 1};
		\node [block, right of=s1, node distance=3cm] (s2) {Stage 2};
		\node [block, right of=s2, node distance=3cm] (s3) {Stage 3};
		\node [block, above of=s2, node distance=2.5cm] (e) {$k$-statistics};
		\node [block, below of=s3, node distance=2cm] (f) {$\hat{\mathbf R}$};
		\node [cloud, above of=e, node distance=1.5cm] (d) {Path Delay Data};
		\node [cloud, below of=s2, node distance=2cm] (p) {User Parameters};
		\path [line] (s1) -- node{$\mathcal B$} (s2);
		\path [line] (s2) -- node{$\mathbf X$} (s3); 
		\path [line] (e) -- node[left]{$\hat f_i(P), \text{Var}(\hat 
			f_i(P))$} (s3); 
		\path [line] (e) -- node[left]{$\texttt{Nonzero}(f_i(P))$} (s1);
		\path [line] (d) -- (e);
		\path [line] (s3) -- node{$\mathbf g^*$} (f); 
		\path [line] (p) -- node[left]{$\mathcal B_0, i_0, i_f, t$\;\;} (s1);
		\path [line] (p) -- node[left]{$s$} (s2);
		\path [line] (p) -- node[right]{\;\;$i_{\rm max}$} (s3);
		\end{tikzpicture}
		\caption{\rev{Diagram of the Sparse M\"obius Inference procedure.}}
		\label{fig:diagram}
	\end{figure}
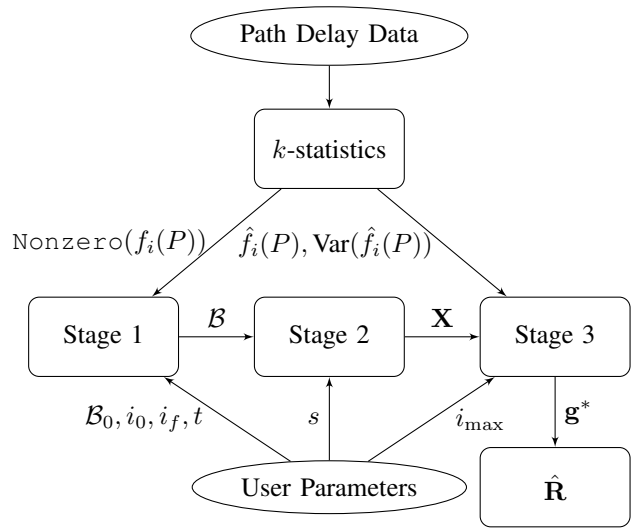
	
	For completeness, we now show how the three stages of the Sparse M\"obius Inference 
	procedure come together to form a data-to-routing-matrix pipeline. Figure 
	\ref{fig:diagram} depicts a diagram of this process.
	
	The user begins Stage 1 with an initial guess of the bounding topology $\mathcal B_0 
	\subseteq 2^{\monpaths}$ (either $\{\monpaths\}$ or maximal cliques of the graph 
	formed by nonzero covariances), an initial cumulant order $i_0$ (usually 2 or 3), a 
	final cumulant order $i_f$ (e.g., 4 or 5), and a threshold function $t$ (perhaps 
	using quantiles of the binomial distribution). Algorithm \ref{alg:bound} then 
	tightens the support estimate by setting $\mathcal B = 
	\texttt{BoundingTopology}(\mathcal B_0, i_0, i_f, t)$, using the path delay dataset 
	to evaluate $\texttt{Nonzero}(f_i(P))$ for orders $i = i_0, i_0 + 1, \dots, i_f$. 
	Then $\mathcal B$ is passed on to Stage 2.
	
	In the second stage, the user provides a size threshold $s$ for the ``hard'' sparsity 
	heuristic. In accordance with \eqref{eq:mobius-mod}, the modified M\"obius inversion 
	matrix $\mathbf X$ is constructed, considering only rows and columns of the matrix 
	corresponding to path sets in the support estimate of $\mathcal B$ that are either 
	directly in $\mathcal B$ or at most of size $s$. This matrix $\mathbf X$ is passed to 
	Stage 3.
	
	To begin the final stage, the user specifies a cumulant order $i_{\rm max}$ (e.g., 3, 
	4, or 5) and partitions the common cumulant vector and the matrix $\mathbf X$ 
	accordingly. For path sets of size at most $i_{\rm max}$, the path delay data is 
	once again used to estimate the common cumulants $\hat{\mathbf f_o}$ and the 
	variances $\mathbf \sigma^2$ of these estimates. Solving \eqref{eq:opt} yields a 
	filtered common cumulant vector $\mathbf f^*$, leading to a sparse estimate $\mathbf 
	g^* = \mathbf X \mathbf f^*$ of the exact cumulant vector. Finally, the routing 
	matrix estimate $\hat{\mathbf R}$ is constructed from the zero-nonzero pattern of 
	$\mathbf g^*$.  
 
 	\begin{figure*}
 		\centering
 		\includegraphics[width=0.9\linewidth]{./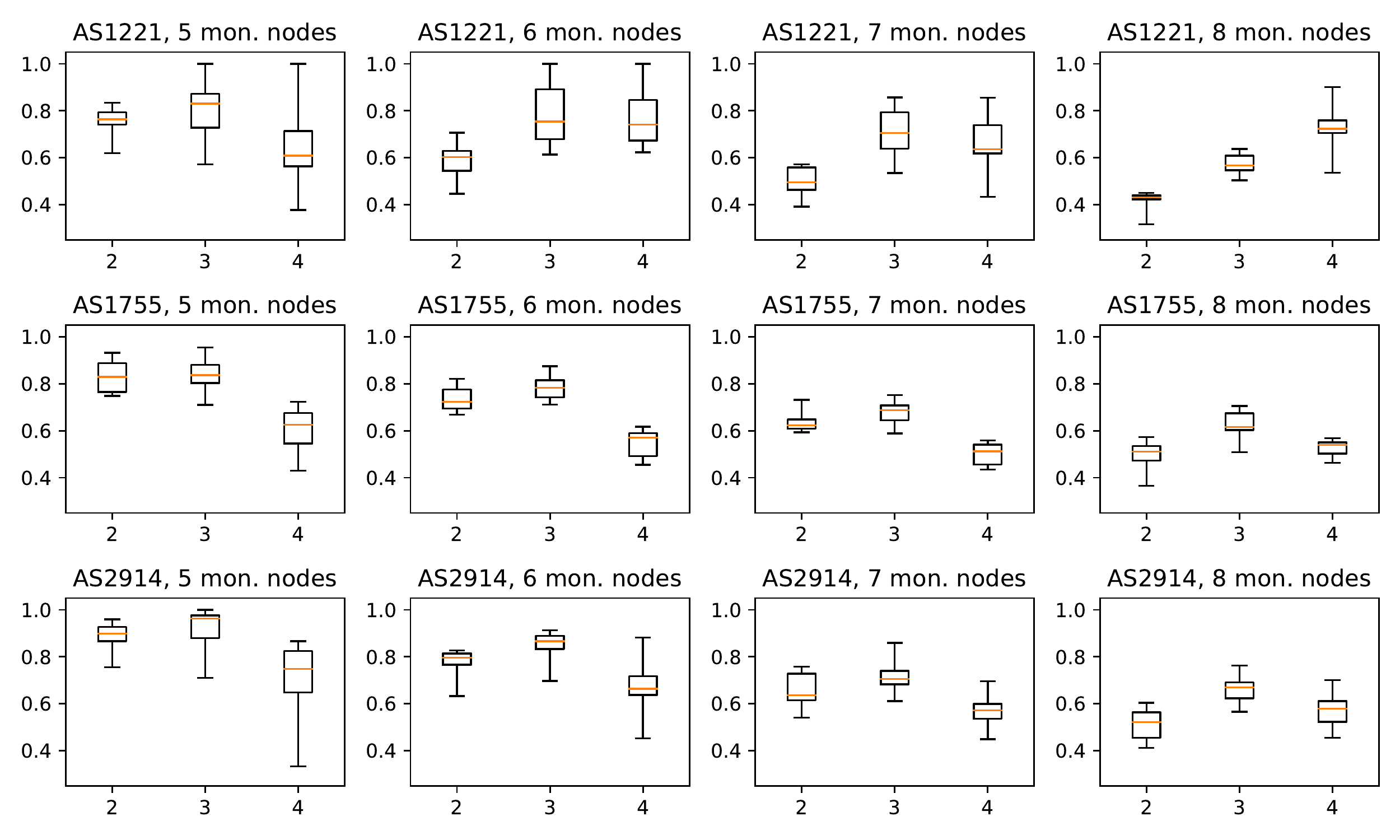}
 		\caption{\rev{Distributions of F1 scores of the routing matrix estimate for the 
 		120 
 		case studies, based on a sample of size 100,000. Plots in each row are based on 
 		the same underlying network, and plots in the same column have the same number of 
 		monitor nodes. The three boxes in each plot correspond to values $i_{\rm max} = 
 		2, 3, 4$ used for inference.}}
 		\label{fig:f1-scores}
 	\end{figure*}

 	\section{Results and Evaluation}
 	\label{sect:results}
 	
 	What follows is an abbreviated set of experimental results applying Sparse M\"obius 
 	Inference to many synthetic datasets. The full description of our methodology and 
 	results are contained in Appendix A (in the supplementary file).
 	
 	\paragraph*{Synthetic Datasets}
 	We created 120 synthetic datasets based on real ISP network topologies, provided by 
 	Rocketfuel \cite{NS-RM-DW:02}. We selected three networks within the Rocketfuel 
 	database with 
 	different sizes and densities (AS1221, AS1755, and AS2914). For each topology, we 
 	generated 40 synthetic datasets 
 	of path delays: 10 each for experiments with 5, 6, 7, and 8 monitor nodes. For 
 	each of these 40 case studies, the network links are assigned different gamma delay 
 	distributions, the $n_{\rm node}$ monitor nodes are selected at random, and the 
 	$n = \binom{n_{\rm node}}{2}$ monitor paths are chosen by computing the shortest path 
 	between each pair of monitor nodes. Then a large sample of the joint path delay 
 	distribution is recorded.
 	
 	\paragraph*{Sparsity of the Common and Exact Cumulants}
 	The Sparse Mo\"bius Inference procedure is based on the postulate that the vectors of 
 	common and exact cumulants are both sparse. This assumption holds up extremely well 
 	in our case studies; with $n=28$ paths, for example, 99.99\% to 99.999\% of 
 	the entries of the common cumulant vector are zero. 
 	
 	\paragraph*{Evaluating the Bounding Topology}
 	The first stage of Sparse M\"obius Inference uses low-order cumulants to estimate 
 	$\supp(\mathbf f_i)$. Our results indicate that Algorithm \ref{alg:bound} is very 
 	effective at finding a bounding topology with a tight support estimate. For almost 
 	all of the 120 case studies, third-order cumulants $(i_f = 3)$ with a sample size $N 
 	= 50,000$ or larger are sufficient to construct a bounding topology that predicts 
 	$\supp(\mathbf f_i)$ with an F1 score of 1.0 (or extremely close to 1.0). 

 	\paragraph*{Evaluating the Estimated Routing Matrix}
 	Next, we evaluate the performance of Sparse M\"obius Inference end-to-end. We ran 
 	stages 2 and 3 to get an estimate of $\hat{\mathbf R}$ for each case study and 
 	various sample sizes, using as input to Stage 2 the bounding topologies computed with 
 	$i_f = 4$ from the same sample. The hyperparameters of the lasso heuristic ($\lambda$ 
 	and the exponent $b$) are tuned separately for each underlying network and number of 
 	monitor paths. Figure \ref{fig:f1-scores} shows the F1 scores that we obtained for 
 	each of the 120 case studies. For all underlying networks, the performance tends to 
 	degrade with the number of monitor paths, and the best estimate is usually obtained 
 	using third-order $k$-statistics ($i_{\rm max} = 3$). 
 	
 	\paragraph*{Evaluating the Lasso Heuristic}
 	We also evaluated the lasso heuristic in Stage 3 using ground-truth cumulants. For 
 	these experiments, we borrowed the bounding topologies computed from the $N = 
 	100,000$ sample with $i_f = 4$, but instead of populating the $\hat{\mathbf 
 	f}_o$ vector in \eqref{eq:opt} with $k$-statistics computed from this sample, we used 
 	the true common cumulants. These values have no uncertainty, so we removed the 
 	quadratic penalty from $J(\mathbf f_o, \mathbf f_u)$, instead constraining $\mathbf 
 	f_o = \hat{\mathbf f}_o$. Again, the hyperparameters $\lambda$ and $b$ are tuned 
 	separately for each network and number of monitor paths.
 	Figure \ref{fig:f1-scores-inf} plots the distribution of the resulting F1 scores. For 
 	smaller (5 or 6 monitor) scenarios, the lasso heuristic is typically capable of 100\% 
 	accurate routing matrix reconstruction. For larger scenarios, the heuristic requires 
 	up to third-order cumulants for completely accurate inference. 
 	
 	\begin{figure}
 		\includegraphics[width=\linewidth]{./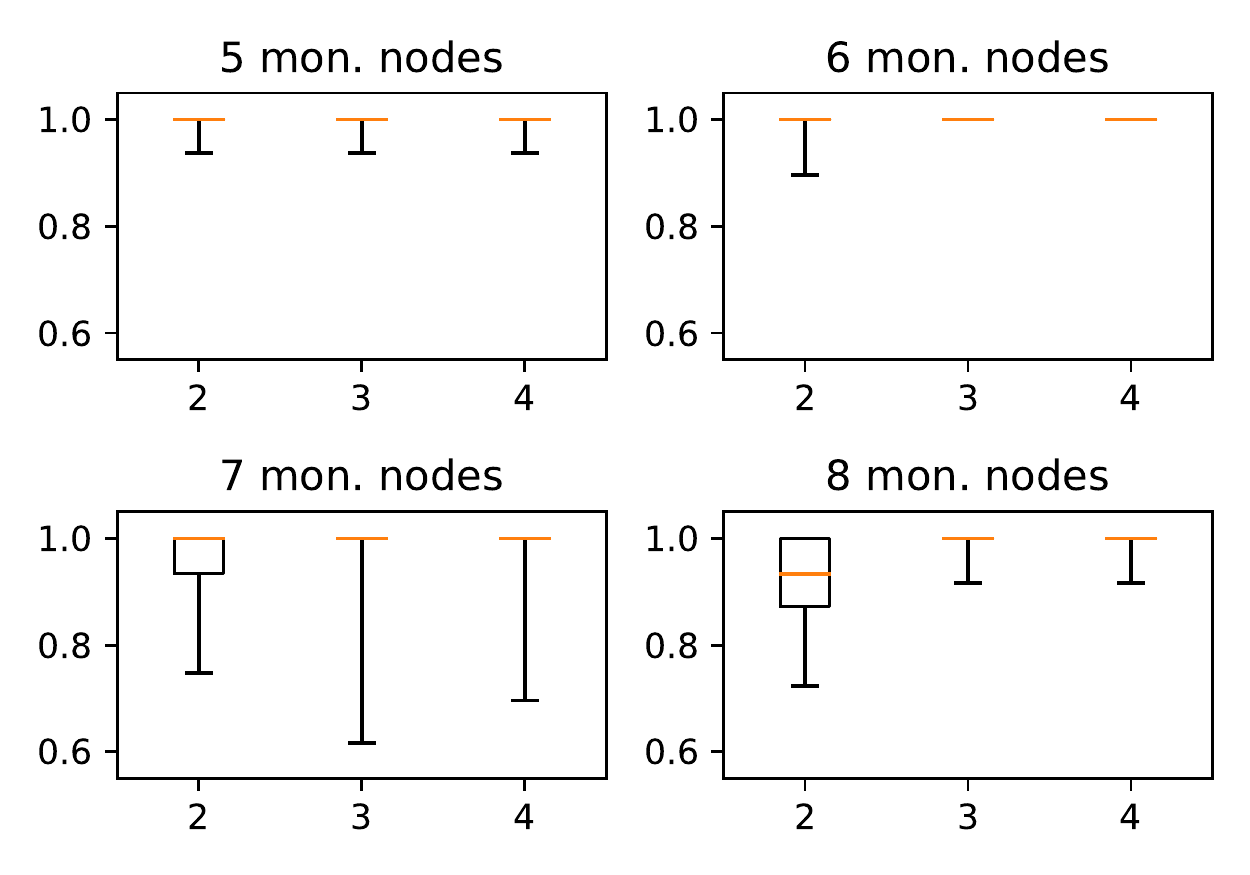}
 		\caption{\rev{Distributions of F1 scores of the routing matrix estimate based on 
 		ground-truth cumulants (instead of $k$-statistics). Each plot corresponds to a 
 		particular number of monitor paths, and the results are aggregated across case 
 		studies from the 3 underlying networks. The three boxes in each plot correspond 
 		to values $i_{\rm max} = 2, 3, 4$ used for inference.}}
 		\label{fig:f1-scores-inf}
 	\end{figure}
 	
 	\paragraph*{Discussion}
 	Our results paint a mixed but optimistic picture for the Sparse M\"obius Inference 
 	procedure. Admittedly, higher F1 scores from the $N = 100,000$ sample would be 
 	desirable before the method is deployed in real-world applications. But the two key 
 	components of the procedure---estimating $\supp(\mathbf f_i)$ from 
 	low-order $k$-statistics, and using the lasso sparsity heuristic to infer $\mathbf R$ 
 	without using the high-order cumulants required by MIA---worked very well in 
 	isolation, achieving 100\% accuracy in most scenarios. 
 	
}
\rev{
 	
 	\section{Conclusion}
 	We have provided a novel tomographic approach to routing topology inference from path 
 	delay data, without making any assumptions on routing behavior. Through MIA, we have 
 	provided a theoretical framework for extending the use of second-order statistics in 
 	network tomography toward higher-order statistics. Furthermore, we have introduced 
 	the Sparse M\"obius Inference procedure, which implements a heuristic and more 
 	practical variant of MIA. We have extensively studied the performance of Sparse 
 	M\"obius Inference using many synthetic case studies. While more work is needed to 
 	improve the filtering of noisy $k$-statistics, our results indicate that 
 	the Sparse M\"obius Inference can serve as a solid foundation for future 
 	improvements. }

	\bibliographystyle{IEEEtran}
	\bibliography{alias,Main,FB} 
	

	\end{document}


\title{Appendices to ``Topology Inference with Multivariate Cumulants:
	The M\"obius Inference Algorithm''}
	\author{Kevin D. Smith, Saber Jafarpour, Ananthram Swami, and Francesco Bullo}
	
	\maketitle
	
	\appendix
	\section{Extended Results and Evaluation}
	
	\subsection{Synthetic Datasets}
	
	To evaluate the Sparse M\"obius Inference procedure, we created several synthetic 
	datasets based on real ISP network topologies, provided by Rocketfuel 
	\cite{NS-RM-DW:02}. We selected three networks within the Rocketfuel database with 
	different sizes and densities: AS1221 (the Telstra network in Australia), AS1755 (the 
	EBONE network in Europe), and AS2914 (the Verio network in the United States). Table 
	\ref{table:nets} contains some summary statistics of these networks. AS1755 is the 
	least ``dense,'' in that it has the smallest average degree and the smallest 
	clustering coefficient. AS2914, on the other hand, has by far the largest average 
	degree (and a similar clustering coefficient to AS1221).
	
	\begin{table}
		\centering
		\begin{tabular}{lrrrr}
			Network & Nodes & Links & Avg. Degree & Cluster Coef \\
			\hline
			AS1221 & 318 & 758 & 4.77 & 0.28 \\
			AS1755 & 172 & 381 & 4.43 & 0.20 \\
			AS2914 & 960 & 2821 & 5.88 & 0.25
		\end{tabular}
		\caption{Topological properties of the networks, including number of nodes and 
		links (edges), the average node degree, and the graph clustering 
		coefficient.}
		\label{table:nets}
	\end{table}
	
	\begin{figure}
		\centering
		\includegraphics[clip,trim={0.5in 0 0 0},width=0.5\linewidth]{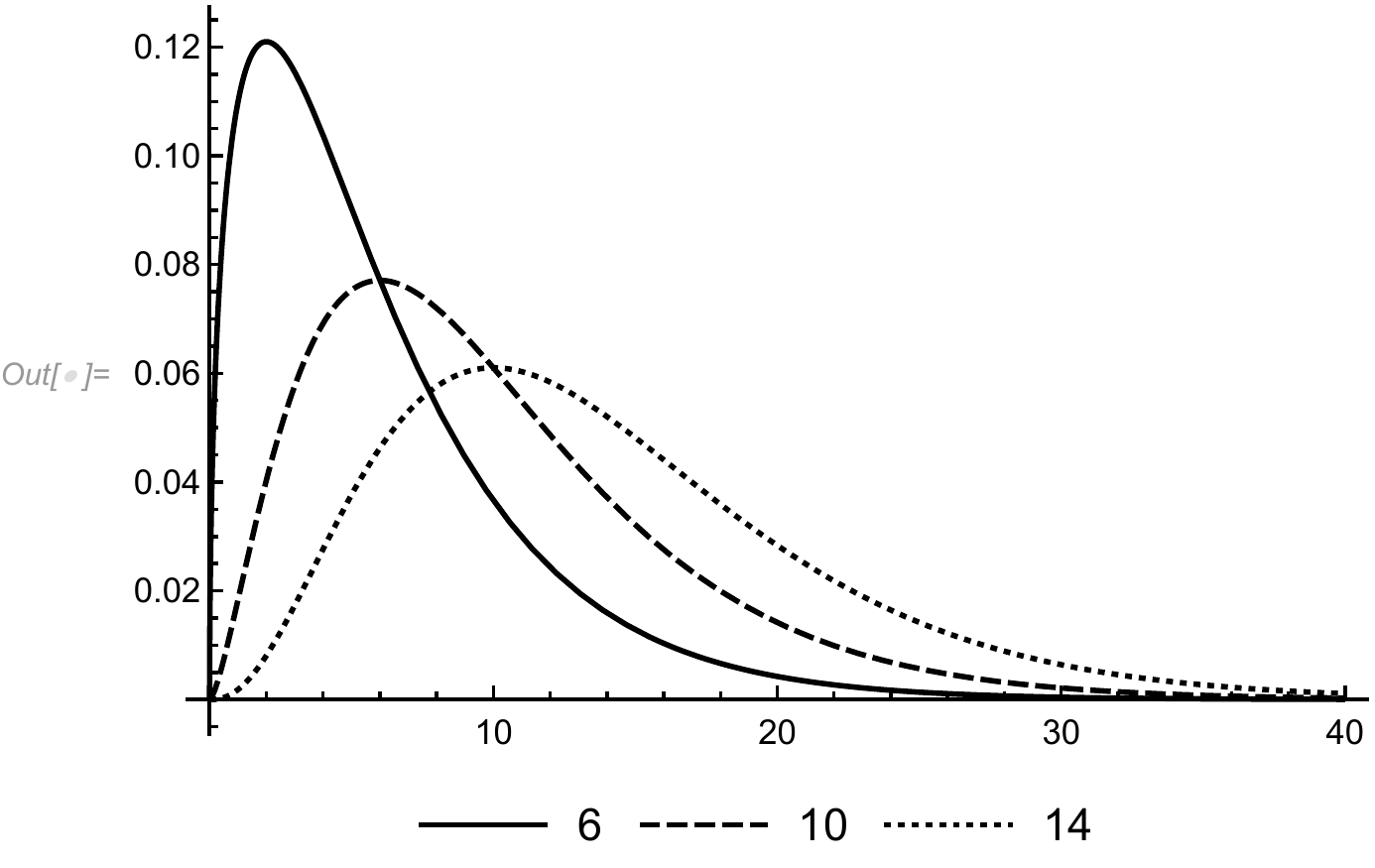}
		\caption{Sample link distributions with delay means of 6 ms, 10 ms, and 14 ms.}
		\label{fig:link-dist}
	\end{figure}
	
	For each of these 3 underlying networks, we generated 40 synthetic datasets 
	of path delays: 10 each for experiments with 5, 6, 7, and 8 monitor nodes, resulting 
	in a total of 120 case studies. In each case study, the network links are assigned 
	different delay distributions. The mean delay $\mu$ on each link is chosen from a 
	normal distribution with a mean of 10 ms and a standard deviation of 2 ms. Based on 
	$\mu$, the link delay is assigned a gamma distribution with shape parameter $\alpha = 
	\frac \mu 4$ and rate 
	parameter $\beta = \frac 1 4$. Figure \ref{fig:link-dist} plots some samples of these 
	delay distributions. Then for each of these 120 cases studies, the $n$ monitor nodes 
	are selected at random, and the $\binom{n}{2}$ monitor paths are chosen by computing 
	the shortest path between each pair of monitor nodes (based on the mean link delays).
	
	Finally, for each of the 120 case studies, we generate three different samples of the 
	joint path delay distribution with sizes 10,000, 50,000, and 100,000. For each of 
	these three values of $N$, the sample is bootstrapped into 50 different 
	``re-samples'' of size $N$ (by randomly sampling the original $N$ points with 
	replacement). Note that bootstrapping does not introduce new data; rather, the 50 
	different re-samples of the original $N$ points allow us to empirically estimate the 
	distribution of $k$-statistics. (We tried using 100 re-samples instead of 50 as well, 
	but only led to marginal improvement with significantly greater runtime.)
	
	\subsection{Sparsity of the Common and Exact Cumulants}
	
	\begin{figure}
		\centering
		\includegraphics{./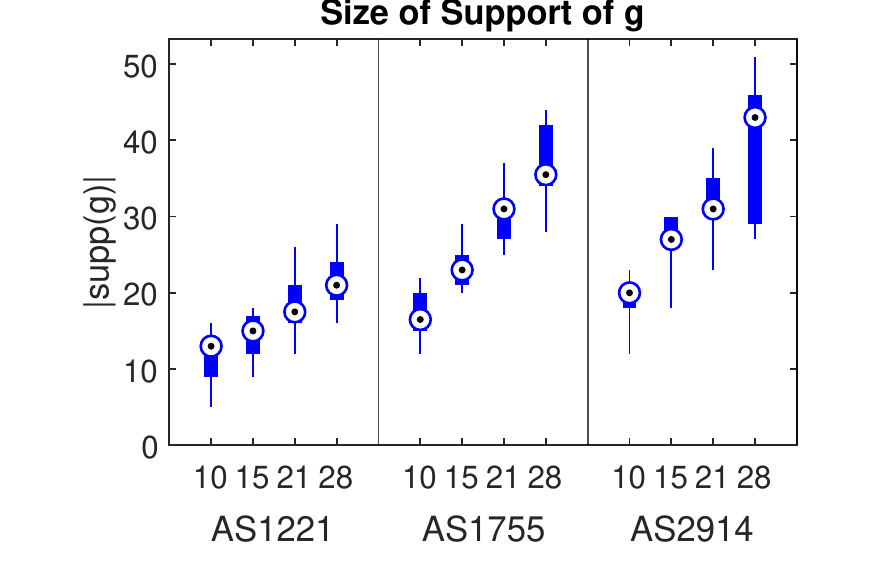}%
		\includegraphics{./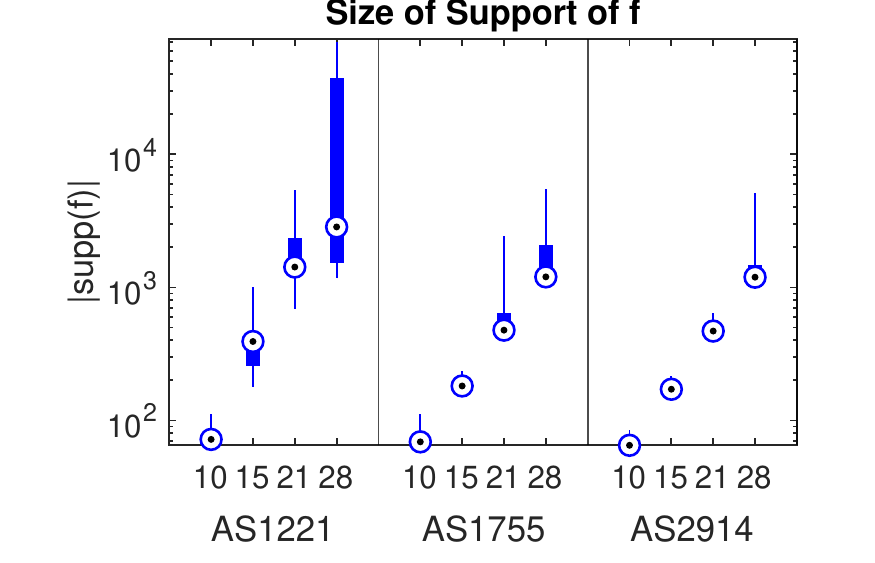}
		
		\includegraphics{./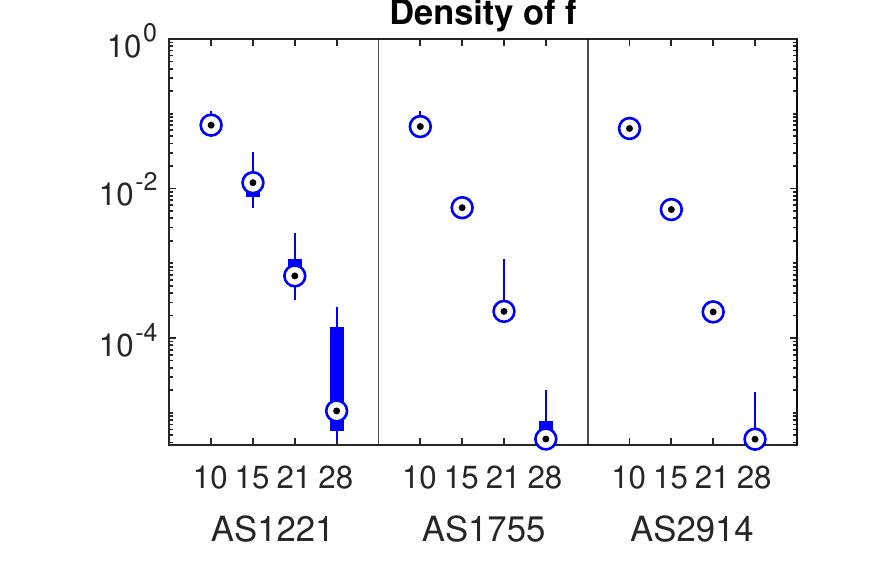}%
		\includegraphics{./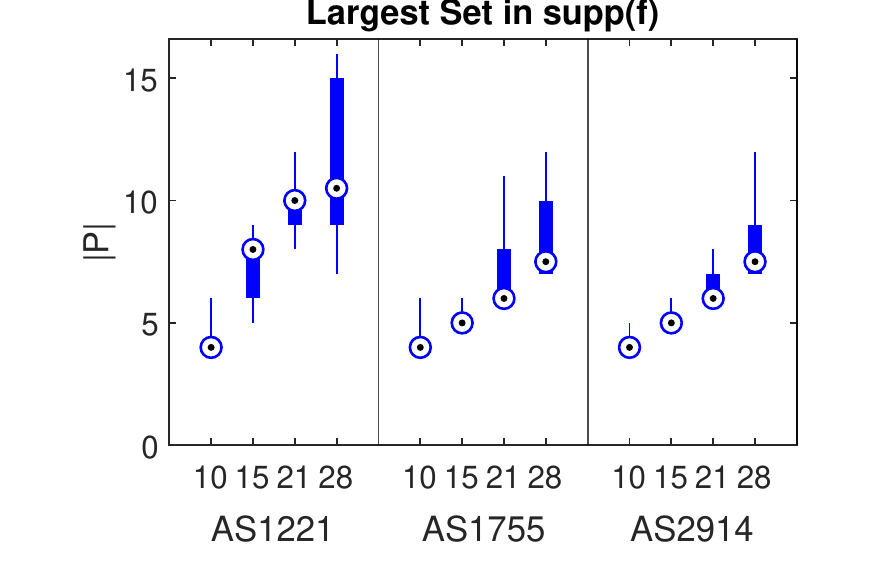}
		
		\caption{Sparsity metrics for the common and exact cumulant vectors. Each box 
			reflects the distribution across 10 case studies for each topology and number 
			of 
			monitor paths. Note that the size of the support of $\mathbf f_i$ (upper 
			right) 
			and the density of $\mathbf f_i$ (lower left) are plotted with a log scale on 
			the 
			vertical axis. The horizontal axis is \textit{not} drawn with proportional 
			spacing between the numbers of monitor paths.}
		\label{fig:sparsity}
	\end{figure}
	
	The Sparse Mo\"bius Inference procedure is based on the postulate that the vectors of 
	common and exact cumulants are both sparse, so we ought to examine how well this 
	premise holds up in our empirical case studies. Figure \ref{fig:sparsity} plots the 
	following four sparsity metrics for the 120 case studies: 
	\begin{itemize}
		\item Size of $\supp(\mathbf g_i)$, i.e., the number of path sets $P \subseteq 
		\monpaths$ for which $g_i(P) \ne 0$. This is equivalent to the number of logical 
		links.
		\item Size of $\supp(\mathbf f_i)$, i.e., the number of path sets $P \subseteq 
		\monpaths$ for which $f_i(P) \ne 0$.
		\item Density of $\supp(\mathbf f_i)$, i.e., the fraction $(2^n - 1)^{-1} 
		|\supp(\mathbf f_i)|$. 
		\item Largest set in $\supp(\mathbf f_i)$, i.e., the size $|P|$ of the largest 
		set $P \subseteq \monpaths$ for which $f_i(P) \ne 0$.
	\end{itemize} 
	Unsurprisingly, the number of logical links $|\supp(\mathbf g_i)|$ tends to be small 
	compared to the number of path sets $2^n - 1$, and the number of links increases in a 
	roughly linear manner with the number of monitor paths. Figure \ref{fig:sparsity} 
	(upper left) also shows that the number of links utilized by $n$ monitor paths 
	depends heavily on the underlying network topology. AS2914 leads to the largest 
	numbers of logical links, which is to be expected, since AS2914 is the largest of the 
	three networks. Remarkably, AS1221 has by far the smallest numbers of links, even 
	though AS1221 has the middle number of links and average degree and the largest 
	clustering coefficient. 
	
	Figure \ref{fig:sparsity} (upper right and lower left) also indicate that the common 
	cumulant vector is very sparse. While the number of nonzero entries increases roughly 
	exponentially in $n$, the fraction of the $2^n - 1$ entries which are nonzero also 
	decreases rapidly in $n$, approaching 0.001\% density for scenarios with 28 monitor 
	paths. Evidently the first stage of the Sparse M\"obius Inference procedure is very 
	well justified in trying to isolate the support of $\mathbf f_i$, as this stage (if 
	it is accurate) will eliminate the vast majority of the common cumulant entries from 
	the problem, thereby greatly reducing the number of $k$-statistics that need to be 
	evaluated. We should point out that this measurement task can still be nontrivial: as 
	the upper right plot indicates, the number of entries in the common cumulant vector 
	was close to 100,000 in the ``worst'' of our case studies. This provides some 
	additional motivation for the final stage of the Sparse M\"obius Inference procedure, 
	which leaves many entries of the common cumulant un-measured and infers them instead 
	through the lasso heuristic. 
	
	Finally, Figure \ref{fig:sparsity} (lower right) depicts how large the path sets in 
	$\supp(\mathbf f_i)$ can get. One way to interpret this sparsity metric is how 
	``crowded'' by monitor paths the links can get. AS1221 tends to have the largest path 
	sets with nonzero common cumulants. One scenario with $n=28$ even has a path set of 
	size 16 in $\supp(\mathbf f_i)$, which implies that 16 of the 28 monitor paths in 
	this scenario all use one particular link. This ``crowding'' of links in AS1221 
	naturally complements the small sizes of $|\supp(\mathbf g_i)|$ in the upper left 
	plot: since the monitor paths traverse a smaller number of links in this network, it 
	makes sense that the few links that are utilized will have to withstand heaver 
	utilization by the monitor paths. 
	
	\subsection{Evaluating the Bounding Topology}
	
	The first stage of Sparse M\"obius Inference (described in Section V.A) uses 
	low-order $k$-statistics to estimate $\supp(\mathbf f_i)$, by constructing a bounding 
	topology $\mathcal B$. For each of the 120 case studies and each of the 3 sample 
	sizes, we ran Algorithm 3 to construct a bounding topology. We implemented the 
	$\texttt{Nonzero}(f_i(P))$ hypothesis test with the following procedure: (i) compute 
	50 estimates of $\hat f_i(P)$ by applying (12) to each of the bootstrapped 
	re-samples; (ii) estimate a $p$-value for the null hypothesis that $f_i(P) = 0$ by 
	applying a one-sample Student's $t$-test to the 50 estimates of $\hat f_i(P)$; and 
	(iii) deciding $\texttt{Nonzero}(f_i(P))$ is true if and only if that $p$-value is 
	below a pre-determined threshold, $\alpha$. The values used for these thresholds are 
	reported in Table \ref{table:bounding-topo-params}.
	
	\begin{table}
		\centering
		\begin{tabular}{r|r|rrr|rrr}
			& \multicolumn{1}{c}{$i = 2$} & \multicolumn{3}{c}{$i = 3$} & 
			\multicolumn{3}{c}{$i = 4$} \\
			$N$ & $\alpha$ & $\alpha$ & $\beta$ & $\gamma$ & $\alpha$ & $\beta$ & 
			$\gamma$ \\ 
			\hline
			10,000 & $10^{-20}$ & $10^{-10}$ & 0.1 & 0.15 & $10^{-2}$ & 0.25 & 0.3 \\ 
			50,000 & $10^{-40}$ & $10^{-30}$ & 0.05 & 0.15 & $10^{-5}$ & 0.05 & 0.15 \\
			100,000 & $10^{-40}$ & $10^{-30}$ & 0.05 & 0.15 & $10^{-10}$ & 0.05 & 0.15 \\
		\end{tabular}
		\caption{Parameters used for estimating the bounding topology: $\alpha$ (the 
		$p$-value threshold for the $\texttt{Nonzero}$ hypothesis test), $\beta$ (an 
		estimate for the Type II error of the $\texttt{Nonzero}$ test), and $\gamma$ (a 
		desired upper bound for the probability of falsely removing a size-$i$ subset of 
		$\monpaths$ from the support estimate). Different parameters are used for 
		different sample sizes $N$, corresponding to each row. Different parameters are 
		also used for different $k$-statistic orders $i$, corresponding to each group of 
		columns.} 
		\label{table:bounding-topo-params}
	\end{table}
	
	The initial bounding topologies $\mathcal B_0$ were constructed using second-order 
	$k$-statistics (covariances) and the maximal clique approach described in Section 
	V.A. We recorded these initial values of $\mathcal B_0$ to assess the accuracy of 
	their support estimates, and then we applied Algorithm 2 to tighten the bounding 
	topology using $k$-statistics of order $i = 3$. We recorded these ``third-order'' 
	estimates as well, and then applied Algorithm 2 one last time with $k$-statistics of 
	order $i = 4$. Again, the final bounding topology is recorded and the accuracy of its 
	support estimate is assessed.
	
	The threshold function $t$ that we supplied to Algorithm 2 was constructed from the 
	quantile approach described in Section V.A. Recall the quantile approach has two 
	parameters: $\beta$, an estimate for the Type II error rate of 
	$\texttt{Nonzero}(f_i(P))$; and $\gamma$, an upper bound for the probability that a 
	size-$i$ path set is incorrectly removed from the support estimate. The optimal 
	values of these parameters depend on the $k$-statistic order and sample size, since 
	higher-order $k$-statistics tend to be less accurate, and larger samples lead to 
	lower variance of the $k$-statistic estimates. We tuned these parameters to 
	the values reported in Table \ref{table:bounding-topo-params}. 

	To evaluate the performance of this stage, we study how the support estimate after 
	each order $i$ (2, 3, and 4) compares to the ground-truth support of $\mathbf f_i$, 
	as well as how this performance scales with the size of the path delay sample. We 
	assess the accuracy of the support estimate using two standard metrics for binary 
	classifiers: \textit{precision}, i.e., the 
	fraction of path sets in the support estimate that truly belong to 
	$\supp(\mathbf f_i)$; and \textit{recall}, the fraction of $\supp(\mathbf f_i)$ that 
	is in the support estimate. For good performance of the first stage, the recall 
	should be very close to one (so that non-zero entries of $\mathbf f_i$ are not 
	ignored in the next two stages), and the precision should approach one as the support 
	estimate is tightened with successive orders $i$. 
	
	Figure \ref{fig:bounding-topo-prec-full} shows the precision of the support estimate 
	after each successive round of tightening, while Figure 
	\ref{fig:bounding-topo-rec-full} shows the corresponding recall. For the vast 
	majority of the 120 case studies, using $N = 50,000$ samples and a $k$-statistic 
	order up to $i = 3$ is enough to get a 100\% accurate support estimate. There is also 
	very little difference in the results between using the $N = 50,000$ and $N = 
	100,000$, indicating that a sample size of 50,000 is usually enough to determine 
	whether or not $f_i(P) = 0$ for orders up to $i = 4$. Overall, the results show that 
	Stage 1 of Sparse M\"obius Inference is highly successful in identifying the 
	collection of path sets with a nonzero common cumulant.
	
	\begin{figure}
		\centering
		\includegraphics{./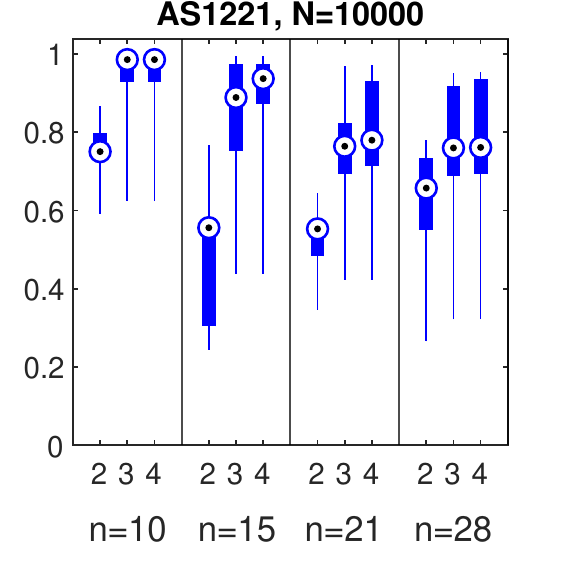}%
		\includegraphics{./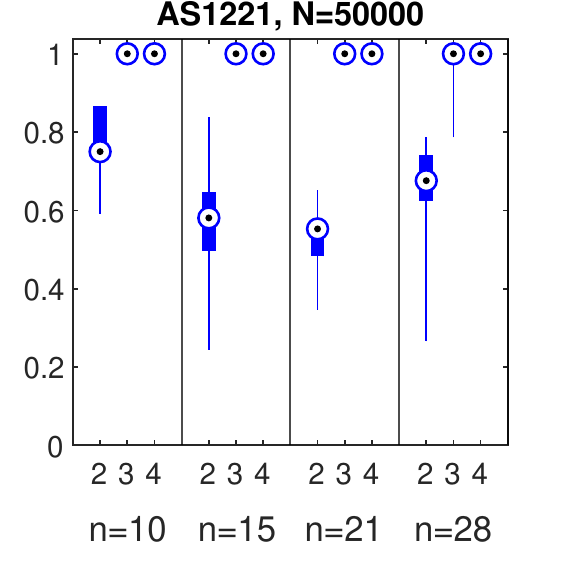}%
		\includegraphics{./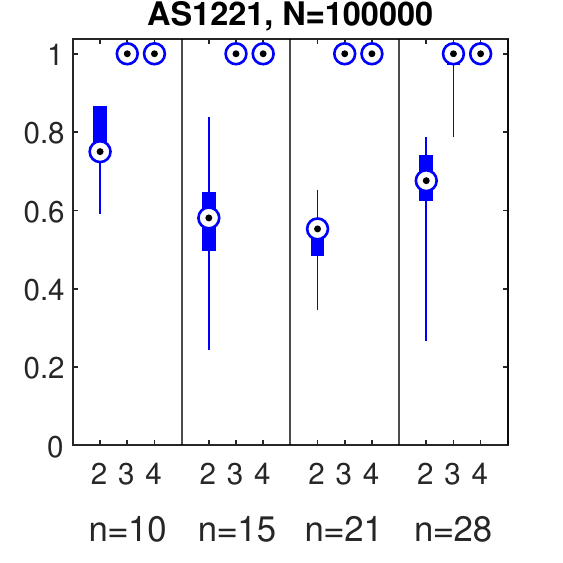}
		\includegraphics{./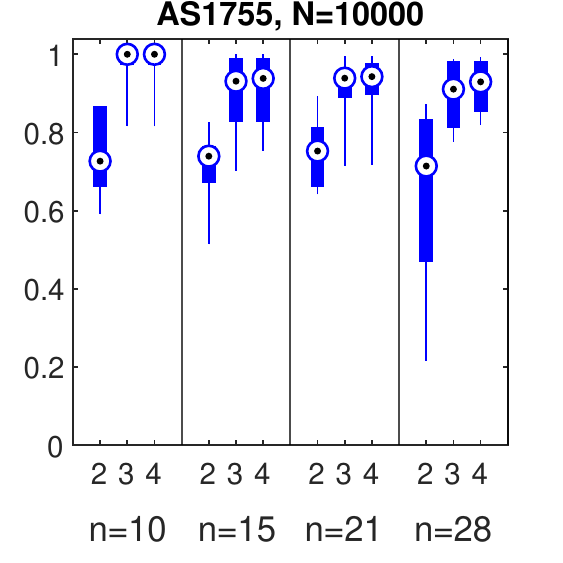}%
		\includegraphics{./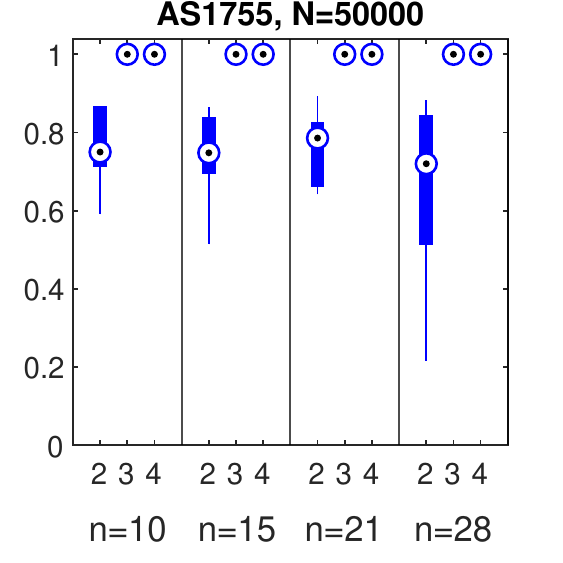}%
		\includegraphics{./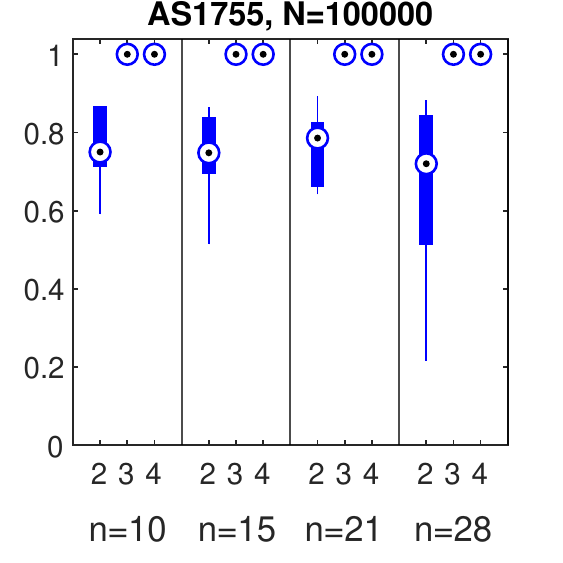}
		\includegraphics{./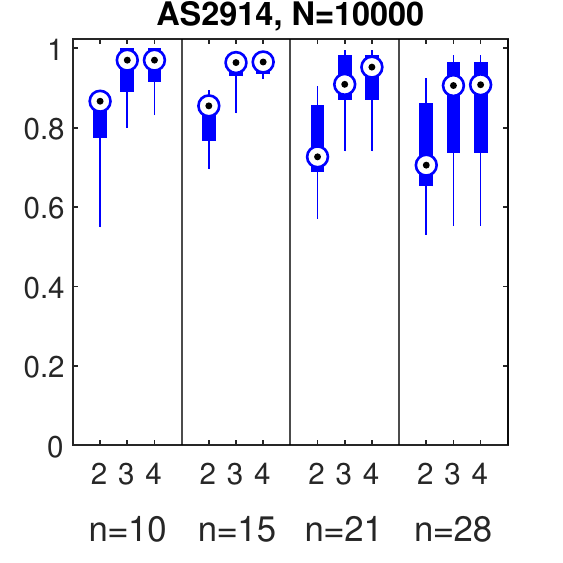}%
		\includegraphics{./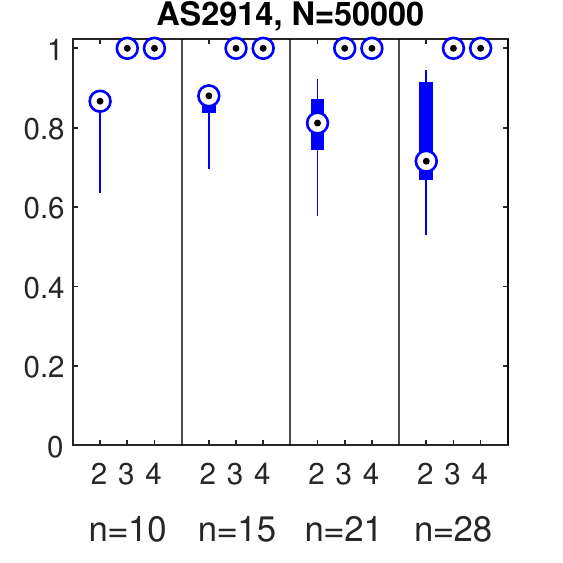}%
		\includegraphics{./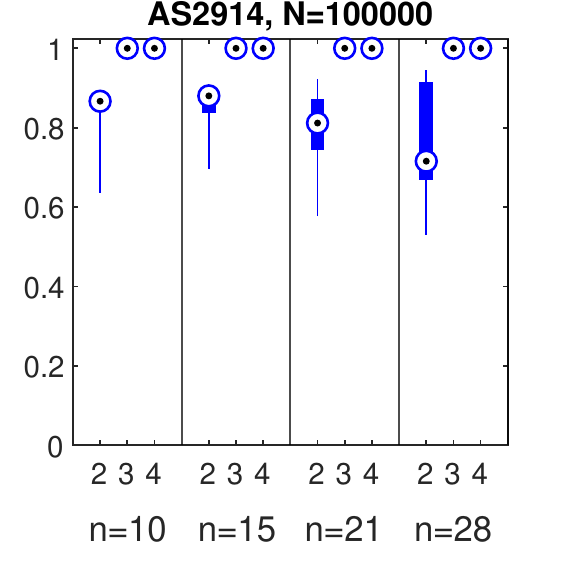}
		\caption{Precisions of the bounding topology estimates. Each plot corresponds to 
		a particular underlying network and sample size. Plots are sub-divided by the 
		number of monitoring paths $n$. For each network and for each $n$, samples are 
		drawn from 10 randomly-initiated tomography scenarios, and the bounding topology 
		is computed using cumulant orders 2, 3, and 4. The distributions of precisions 
		for these 10 bounding topologies are depicted using a box and whisker plot. 
		Circular markers represent the median.}
		\label{fig:bounding-topo-prec-full}
	\end{figure}

	\begin{figure}
		\centering
		\includegraphics{./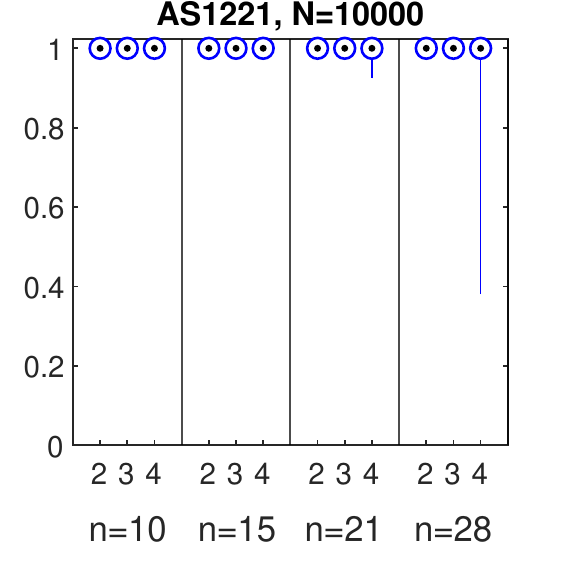}%
		\includegraphics{./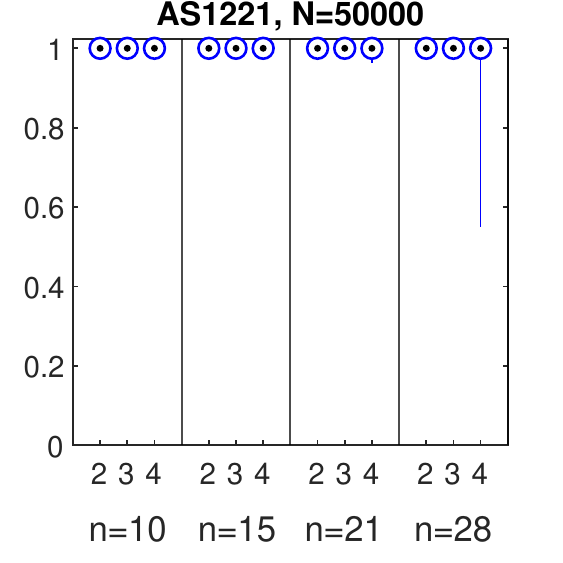}%
		\includegraphics{./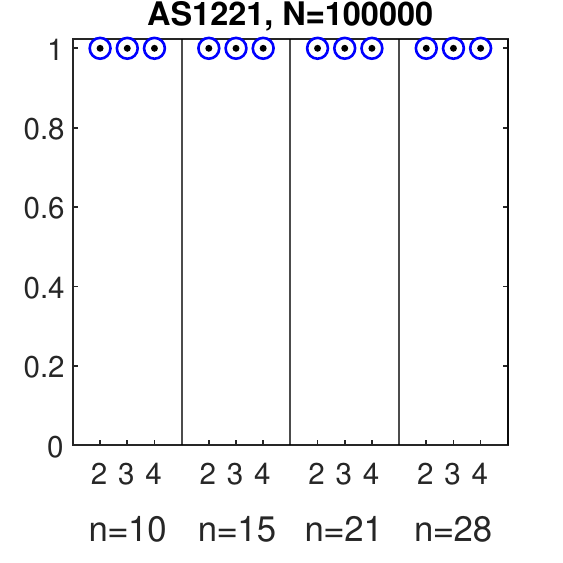}
		\includegraphics{./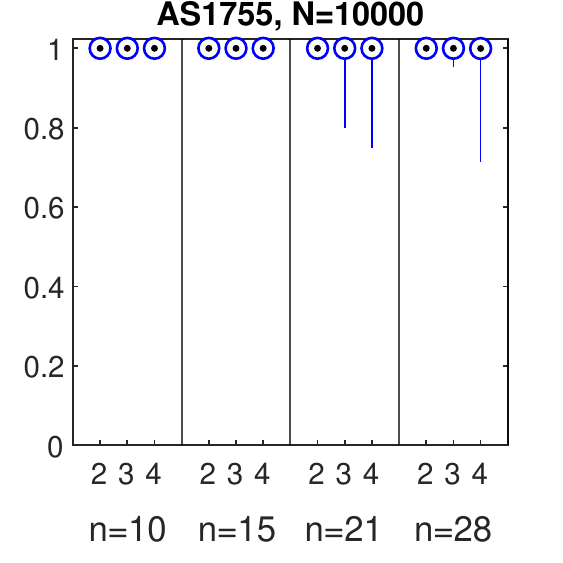}%
		\includegraphics{./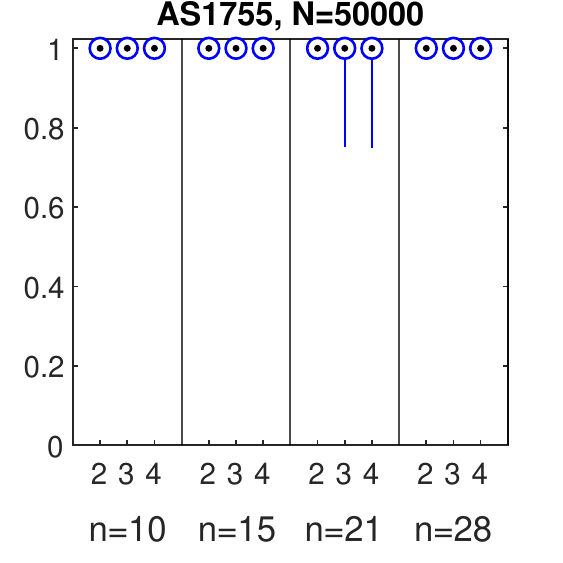}%
		\includegraphics{./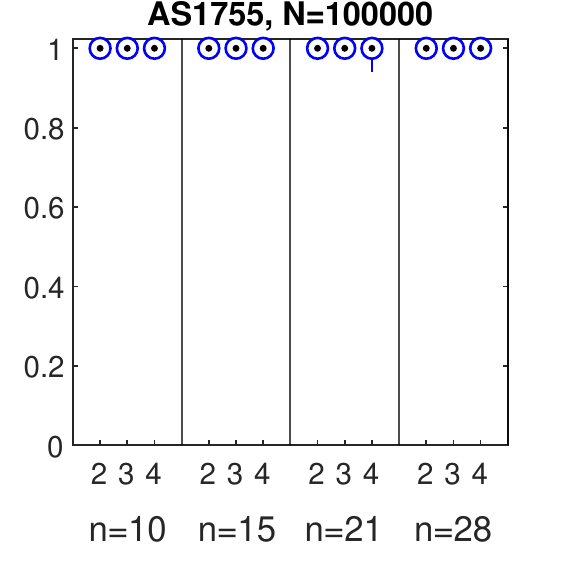}
		\includegraphics{./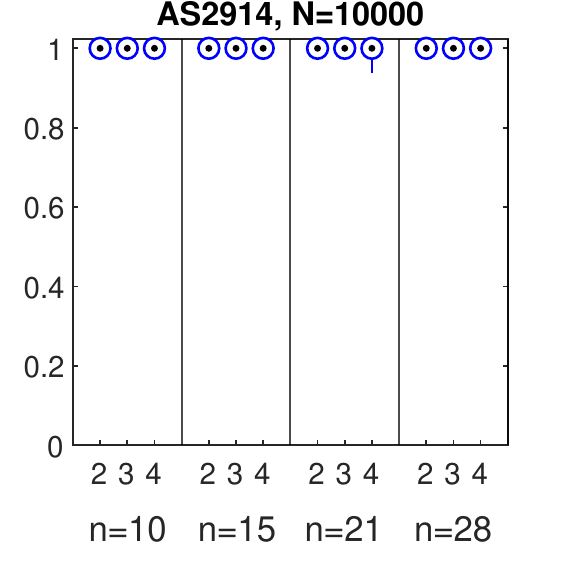}%
		\includegraphics{./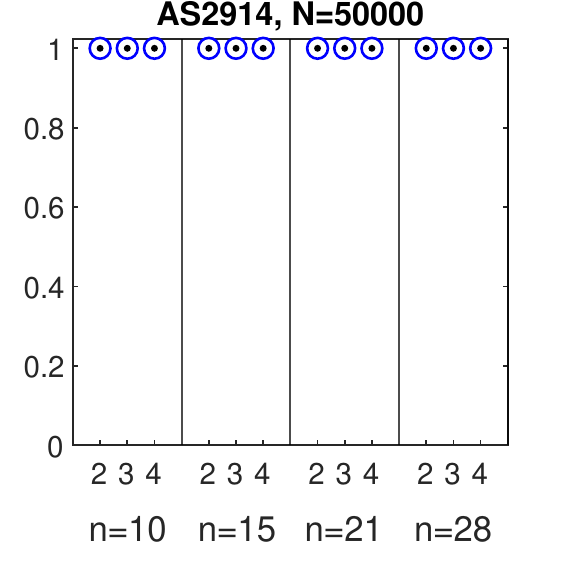}%
		\includegraphics{./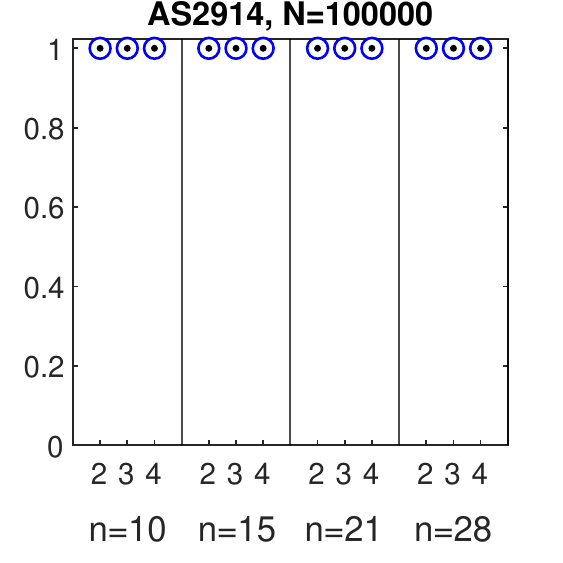}
		\caption{Recalls of the bounding topology estimates.}
		\label{fig:bounding-topo-rec-full}
	\end{figure}

	\subsection{Evaluating the Estimated Routing Matrix}
	
	Next, we evaluate the performance of Sparse M\"obius Inference end-to-end. We ran 
	stages 2 and 3 to get an estimate of $\hat{\mathbf R}$ for each case study and from 
	each of the three differently-sized samples, using as input to Stage 2 the bounding 
	topologies computed with $i_f = 4$ from the same sample. We tuned the hyperparameters 
	of the lasso heuristic ($\lambda$ and the exponent $b$) separately for each 
	underlying network, number of monitor paths, and value of $i_{\rm max}$. Tuning was 
	done by a grid search over values of $b$ from 0 to 1 in increments of 0.1, and values 
	of $\lambda$ from 0 to 4 in increments of 0.2. We then selected the pair of 
	parameters resulting in the highest average F1 score across the 10 cases studies with 
	the given underlying network and monitor paths, when evaluated with the given $i_{\rm 
	max}$ on $N = 100,000$. 
	
	
	To assess the accuracy of an estimated routing matrix $\hat{\mathbf R}$ compared to 
	the ground truth
	$\mathbf R$, we computed precision and recall in the following manner: precision is 
	the fraction of the columns of $\hat{\mathbf R}$ that are also 
	columns of $\mathbf R$, and recall is defined \textit{vice versa}. Then we combined 
	these two metrics into a single \textit{F1 score}, which is the geometric mean of 
	precision and recall. Figures \ref{fig:f1-100000}, \ref{fig:f1-50000}, and 
	\ref{fig:f1-10000} report the distributions of F1 scores that we obtained based on 
	samples of size 100,000, 50,000, and 10,000, respectively.
	
	We now examine how the performance of the inference depends on particular parameters:
	
	\paragraph*{Choice of $i_{\rm max}$}
	
	For most of our case studies with $N = 50,000$ or $N = 100,000$, the optimal choice 
	of $k$-statistic order is simply $i_{\rm max} = 3$. This is 
	likely because the third-order statistics contain more information than the 
	second-order statistics, but the fourth-order statistics were too noisy for the 
	solution of the lasso optimization problem to settle close enough to the true 
	cumulant values. (This is probably also the same reason why $i_{\rm max} = 2$ yields 
	the best performance when $N =10,000$.)
	
	The 8-monitor-node cases 
	from AS1221 are a notable exception; in most of these scenarios, $i_{\rm max} = 4$ 
	leads to the best performance. The likely 
	reason for this anomaly is visible in Figure \ref{fig:sparsity}: the AS1221 network, 
	with $n = 28$ monitor paths (corresponding to 8 monitor nodes), leads to the largest 
	size of $\supp(\mathbf f_i)$ and the largest path sets in $\supp(\mathbf f_i)$ among 
	all of the case studies. Thus, it is to be expected that common cumulant estimates 
	from more and larger path sets are necessary to identify which path sets in 
	$\supp(\mathbf f_i)$ also have a nonzero exact cumulant. These results suggest that 
	higher choices of $i_{\rm max}$ are needed when there is a greater amount of link 
	sharing between monitor paths, and in these cases, larger samples are needed to 
	ensure that accurate cumulant estimates are possible with limited uncertainty.
	
	\paragraph*{Choice of Sample Size}
	
	There were a few cases in which $N = 10,000$ were enough for decent performance 
	(AS1755 with 5 monitor nodes, and AS2914 with 5 or 6 monitor nodes). In all of these 
	cases, $i_{\rm max} = 2$ was also the optimal $k$-statistic order, indicating that 
	the small sample size is only sufficient when covariances alone contain nearly enough 
	information to reconstruct $\mathbf R$. Unfortunately, the sample size needs to be at 
	least 5 times larger in most other cases, so as to allow for an accurate estimate of 
	third-order cumulants. It is also interesting to observe that doubling the sample 
	size from 50,000 to 100,000 does not lead to significant improvements in the accuracy 
	of $\hat{\mathbf R}$.   
	
	\begin{figure}
		\centering
		{\large F1 Scores of $\mathbf{\hat R}$ from $N = 100,000$}
		\includegraphics[width=\linewidth]{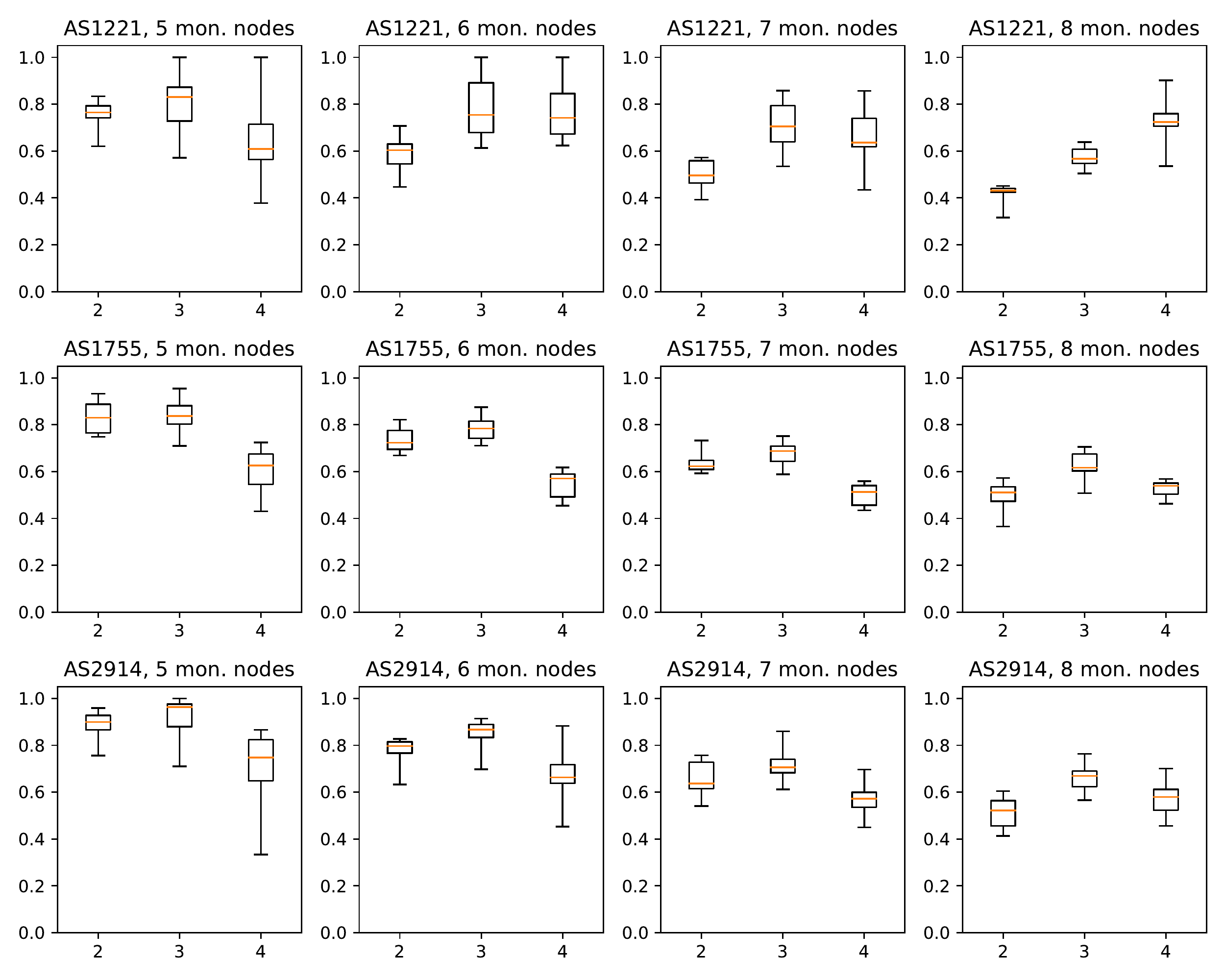}
		\caption{Distributions of F1 scores of the routing matrix estimate for the 120 
		case studies, based on a sample of size 100,000. Plots in each row are based on 
		the same underlying network, and plots in the same column have the same number of 
		monitor nodes. The three boxes in each plot correspond to values $i_{\rm max} = 
		2, 3, 4$ used for inference.}
		\label{fig:f1-100000}
	\end{figure}

	\begin{figure}
		\centering
		{\large F1 Scores of $\mathbf{\hat R}$ from $N = 50,000$}
		\includegraphics[width=\linewidth]{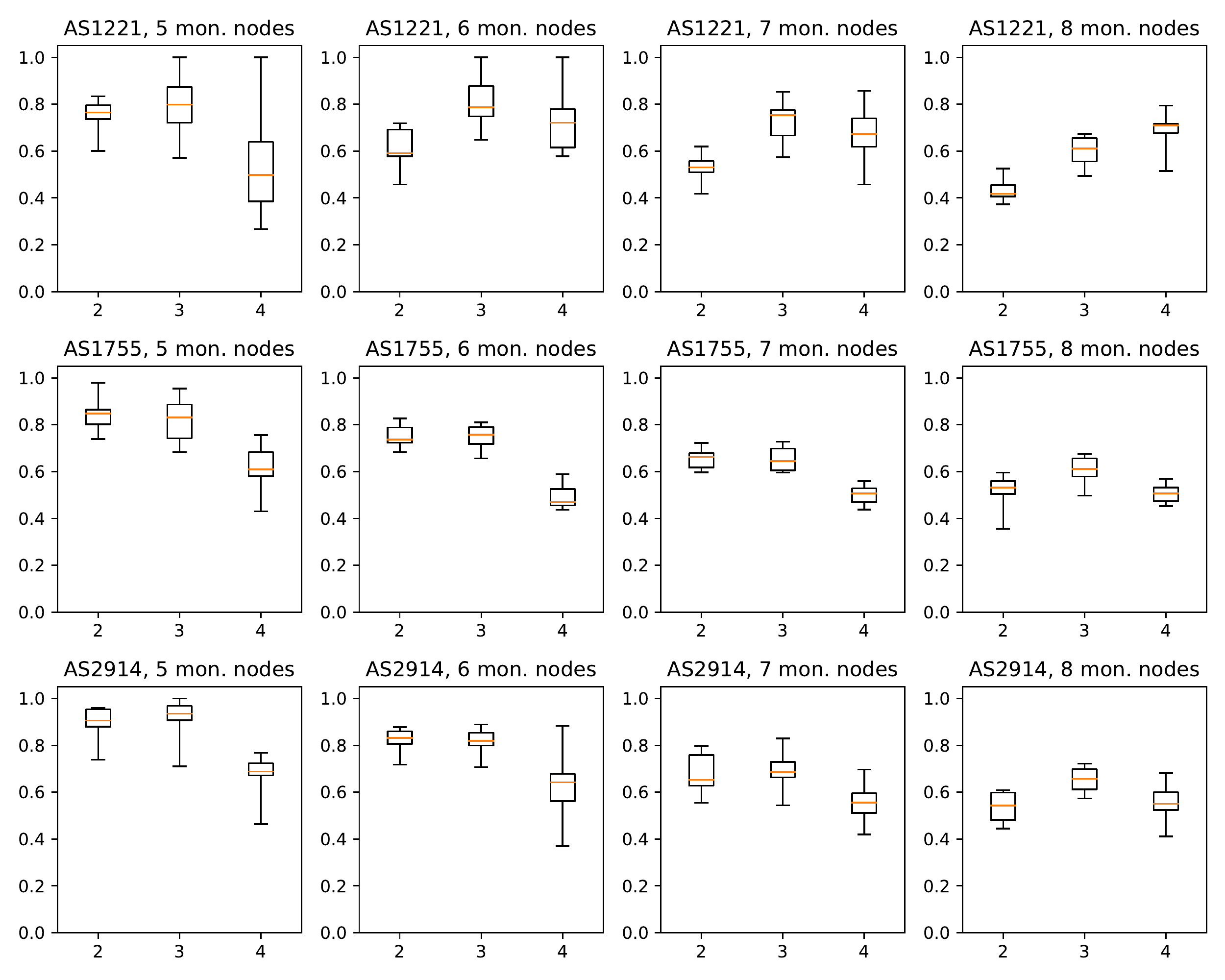}
		\caption{Distributions of F1 scores of the routing matrix estimate for the 120 
		case studies, based on a sample of size 50,000. Plots in each row are based on 
		the same underlying network, and plots in the same column have the same number of 
		monitor nodes. The three boxes in each plot correspond to values $i_{\rm max} = 
		2, 3, 4$ used for inference.}
		\label{fig:f1-50000}
	\end{figure}

	\begin{figure}
		\centering
		{\large F1 Scores of $\mathbf{\hat R}$ from $N = 10,000$}
		\includegraphics[width=\linewidth]{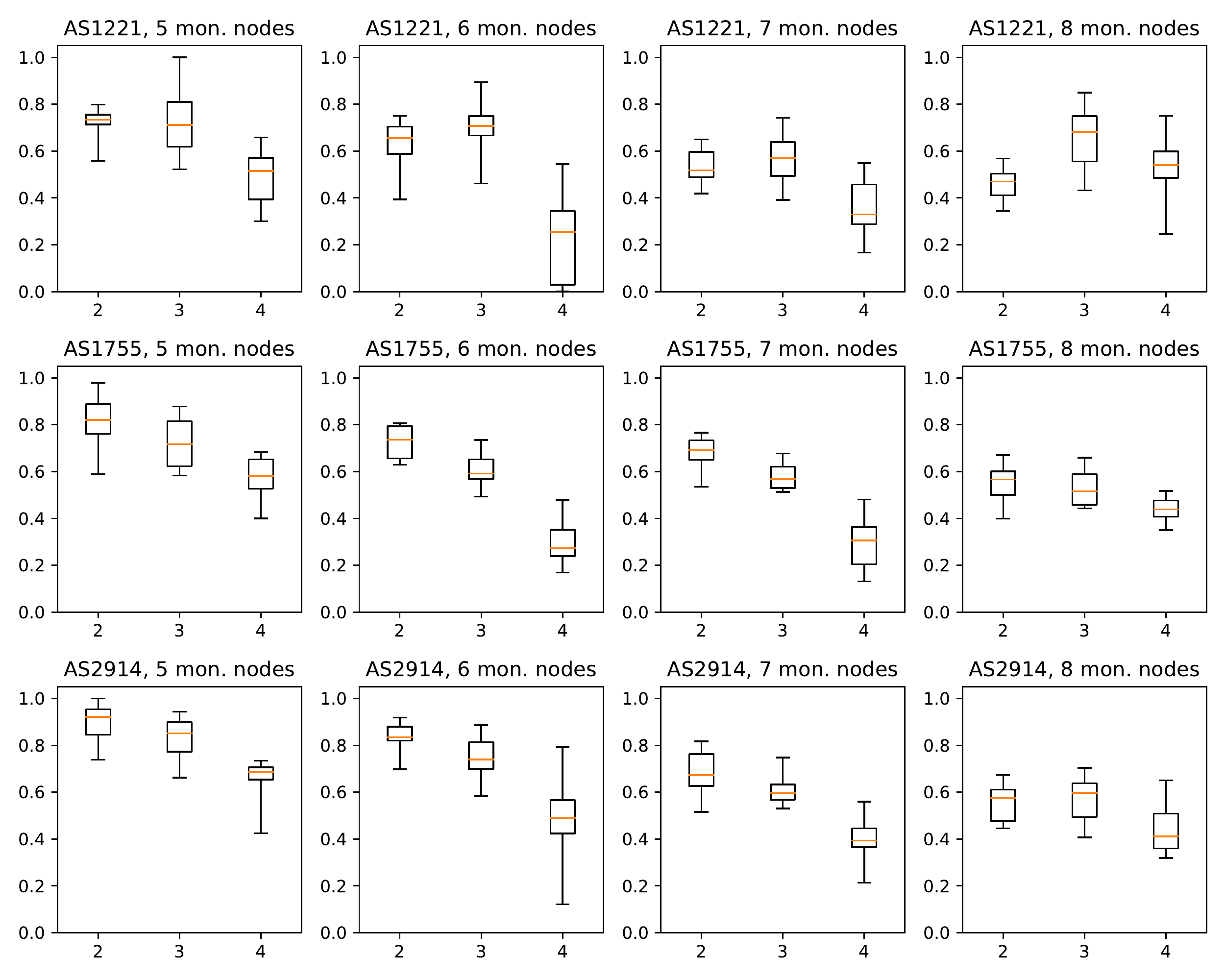}
		\caption{Distributions of F1 scores of the routing matrix estimate for the 120 
		case studies, based on a sample of size 10,000. Plots in each row are based on 
		the same underlying network, and plots in the same column have the same number of 
		monitor nodes. The three boxes in each plot correspond to values $i_{\rm max} = 
		2, 3, 4$ used for inference.}
		\label{fig:f1-10000}
	\end{figure}

	\section{Proofs from Section III}
	
	This section contains proofs from lemmas in Section III of the main manuscript. For 
	convenience, the lemma statements are reproduced as well.
	
	\subsection{Estimation Stage}
		
	\begin{lemma}[Properties of the Estimation Stage] 
		The following are true:
		\begin{enumerate}
			\item \label{item:rep-midx-count}
			Let $\rev{P \subseteq \monpaths}$. If $\rev{i} \ge |P|$, there are 
			$\binom{\rev{i} - 1}{|P| - 1}$ \rev{$i$th-order} representative multi-indices 
			of $P$.
			\item \label{item:f}
			For all $\rev{i \in \mathbb Z_{> 0}}$, the common cumulant $f_{\rev{i}}: 
			2^{\rev{\monpaths}} \to \mathbb R$ satisfies 
			\[
				f_n(P) = \sum_{\ell \in C(P)} \kappa_n(U_\ell), \qquad \forall P 
				\subseteq \monpaths
			\]
			\item \label{item:mia-dist-estimation-proof}
			Statement (ii) of Theorem 1 is true, i.e., Algorithm 1 correctly computes the 
			common cumulant vector for order $\rev{i} = n$.
		\end{enumerate} 
	\end{lemma}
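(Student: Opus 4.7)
My proof plan is to handle the three items in order, since (iii) leverages (i) and (ii).

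For part (i), I would argue combinatorially. An $i$th-order representative multi-index of $P$ is (by the definition used in the main manuscript) a length-$i$ multi-index over the elements of $P$ in which every element of $P$ appears at least once. Such a multi-index is determined by the vector of multiplicities $(m_p)_{p\in P}$ with $m_p\ge 1$ and $\sum_{p\in P} m_p = i$. The number of such positive-integer compositions of $i$ into $|P|$ parts is the classical stars-and-bars count $\binom{i-1}{|P|-1}$, giving the claim.

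For part (ii), I would invoke the multilinearity of the joint cumulant together with the assumption (from the model) that distinct link delays are independent. Writing each monitor path delay as $Y_p = \sum_{\ell \in p} U_\ell$, the multivariate cumulant $\kappa_i(Y_{p_1},\ldots,Y_{p_i})$ expands by multilinearity into a sum of terms $\kappa_i(U_{\ell_1},\ldots,U_{\ell_i})$ ranging over all choices of $\ell_j \in p_j$. By independence of links, any such term vanishes unless $\ell_1=\cdots=\ell_i=:\ell$, and the nonvanishing terms contribute $\kappa_i(U_\ell)$ whenever $\ell$ is simultaneously an element of every $p_j$, i.e., when $\ell\in C(P)$. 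Summing produces the stated identity, $f_i(P)=\sum_{\ell\in C(P)}\kappa_i(U_\ell)$, which in particular gives the $i=n$ case needed for (iii).

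For part (iii), I would combine (i) and (ii) with the structure of Algorithm 1. The algorithm, when run at order $n$, evaluates $k$-statistics indexed by length-$n$ multi-indices and aggregates them into the entries of the common cumulant vector. The key step is a double-counting identity: for each $P\subseteq \monpaths$ with $|P|\le n$, the length-$n$ multi-indices whose support equals $P$ are exactly the representative multi-indices counted in (i), and by unbiasedness of the $k$-statistic each one estimates the joint cumulant $\kappa_n(Y_{p_1},\ldots,Y_{p_n})$, which by part (ii) equals $\sum_{\ell\in C(P)}\kappa_n(U_\ell)=f_n(P)$. I would then verify that Algorithm 1's aggregation normalizes by $\binom{n-1}{|P|-1}$ (consistent with item (i)) and thus returns exactly $f_n(P)$ on every path set; the entries for $|P|>n$ are vacuously zero since no length-$n$ multi-index can have support larger than $n$.

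The main obstacle is part (iii): the argument hinges on carefully matching the indexing convention of Algorithm 1 to the representative-multi-index count in (i), and on confirming that the algorithm's use of the $k$-statistic is unbiased for the joint cumulant under the model's i.i.d.\ sampling assumption. Parts (i) and (ii) are essentially bookkeeping and a standard application of cumulant multilinearity, respectively.
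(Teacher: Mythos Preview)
Parts (i) and (ii) of your proposal are correct and match the paper's approach: stars-and-bars for (i), and multilinearity plus link independence for (ii). The paper phrases (ii) via the routing-matrix representation $\mathbf R^{(j)}\mathbf U$ rather than $Y_p=\sum_{\ell\in p}U_\ell$, but the computation is the same.

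Your part (iii), however, misreads what is being claimed. Theorem~1(ii) asserts that Algorithm~1 computes the \emph{exact} common cumulant vector; it is a statement that the algorithm's output agrees with Definition~3, not a statistical claim about estimation from samples. The paper's proof is essentially one line: the estimation stage of Algorithm~1 defines $f_n$ precisely according to Definition~3, so by line~6 it is the common cumulant vector, and part~(ii) of this lemma then supplies the closed-form expression. There are no $k$-statistics, no unbiasedness argument, no i.i.d.\ sampling assumption, and no averaging or normalization by $\binom{n-1}{|P|-1}$ in play. In particular, part~(i) is not used at all in the proof of (iii); the count of representative multi-indices is a standalone fact. The ``main obstacle'' you identify is an artifact of conflating Algorithm~1---which operates on population cumulants---with the later sample-based procedures of Section~V that estimate $f_i$ via $k$-statistics. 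Strip out the statistical machinery from your (iii) and it reduces to the triviality the paper records.
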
 
	
	\begin{proof}
		To prove \ref{item:rep-midx-count}, we will count the number of ways that 
		$\rev{i}$ ``counts'' of multiplicity can be assigned to the support of a 
		representative 
		multi-index. Each element of $P$ contains at least one count, and we are free to 
		\rev{distribute} the remaining $\rev{i} - |P|$ counts arbitrarily across the 
		elements 
		of $P$. Thus, there are $\multiset{|P|}{\rev{i} - |P|}$ ways to distribute the 
		remaining 
		counts, which is equivalent to $\binom{\rev{i} - 1}{|P| - 1}$.  
		
		To prove \ref{item:f}, let $\alpha$ be some $\rev{i}$th-order representative 
		multi-index of $P$. Using the independence of $U_\ell$ and the multilinearity of 
		multivariate cumulants, we have
		\begin{align*}
			f_{\rev{i}}(P) &= \kappa \left(
			\underbrace{ \mathbf R^{(1)} \mathbf U, \dots, \mathbf R^{(1)} \mathbf 
				U}_{\alpha(1) 
				~\text{times}}, 
			\dots, \underbrace{ \mathbf R^{(n)} \mathbf U, \dots, \mathbf R^{(n)} 
				\mathbf U}_{\alpha(n) 
				~\text{times}}
			\right) \\ 
			&= \sum_{\ell = 1}^m \left( r_{1 \ell}^{\alpha(1)} \cdots r_{n 
				\ell}^{\alpha(n)} \right) \kappa\left(
			\underbrace{U_\ell, \dots, U_\ell}_{\alpha(1) + \cdots + \alpha(n) 
				~\text{times}}
			\right) \\
			&= \sum_{\ell = 1}^m \left( \prod_{\rev{j} \in \supp(\alpha)} r_{\rev{j} 
				\ell} \right) 
			\kappa_{\rev{i}}(U_\ell)
		\end{align*} 
		where $\mathbf R^{(\rev{j})}$ denotes the $\rev{j}$th row of $\mathbf R$. Since 
		$\prod_{\rev{j} \in 
			\supp(\alpha)} r_{\rev{j} \ell} = 1$ if $\ell \in \rev{C}(P)$ and is zero 
		otherwise, we obtain
		\[
			f_i(P) = \sum_{\ell \in C(P)} \kappa_n(U_\ell), \qquad \forall P 
			\subseteq \monpaths
		\]
		
		To prove \ref{item:mia-dist-estimation-proof}, observe that the estimation stage 
		of Algorithm 1 defines the map $f_n$ precisely according to 
		Definition 3, so that $f_n$ is the common cumulant vector by line 
		6 of the algorithm. \rev{Then statement \ref{item:mia-dist-estimation-proof} 
		follows} by statement \ref{item:f} of this lemma. 
	\end{proof}

	\subsection{Inversion Stage}

	\begin{lemma}[Properties of the Inversion Stage] 
		Let $f_{\rev{i}}$ be the common cumulant vector, and let $g_{\rev{i}}: 
		2^{\rev{\monpaths}} \to \mathbb R$. The following three statements are equivalent:
		\begin{enumerate}
			\item \label{item:mia-dist-inversion-exact}
			$g_{\rev{i}}$ is the exact cumulant vector.
			\item \label{item:mia-dist-inversion-pre}
			$f_{\rev{i}}$ and $g_{\rev{i}}$ satisfy
			\begin{equation} \label{eq:pre-inversion}
				f_{\rev{i}}(P) = \sum_{Q \supseteq P} g_{\rev{i}}(Q), \qquad \forall 
				\rev{P \subseteq \monpaths} \tag{B.1}
			\end{equation}
			\item \label{item:mia-dist-inversion-post}
			$f_{\rev{i}}$ and $g_{\rev{i}}$ satisfy
			\begin{equation}
				g_{\rev{i}}(P) = \sum_{Q \supseteq P} (-1)^{|Q| - |P|} 
				f_{\rev{i}}(Q), \qquad \forall \rev{P \subseteq \monpaths} \tag{B.2}
				\label{eq:inversion}
			\end{equation}
		\end{enumerate}
		Furthermore, statement (iii) of Theorem 1 is true, i.e., the Algorithm 1 
		correctly computes the exact cumulant vector.
	\end{lemma}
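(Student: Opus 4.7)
The plan is to establish the three equivalences in a cycle (or equivalently, prove (i) $\Leftrightarrow$ (ii) and (ii) $\Leftrightarrow$ (iii) separately), and then deduce the furthermore clause by observing that Algorithm~1 literally implements formula~\eqref{eq:inversion}.

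For (i) $\Leftrightarrow$ (ii), I will rely on the characterization of $f_i$ already proved in the previous lemma, namely $f_i(P) = \sum_{\ell \in C(P)} \kappa_i(U_\ell)$. The exact cumulant $g_i(Q)$ is defined (in the main manuscript) as the sum of $\kappa_i(U_\ell)$ over links $\ell$ traversed by \emph{exactly} the monitor paths in $Q$; call that set $E(Q)$. The key combinatorial observation is that $C(P) = \bigsqcup_{Q \supseteq P} E(Q)$: a link is traversed by all paths in $P$ iff its exact path-set contains $P$, and each link has a unique exact path-set. Substituting this disjoint-union decomposition directly gives $f_i(P) = \sum_{Q \supseteq P} g_i(Q)$. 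Conversely, if $g_i$ satisfies~\eqref{eq:pre-inversion} then uniqueness of Möbius inversion on the finite poset $(2^{\monpaths}, \supseteq)$ (see below) forces $g_i$ to agree with the exact cumulant vector.

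For (ii) $\Leftrightarrow$ (iii), I will apply the classical Möbius inversion formula on the Boolean lattice ordered by reverse inclusion. On this poset, the Möbius function is $\mu(P,Q) = (-1)^{|Q| - |P|}$ for $Q \supseteq P$, so $f_i(P) = \sum_{Q \supseteq P} g_i(Q)$ holds for every $P$ if and only if $g_i(P) = \sum_{Q \supseteq P} (-1)^{|Q|-|P|} f_i(Q)$ holds for every $P$. This is a one-line invocation of a standard result; no heavy lifting is needed once the poset structure is identified.

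Finally, for the furthermore claim, I will observe that the inversion stage of Algorithm~1 assigns $g_n(P) \gets \sum_{Q \supseteq P} (-1)^{|Q| - |P|} f_n(Q)$ for each $P \subseteq \monpaths$, which is precisely~\eqref{eq:inversion} with $i = n$. Since the previous lemma together with statement~\ref{item:mia-dist-estimation-proof} guarantees that $f_n$ computed in the estimation stage is the common cumulant vector, the equivalence (iii)~$\Rightarrow$~(i) just proved implies that the output $g_n$ is the exact cumulant vector, establishing statement (iii) of Theorem~1.

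I expect the main obstacle to be the bookkeeping for (i) $\Leftrightarrow$ (ii): one must carefully justify the disjointness of the $E(Q)$'s and that their union over $Q \supseteq P$ is exactly $C(P)$, appealing to the definitions of ``common'' versus ``exact'' support of a link from the main manuscript. The rest is either a direct citation of Möbius inversion or an inspection of the pseudocode.
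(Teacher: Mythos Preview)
Your proposal is correct and follows essentially the same route as the paper: both establish (ii)$\Leftrightarrow$(iii) by citing M\"obius inversion on the Boolean lattice, prove (i)$\Rightarrow$(ii) via the decomposition $C(P) = \bigcup_{Q \supseteq P} E(Q)$, and deduce the furthermore clause by inspecting the inversion stage of Algorithm~1. The one minor difference is that for (ii)$\Rightarrow$(i) you invoke uniqueness of the M\"obius inverse (since the exact cumulant vector also satisfies \eqref{eq:pre-inversion}, and the zeta transform is invertible), whereas the paper carries out an explicit strong induction on $|P|$ from the base case $P = \monpaths$; your argument is slightly more economical but equivalent in content.
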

	
	\begin{proof}
		We begin with the equivalence \ref{item:mia-dist-inversion-pre} $\iff$ 
		\ref{item:mia-dist-inversion-post}. This equivalence holds for \textit{any} 
		functions $f_{\rev{i}}, g_{\rev{i}}: 2^{\rev{\monpaths}} \to \mathbb R$, and it 
		follows from the 
		M\"obius inversion formula \rev{applied over $2^{\monpaths}$.} See, for 
		example, \cite[Theorem 5.1]{MA:07b}.
		
		To prove that \ref{item:mia-dist-inversion-exact} $\implies$
		\ref{item:mia-dist-inversion-pre}, we will first show that
		\begin{equation}
			\rev{C}(P) = \bigcup_{Q \supseteq P} \rev{E}(\rev{Q}) \tag{B.3}
			\label{eq:LM}
		\end{equation}
		Let $\ell \in \rev{C}(P)$, and examine the column of the routing matrix $\mathbf 
		R_\ell \in \{0, 1\}^n$. There is some $\rev{Q \subseteq \monpaths}$ for which the 
		characteristic vector satisfies $\rev{\chi(Q, \monpaths)} = \mathbf R_\ell$. It 
		follows that $\ell \in \rev{E}(Q)$. Now, because $\ell \in \rev{C}(P)$, it 
		follows that $r_{p \ell} = 1$ for all $p \in P$, so that $Q \supseteq P$. 
		Therefore $\ell \in \bigcup_{Q \supseteq P} 
		\rev{E}(Q)$. Next, let $\ell \in \bigcup_{Q \supseteq P} \rev{E}(Q)$, so that 
		$\ell \in \rev{E}(Q)$ for some $Q \supseteq P$. It is clear that $r_{p \ell} = 1$ 
		for all $p \in Q$, so the inclusion $Q \supseteq P$ implies that $\ell \in 
		\rev{C}(P)$. Now, if $g_{\rev{i}}$ is the exact cumulant 
		vector, we can (\rev{from Definition 5}) substitute \eqref{eq:LM} into 
		\begin{equation}
			g_n(P) = \sum_{\ell \in E(P)} \kappa_n(U_\ell), \qquad \forall P \subseteq 
			\monpaths
			\label{eq:g} \tag{B.4}
		\end{equation}
		obtaining
		\[
		\sum_{Q \supseteq P} g_{\rev{i}}(Q)
		= \sum_{Q \supseteq P} \sum_{\ell \in \rev{E}(Q)} 
		\kappa_{\rev{i}}(U_\ell) 
		= \sum_{\ell \in \rev{C}(P)} \kappa_{\rev{i}}(U_\ell) 
		= f_{\rev{i}}(P)
		\]
		The last step follows from Lemma \ref{lem:mia-dist-estimation} \ref{item:f}. 
		Hence \ref{item:mia-dist-inversion-exact} $\implies$ 
		\ref{item:mia-dist-inversion-pre}. 
		
		To prove that \ref{item:mia-dist-inversion-pre} $\implies$ 
		\ref{item:mia-dist-inversion-exact}, suppose that $f_{\rev{i}}$ and $g_{\rev{i}}$ 
		satisfy \eqref{eq:pre-inversion}. By \eqref{eq:LM},
		\begin{equation}
			\sum_{Q \supseteq P} g_{\rev{i}}(Q) 
			= \sum_{Q \supseteq P} \sum_{\ell \in \rev{E}(Q)} 
			\kappa_{\rev{i}}(U_\ell)
			\label{eq:recover-g} \tag{B.5}
		\end{equation}
		for all $\rev{P \subseteq \monpaths}$. We will use \eqref{eq:recover-g} to show 
		that $g_{\rev{i}}$ satisfies \eqref{eq:g} by strong induction over $|P|$. In the 
		$|P| = n$ base case, the only possible set is $P = \rev{\monpaths}$, for which 
		\eqref{eq:recover-g} reduces to $g_{\rev{i}}(\rev{\monpaths}) = \sum_{\ell \in 
			\rev{E}(\rev{\monpaths})} \kappa_{\rev{i}}(U_\ell)$. Now suppose that 
		\eqref{eq:g} holds for all $P$ with $|P| 
		\rev{\ge i}$ for some $\rev{j} \in [2, n]$. Let $\rev{P \subseteq \monpaths}$ 
		such that $|P| = \rev{j} - 1$, and observe that
		\[
		\sum_{Q \supseteq P} g_{\rev{i}}(Q) 
		= g_{\rev{i}}(P) + \sum_{Q \supset P} \sum_{\ell \in \rev{E}(Q)} 
		\kappa_{\rev{i}}(U_\ell)
		\]
		by the inductive hypothesis. Substituting this equation in to 
		\eqref{eq:recover-g} and simplifying, we obtain \eqref{eq:g}. Hence \eqref{eq:g} 
		holds for all $\rev{P \subseteq \monpaths}$, so \ref{item:mia-dist-inversion-pre} 
		$\implies$ \ref{item:mia-dist-inversion-exact}.
		
		To prove the final statement, note that the inversion stage of Algorithm 
		1 defines the map $g_n$ according to \eqref{eq:inversion}, where 
		$f_n$ is the common cumulant vector (per Lemma 4 
		\ref{item:mia-dist-estimation-proof}), by line 10. It follows from 
		the equivalence proven in this lemma that $g_n$ is the exact cumulant vector.
	\end{proof}

	\subsection{Reconstruction Stage}
	
		\begin{lemma}[Properties of the Reconstruction Stage] 
		Let $g_n: 2^{\rev{\monpaths}} \to \mathbb R$ be the exact cumulant vector. For 
		each $\rev{P \subseteq \monpaths}$, let $\rev{\chi(P, \monpaths)} \in \{0, 
		1\}^n$ be the characteristic vector of $P$ in $\rev{\monpaths}$. The following 
		are true:
		\begin{enumerate}
			\item
			If $P \in \supp(g_n)$, then 
			$\rev{\chi(P, \monpaths)}$ must be a column of the routing matrix. Under 
			Assumptions 1 and 2, the converse is also true. 
			\item
			Statement (iv) of Theorem 1 is true. 
		\end{enumerate}
	\end{lemma}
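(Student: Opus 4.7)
The plan is to prove part (i) in two directions using Definition 5, and then bootstrap part (ii) from part (i) together with Lemma 5.

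For the forward direction of (i), I would start from the definition of the exact cumulant vector, namely $g_n(P) = \sum_{\ell \in E(P)} \kappa_n(U_\ell)$ (Equation B.4). If $P \in \supp(g_n)$, then the sum is nonzero, so $E(P) \ne \emptyset$. But by the definition of $E(P)$, any $\ell \in E(P)$ satisfies $\mathbf R_\ell = \chi(P, \monpaths)$, so $\chi(P, \monpaths)$ is a column of $\mathbf R$. This direction requires no assumptions beyond the basic model.

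For the converse, suppose $\chi(P, \monpaths)$ appears as a column of $\mathbf R$, so $E(P) \ne \emptyset$. The quantity $g_n(P) = \sum_{\ell \in E(P)} \kappa_n(U_\ell)$ could in principle vanish through cancellation among the summands; this is where Assumptions 1 and 2 must enter. Under the working hypothesis that one of these assumptions forces the link cumulants $\kappa_n(U_\ell)$ to have a consistent (e.g.\ strictly positive) sign, and the other rules out degenerate link variables, the sum over any nonempty $E(P)$ is necessarily nonzero. Hence $P \in \supp(g_n)$. I expect this cancellation-exclusion step to be the main conceptual obstacle; the rest is definition-chasing.

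For part (ii), I would invoke Lemma 5\ref{item:mia-dist-inversion-post}, which guarantees that the Inversion Stage of Algorithm 1 produces the true exact cumulant vector $g_n$. The Reconstruction Stage then outputs, as the columns of $\hat{\mathbf R}$, the characteristic vectors $\chi(P, \monpaths)$ for each $P \in \supp(g_n)$. Part (i) of the present lemma gives the bijection between $\supp(g_n)$ and the set of columns of $\mathbf R$ (viewed as $\{0,1\}^n$-vectors), so the multiset of columns of $\hat{\mathbf R}$ equals the multiset of distinct columns of $\mathbf R$. Statement (iv) of Theorem 1, which I assume identifies $\mathbf R$ up to column permutation and/or merger of duplicate links, then follows immediately.
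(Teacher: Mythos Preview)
Your forward direction of (i) and your outline of (ii) match the paper's proof essentially verbatim: the paper also reads off $E(P)\ne\emptyset$ from \eqref{eq:g} for the forward direction, and for (ii) it invokes the Inversion-Stage lemma to conclude that the $g_n$ produced by Algorithm~1 is the exact cumulant vector, then applies (i).

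The gap is in your converse argument. You correctly isolate the obstacle---the possibility that $\sum_{\ell\in E(P)}\kappa_n(U_\ell)$ vanishes by cancellation---but you guess the wrong mechanism for ruling it out. Assumption~1 is \emph{not} a sign condition on the link cumulants; it is a structural condition on the routing matrix ensuring that distinct links are used by distinct sets of monitor paths, i.e., that $|E(P)|\le 1$ for every $P$. With this in hand there is nothing to cancel: if $\chi(P,\monpaths)$ is a column of $\mathbf R$ then $E(P)$ contains exactly one link $\ell$, and Assumption~2 gives $\kappa_n(U_\ell)\ne 0$, so $g_n(P)\ne 0$. Your ``consistent sign'' hypothesis would also suffice to block cancellation, but it is not what the paper assumes, and an argument built on it would not go through under the actual Assumptions~1 and~2.
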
 
	
	\begin{proof}
		If $g_n(P) \ne 0$, it is clear from \eqref{eq:g} that $\rev{E}(P)$ is non-empty, 
		which implies that some column of the routing matrix $\mathbf R_\ell$ satisfies 
		$\rev{\chi(P, \monpaths)} = \mathbf R_\ell$. Now suppose that Assumptions 
		1 and 2 are true. By Assumption 1, the set $\rev{E}(P)$ is 
		either empty or contains a single element. By Assumption 2, if 
		$\rev{E}(P)$ contains a single element $\ell$, it must satisfy $\kappa_n(U_\ell) 
		\ne 0$. Therefore, if $g_n(P) = 0$, under these two assumptions, it follows that 
		$\rev{E}(P)$ is empty. Hence $\rev{\chi(P, \monpaths)}$ is not a column of the 
		routing matrix.
		
		Per Lemma 6, the vector $g_n$ in Algorithm 
		1 is the exact cumulant vector by line 10, so we can apply the above result to 
		$g_n$ in the reconstruction stage of the algorithm, yielding statement (iv) of 
		Theorem 1.
	\end{proof}

	\section{Proofs from Section V}
	
	This section contains proofs from statements in Section V of the main manuscript. For 
	convenience, the statements are reproduced as well.
	
	\begin{theorem}[Properties of Algorithm 2]
		Let $\mathcal B \subseteq 2^{\monpaths}$ be a collection of path sets, let $i \in 
		\mathbb{Z}_{> 0}$ be a cumulant order, and let $t: \mathbb{Z}_{> 0} \times 
		\mathbb{Z}_{> 0} \to \mathbb{Z}_{> 0}$ be a 
		threshold function. The following are true:
		\begin{enumerate}
			\item Algorithm 2 evaluates $\texttt{IsNonzero}(f_i(P))$ 
			$O(n^i)$ times and terminates after $O(2^q)$ iterations of the while loop, 
			where $q$ is the size of the largest set in $\mathcal B$. The algorithm 
			returns a collection of path sets $\mathcal B' \subseteq 2^{\monpaths}$. 
			\item The support estimate of $\mathcal B'$ is a subset of the support 
			estimate of $\mathcal B$. 
			\item For any set $P$ in the support estimate of $\mathcal B$, $P$ is also in 
			the support estimate of $\mathcal B$ if either $|P| < i$, or if there is a 
			superset $Q \supseteq P$ in the support estimate of $\mathcal B$ for which at 
			least $t(|Q|, i)$ size-$i$ subsets $R \subseteq Q$ satisfy 
			$\texttt{Nonzero}(f_i(R))$. 
		\end{enumerate}
	\end{theorem}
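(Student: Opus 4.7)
I would address the three parts in order, working directly from the structure of Algorithm 2 as described in Section V.A.

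For part (i), the complexity and termination claims would follow from two observations. First, there are only $\binom{n}{i} = O(n^i)$ distinct size-$i$ subsets of $\monpaths$ to which $\texttt{IsNonzero}(f_i(\cdot))$ could possibly be applied, and the algorithm should memoize these calls so that each subset is tested at most once. Second, every path set ever examined by the while loop is a subset of some element of the working bounding topology; since each element has size at most $q$ and therefore at most $2^q$ subsets, the loop visits at most $|\mathcal B| \cdot 2^q = O(2^q)$ candidates before exhausting the search space. That $\mathcal B' \subseteq 2^{\monpaths}$ is immediate from the update rule.

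For part (ii), my plan is induction on the iteration count of the while loop, with the invariant that the support estimate shrinks monotonically. The key structural point is that a single iteration of Algorithm 2 can only remove path sets from (or shrink the maximal elements of) the current collection; it never introduces a new set that was not already dominated by an existing element. Consequently the downward closure, i.e.\ the support estimate, can only contract, and the invariant propagates from $\mathcal B$ to the final $\mathcal B'$.

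For part (iii), the substantive content, I would trace which sets $P$ in the support estimate of $\mathcal B$ must survive into the support estimate of $\mathcal B'$. If $|P| < i$, then $P$ is smaller than any set that the oracle $\texttt{Nonzero}(f_i(\cdot))$ can examine, and the removal rule of Algorithm 2 is predicated on size-$i$ subset tests applied to supersets of size at least $i$; hence no operation of the algorithm can prune $P$. If $|P| \ge i$ and there is a superset $Q \supseteq P$ with $Q$ in the support estimate of $\mathcal B$ for which at least $t(|Q|, i)$ of its size-$i$ subsets $R \subseteq Q$ satisfy $\texttt{Nonzero}(f_i(R))$, then the threshold condition associated with $Q$ is met, so the iteration that processes $Q$ retains it in the bounding topology; since $P \subseteq Q$, the set $P$ remains in the support estimate of $\mathcal B'$. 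The main obstacle I anticipate is exactly this last argument, because the threshold rule can interact with the order in which supersets are visited: $P$ may lie below several candidate supersets, some passing and some failing the threshold test. To handle this cleanly I would process candidates in decreasing order of cardinality and argue that once a valid witness $Q$ for $P$ has been identified, no subsequent iteration can delete $P$, since the removal criterion only fires when \emph{every} currently-retained superset of $P$ fails its threshold.
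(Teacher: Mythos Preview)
Your treatment of (i) is essentially the paper's: count the $\binom{n}{i}=O(n^i)$ size-$i$ subsets for the oracle calls, and bound the while loop by the number of subsets of the largest $B\in\mathcal B$. The paper adds one detail you omit, namely that the auxiliary set $\mathcal X$ of already-processed sets is what guarantees each candidate is visited at most once; without it the $O(2^q)$ bound does not follow. For (ii) the paper does not use induction on iterations; it argues directly that anything placed in $\mathcal B'$ came from the queue, and everything ever enqueued is a subset of some original element of $\mathcal B$, so the downward closure can only shrink. Your inductive invariant would of course yield the same conclusion.

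Part (iii) is where you and the paper diverge. The paper argues the \emph{converse} direction: it starts from $P$ in the support estimate of $\mathcal B'$, picks $B'\in\mathcal B'$ with $P\subseteq B'$, and observes that $B'$ was added to $\mathcal B'$ only if either $|B'|<i$ (hence $|P|<i$) or $B'$ passes the threshold test (hence $Q=B'$ is the required witness, and $B'$ lies in the support estimate of $\mathcal B$ by (ii)). That is a one-line argument. You instead read (iii) as a sufficiency statement and try to show that the stated conditions force $P$ to survive. That direction is harder and your sketch has gaps. For the case $|P|<i$, it is not enough to say ``no operation can prune $P$''; $P$ is never processed directly, and its survival depends on some superset of $P$ eventually landing in $\mathcal B'$. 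What is actually needed is a descent argument: from any $B\in\mathcal B$ with $P\subseteq B$, if $B$ fails the threshold then some size-$(|B|-1)$ subset of $B$ still containing $P$ is enqueued, and this chain terminates either at a set passing the threshold or at a set of size $<i$. The same descent is needed in the second case to guarantee that the algorithm actually \emph{reaches} the witness $Q$ (which is only assumed to lie in the support estimate, not in $\mathcal B$ itself), a point your sketch does not address. Your proposed fix of processing in decreasing cardinality and invoking ``every currently-retained superset fails'' does not match how Algorithm~2 operates and would not close the gap.
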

	
	\begin{proof}
		There are at most $\binom{n}{i} = O(n^i)$ size-$i$ sets, so 
		$\texttt{Nonzero}(f_i(P))$ is evaluated $O(n^i)$ times to compute $\mathcal P$. 
		The worst-case runtime occurs when $|\{P \in \mathcal P : P \subseteq B\}| < 
		t(|B|, i)$ for each iteration of the while loop, in which case the variable $B$ 
		takes on the value of every subset (with size at least $i$) of every original set 
		in $\mathcal B$ precisely once (because the collection $\mathcal X$ tracks which 
		sets have already been processed, preventing redundant iterations of the while 
		loop). Thus, there are $O(2^q)$ iterations of the while loop. 
		
		To prove (ii), observe that every set added to $\mathcal B'$ was originally in 
		the queue $\mathcal B$, and that sets in the queue are either from the original 
		collection $\mathcal B$, or they are subsets of a previous element in the queue. 
		Hence every set in $\mathcal B'$ is a subset of a set in the original $\mathcal 
		B$, so the support estimate of $\mathcal B'$ is a subset of the original support 
		estimate. To prove (iii), suppose that $P$ is in the support estimate of 
		$\mathcal B'$, so that some $B' \in \mathcal B'$ contains $P$. Sets are only 
		added to $\mathcal B'$ on line 5, and the set must satisfy either 
		$|B'| < i$ or $|\{P' \in \mathcal P : P' \subseteq B'\}| \ge t(|B'|, i)$, i.e., 
		(b) is satisfied with $Q = B'$.  
	\end{proof}

\begin{lemma}[Elimination of Large, Non-Maximal Path Sets]
	Let $\mathcal B \subseteq 2^{\monpaths}$ be a collection of path sets, and let $s 
	\in \mathbb{Z}_{> 0}$. Assume that the following are true:
	\begin{enumerate}
		\item Every set in $\mathcal B$ is maximal (i.e., no $B, B' \in \mathcal B$ 
		exist such that $B \subset B'$), \label{cond:maximal}
		\item $f_i(P) \ne 0$ and $g_i(P) \ne 0$ only if $P$ is in the support 
		estimate of $\mathcal B$, and
		\item $g_i(P) = 0$ for all $P \subseteq \monpaths$ with $|P| > s$ and $P 
		\notin \mathcal B$. 
		\label{cond:heuristic}
	\end{enumerate}
	Then for every $P$ in the support estimate of $\mathcal B$ such that $|P| \le s$, 
	\begin{align*}
		\begin{split}
			g_i(P) &= \sum_{Q \supseteq P : |Q| \le s} (-1)^{|Q| - |P|} f_i(Q) \\
			&\qquad - \sum_{B \in \mathcal B : B \supseteq P} (-1)^{s - |P|} \binom{|B| - 
			|P| 
				- 1}{s - |P|} f_i(B)  
		\end{split} 
	\end{align*}
\end{lemma}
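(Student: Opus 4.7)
The plan is to start from the Möbius inversion formula $g_i(P) = \sum_{Q \supseteq P}(-1)^{|Q|-|P|} f_i(Q)$ provided by Lemma 6, split the sum at $|Q|=s$, and recast the large-$|Q|$ tail as a sum ranging only over the maximal sets $B \in \mathcal B$. Writing
\[
g_i(P) = \sum_{\substack{Q \supseteq P \\ |Q| \le s}} (-1)^{|Q|-|P|} f_i(Q) + \sum_{\substack{Q \supseteq P \\ |Q| > s}} (-1)^{|Q|-|P|} f_i(Q),
\]
the first summand already matches the corresponding piece of the target identity, so the problem reduces to showing
\[
\sum_{\substack{Q \supseteq P \\ |Q| > s}} (-1)^{|Q|-|P|} f_i(Q) = -\sum_{\substack{B \in \mathcal B \\ B \supseteq P}}(-1)^{s-|P|}\binom{|B|-|P|-1}{s-|P|}\, f_i(B).
\]

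To handle the tail, for each $Q$ with $|Q|>s$ I apply the dual Möbius identity $f_i(Q) = \sum_{R \supseteq Q} g_i(R)$ (also from Lemma 6). Since $|R|\ge|Q|>s$, condition (iii) forces $g_i(R)=0$ unless $R \in \mathcal B$, so $f_i(Q) = \sum_{B \in \mathcal B,\, B \supseteq Q} g_i(B)$. Substituting and exchanging the order of summation yields
\[
\sum_{\substack{Q \supseteq P \\ |Q|>s}}(-1)^{|Q|-|P|} f_i(Q) = \sum_{\substack{B \in \mathcal B \\ B \supseteq P}} g_i(B) \sum_{\substack{P \subseteq Q \subseteq B \\ |Q|>s}}(-1)^{|Q|-|P|}.
\]
Parameterizing the inner sum by $k = |Q|-|P|$ yields $\sum_{k=s-|P|+1}^{|B|-|P|}(-1)^k\binom{|B|-|P|}{k}$, which I would evaluate using the standard identities $\sum_{k=0}^{n}(-1)^k\binom{n}{k} = 0$ and $\sum_{k=0}^{m}(-1)^k\binom{n}{k}=(-1)^m\binom{n-1}{m}$; the result is $(-1)^{s-|P|+1}\binom{|B|-|P|-1}{s-|P|}$, which also vanishes (by the combinatorial convention) whenever $|B|\le s$, consistent with the target.

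To finish, I need $g_i(B) = f_i(B)$ for each $B \in \mathcal B$. Expanding $f_i(B) = g_i(B) + \sum_{R \supsetneq B} g_i(R)$: any $R \supsetneq B$ with $|R|>s$ must lie in $\mathcal B$ by (iii), contradicting maximality (i); any such $R$ with $|R|\le s$ must, by (ii), be contained in some $B' \in \mathcal B$, forcing $B \subsetneq B'$ and again contradicting (i). Hence the tail vanishes, $g_i(B)=f_i(B)$, and substituting this equality into the expression above produces the stated formula. The main technical obstacle is the alternating-binomial evaluation that cleanly packages the inner sum as $\binom{|B|-|P|-1}{s-|P|}$; the $g_i=f_i$ step on maximal sets, while requiring all three hypotheses, is a routine case split.
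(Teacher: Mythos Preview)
Your proposal is correct and follows essentially the same route as the paper: split the M\"obius inversion at $|Q|=s$, rewrite each large-$Q$ term via $f_i(Q)=\sum_{B\in\mathcal B,\,B\supseteq Q}g_i(B)$ using condition~(iii), swap the order of summation, evaluate the resulting alternating binomial sum, and finish by noting $g_i(B)=f_i(B)$ for maximal $B$. Your justification of $g_i(B)=f_i(B)$ is slightly more explicit than the paper's one-line appeal to maximality, but the argument is the same in substance.
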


\begin{proof}
	Let $P$ be in the support estimate of $\mathcal B$ with $|P| \le s$. We can split 
	the M\"obius inversion formula into two parts:
	\begin{align*}
		g_i(P) &= \sum_{Q \supseteq P : |Q| \le s} (-1)^{|Q| - |P|} f_i(Q) \\
		&\qquad + \sum_{R \supseteq P : |R| > s} (-1)^{|R| - |P|} f_i(R)
	\end{align*}
	Focus on the second sum, and let $R \supseteq P$ such that $|R| > s$. Condition 
	\ref{cond:heuristic} implies that
	\[
	f_i(R) = \sum_{Q \supseteq R} g_i(Q) = \sum_{B \in \mathcal B : B \supseteq 
		R} g_i(B)
	\]
	so we can simplify the second sum by
	\begin{align*}
		&\sum_{R \supseteq P : |R| > s} (-1)^{|R| - |P|} f_i(R) \\
		&\qquad = \sum_{R \supseteq P : |R| > s} (-1)^{|R| - |P|} \sum_{B \in \mathcal B: 
			B 
			\supseteq R} g_i(B) \\
		&\qquad = (-1)^{-|P|} \sum_{B \in \mathcal B} g_i(B) \sum_{B 
			\supseteq R \supseteq P : |R| > s} (-1)^{|R|} \\
		&\qquad = (-1)^{-|P|} \sum_{B \in \mathcal B} g_i(B) 
		\sum_{j=s+1}^{|B|} (-1)^j \binom{|B| - |P|}{j - |P|} \\
		&\qquad = \sum_{B \in \mathcal B} (-1)^{s + 1 - |P|} \binom{|B| - 
			|P| - 1}{s - |P|} g_i(B)
	\end{align*}
	Finally, observe that $g_i(B) = f_i(B)$, since there are no proper supsersets of 
	$B$ in the support estimate of $\mathcal B$ (due to condition 
	\ref{cond:maximal}).
\end{proof}

	\bibliographystyle{IEEEtran}
	\bibliography{alias,New,Main,FB}